\newtheorem{theorem}{Theorem}[section]
\newtheorem{lemma}[theorem]{Lemma}
\newtheorem{proposition}[theorem]{Proposition}
\newtheorem{corollary}[theorem]{Corollary}
\newtheorem{remark}[theorem]{Remark}
\newtheorem{definition}[theorem]{Definition}
\newtheorem{assumption}[]{Assumption}
\newtheorem{assumption-H}[]{Assumption-H}
\newtheorem{notations}[]{Notations}
\newtheorem{Keywords}[theorem]{Keywords}
\newcommand\R{\mathbb{R}}
\numberwithin{equation}{section} %%%%% numéroter les equations suivant les sections
\begin{document}

\bibliographystyle{plain}

\title{Non-linear filtering and optimal investment under partial information for stochastic volatility models}

\date{\today}
%%%\date{}

\author{Dalia Ibrahim; Fr\'{e}d\'{e}ric Abergel\thanks{Ecole Centrale Paris, Laboratoire de Math\'ematiques Appliqu\'es aux Syst\`emes, Grande Voie des Vignes, 92290 Ch\^{a}tenay Malabry, France};}

\maketitle

\begin{abstract}
This paper studies the question of filtering and maximizing terminal wealth from expected utility in a partially information stochastic volatility models. The special features is that the only information available to the investor is the one generated by the asset prices, and the unobservable processes will be modeled by a stochastic differential equations. Using the change of measure techniques, the partial observation  context can be transformed into a full information context such that coefficients depend only on past history of observed prices (filters processes). Adapting the stochastic non-linear filtering, we show that under some assumptions on the model coefficients, the estimation of the filters depend on a priori models for the trend and the stochastic volatility. Moreover, these filters  satisfy a stochastic partial differential equations named "Kushner-Stratonovich equations". Using the martingale duality approach in this partially observed incomplete model, we can characterize the value function and the optimal portfolio. The main result here is that the dual value function associated to the martingale approach can be expressed, via the dynamic programming approach, in terms of the solution to a semilinear partial differential equation. We illustrate our results with some examples of stochastic volatility models popular in the financial literature.  
\end{abstract}

\begin{Keywords}
Partial information, stochastic volatility, utility maximization, martingale duality method, non-linear filtering, Kushner-Stratonovich equations, semilinear partial differential equation.  %% MSC codes here, in the form: \MSC code \sep code
%% or \MSC[2008] code \sep code (2000 is the default)
\end{Keywords}

\section{Introduction}
 
 \vspace{3mm}

 The basic problem of mathematical finance is the problem of an economic agent who invests in a financial market so as to maximize the expected utility of his terminal wealth. In the framework of continuous time model, the utility maximization problem has been studied for the first time by Merton (1971) in a Black-Scholes environment (full information) via the Hamilton-Jaccobi-Bellman equation and dynamic programming. As in financial market models, we do not have in general a complete knowledge of all the parameters, which may be driven by unobserved random factors. So, we are in the situation of the utility maximization problem with partial observation, which has been studied extensively in the literature by Detemple \cite{Detemple}, Dothan and Feldman \cite{Feldman}, Lakner \cite{Lakner1}, \cite{Lakner2}, etc. There are many generalizations of Merton's setting. The natural generalizations was to model the volatility by a stochastic process.

In this paper, we consider a financial market where the price process of risky asset follows a stochastic volatility model and we require that investors observe just the stock price. So we are in the framework of partially observed incomplete market, where our aim is to solve the utility maximization problem in this context.

In order to solve this problem with partial observation, the common way is to use the stochastic non-linear filtering and change of measure techniques, so as the partial observation context can be transformed into a full information context. Then it is possible to solve this problem either with the martingale approach or via dynamic programming approach. Models with incomplete information have been investigated by Dothan and Feldman \cite{Feldman} using dynamic programming methods in a linear Gaussian filtering, Lakner \cite{Lakner1}, \cite{Lakner2} has solved the partial optimization problem via martingale approach and worked out the special case of the linear Gaussian filtering. Pham and Quenez \cite{Pham-Quenez} treated the case of partial information stochastic volatility model where they have combined stochastic filtering techniques and a martingale duality approach to characterize the value function and the optimal portfolio of the utility maximization problem. They have studied two cases: the case where  the risks of the model are assumed to be independent Gaussian processes  and the Bayesian case studied by Karatzas-Zhao \cite{Karatzas-Zhao}. 

\smallskip
In this paper, we are in the same framework studied by Pham and Quenez \cite{Pham-Quenez}, but here we assume that the unobservable processes are modeled by a stochastic differential equations. More precisely, the unobservable drift of the stock and that of the stochastic volatility are modeled by stochastic differential equations. The main result in this case, is that the filters estimate of the risks depend on a priori models for the trend and the stochastic volatility. There are two reasons for this result: Firstly, we need to choose the models of the trend and the stochastic volatility such that the risks dynamics can be described only in terms of them. Secondly, we need to  choose these models such that the coefficients of the risks dynamics satisfy some regularity assumptions, like globally Lipshitz conditions and some finite order moment will be imposed. We show that the filters estimate of the risks satisfy a stochastic partial differential equations named "Kushner-Stratonovich equations". But these equations are valued in infinite dimensional space and cannot be solved explicitly, so numerical approximitions can be used to resolve them. Also, we study the case of finite dimensional filters like Kalman-Bucy filter. We illustrate our results with several popular examples of stochastic volatility models. 

\smallskip
After replacing the original partial information problem by a full information one which depends only on the past history of observed prices, it is then possible to use the classical theory for stochastic control problem. Here we will be interested by the martingale approach to solve our utility optimization problem.  As the reduced market in incomplete, we complement the martingale approach by using the theory of stochastic control to solve the related dual optimization problem. In \cite{Pham-Quenez}, they have also used the martingale approach, but they have studied the case where the dual optimizer vanishes. The main result in this paper is that the solution of the related dual problem can be expressed in terms of the solution to a semilinear partial differential equation which depends also on the filters and the stochastic volatility.  

\smallskip
The paper is organized as follows: In section $2$, we describe the model and formulate the optmization problem. In section $3$, we use the non-linear filtering techniques and the change of measure techniques in order to transform the partial observation context into a full information context such that coefficients depend only on past history of observed prices (filters processes). In section $4$, we show that the filters estimations depend on a priori models for the trend and the stochastic volatility. We illustrate our results with examples of stochastic volatility models popular in the financial literature. Finally, in section $5$, we use the martingale duality approach for the utility maximization problem.  We show that the dual value function and the dual optimizer can be expressed in terms of the solution to a semilinear partial differential equation. By consequence, the primal vale function and the optimal portfolio depend also on this solution. The special cases of power and logarithmic utility functions are studied and we illustrate our results by an examples of stochastic volatility models for which we can give a closed form to the semilinear  equation.

\section{Formulation of the problem}
Let $\left(\Omega,\mathcal{F}, \mathbb{P}\right)$ be a complete probability space equipped with a filtration $\mathbb{F}=\{\mathcal{F}_{t},0 \leq t \leq T \}$ satisfying the usual conditions, where $T>0$ is a fixed time horizon. The financial market consists of one risky asset and a bank account (bound). The price of the bound is assumed for simplicity to be $1$ over the entire continuous time-horizon $[0,T]$ and the risky asset has dynamics:
\begin{align}
\label{model-S}
&\frac{dS_{t}}{S_{t}}= \mu_{t} dt + g(V_{t}) dW^{1}_{t},\\
\label{model-V}
& dV_{t}=f(\beta_{t},V_{t}) dt + k(V_{t})(\rho dW^{1}_{t}+\sqrt{1-\rho^{2}}dW^{2}_{t}),\\
& d\mu_{t}= \zeta(\mu_{t}) dt + \vartheta(\mu_{t}) dW^{3}_{t}.
\label{model-mu}
\end{align}

\noindent The processes $W^{1}$ and $W^{2}$ are two independents Brownian motions defined on $\left(\Omega,\mathcal{F}, \mathbb{P}\right)$ and $-1\leq \rho\leq 1$ is the correlation coefficient. $W^{3}$ is a standard Brownian motion independent of $W^{1}$ and $W^{2}$. The drift $\mu=\{\mu_{t}, 0\leq t \leq T\}$ is not observable and follows a Gaussian process. The process $\beta_{t}$ can be taken as a function in terms of $\mu_{t}$ or another unobservable process, which also has a stochastic differential equation. 

\vspace{2mm}
\noindent We assume that the functions $g$, $f$, $k$, $\zeta$ and $\vartheta$ ensure existence and uniqueness for solutions to the above stochastic differential equations. A Lipschitz conditions are sufficient, but we do not impose these on the parameters at this stage, as we do not wish to exclude some well-known stochastic volatility models from the outset. Also, we can assume that the drift $\mu_{t}$ can be replaced by $\mu_{t}~g(V_{t})$, that is we have a factor model. 

\smallskip 
\noindent Moreover, we assume that $g(x), k(x)>0$ and the solution of (\ref{model-V}) does not explode, that is, the solution does not touch $0$ or $\infty$ in finite time. The last condition can be verified form Feller's test for explosions given in \cite[p.348]{Shreve}. 

\vspace{6mm}
\noindent In the sequel, we denote by $\mathbb{F}^{S}=\{F_{t}^{S}, 0\leq t \leq T\}$ (resp. $\mathbb{F}^{V}=\{F_{t}^{V}, 0 \leq t \leq T\}$) the filtration generated by the price process $S$ (resp. by the stochastic volatility $V$). Also we denote by $\mathbb{G}=\{\mathcal{G}_{t}, 0 \leq t \leq T\}$ the natural $\mathbb{P}$-augmentation of the market filtration generated by the price process $S$.

\subsection{\textbf{The optimization problem}}

Let $\pi_{t}$ be the fraction of the wealth that the trader decides to invest in the risky asset at time $t$, and  $1-\pi_{t}$ is the fraction of wealth invested in the bound. We assume that the trading strategy is self-financing, then the wealth process corresponding to a portfolio $\pi$ is defined by $R_{0}^{\pi}=x$ and satisfies the following $SDE$:
\begin{align*}
dR_{t}^{\pi}=R_{t}^{\pi}\left(\pi_{t}\mu_{t}dt +\pi_{t} g(V_{t}) dW^{1}_{t}\right).
\end{align*} 
 A function $U: \mathbb{R} \rightarrow \mathbb{R}$ is called a utility function if it is strictly increasing, strictly concave of class $C^{2}$. We assume that the investor wants to maximize the expected utility of his terminal wealth. The optimization problem thus
reads as
\begin{equation}\label{Optim-prob-1} 
J(x)=\displaystyle\sup_{\pi\in \mathcal{A}} \mathbb{E}[U(R_{T}^{\pi})],~~~~ x>0,
\end{equation}

\noindent where $\mathcal{A}$ denotes the set of the admissible controls $(\pi_{t}, 0\leq t \leq T)$
which are $\mathbb{F}^{S}$-adapted, and satisfies the integrability condition:
\begin{equation}\label{integrability-cond}
\displaystyle\int_{t}^{T} g^{2}(V_{s})\pi_{s}^{2}ds <\infty~~~~~~\mathbb{P}-a.s.
\end{equation}

We are in a context when an investor wants to maximize the expected utility from terminal wealth,  where the only information available to the investor is the one generated by the
asset prices, therefore leading to a utility maximization problem in partially observed incomplete model. In order to solve it, we aim to reduce it to a maximization problem with full information. For that, it becomes important to exploit
all the information coming from the market itself in order to continuously
update the knowledge of the not fully known quantities and this is where stochastic filtering becomes useful.

\section{Reduction to a full observation context}\label{Reduction}
\noindent Let us consider the following processes:
\begin{align}
\label{mu-conditional}
&\tilde{\mu}_{t}:=\frac{\mu_{t}}{g(V_{t})},\\
&\tilde{\beta}_{t}:=\left(\sqrt{1-\rho^{2}}k(V_{t})\right)^{-1}\left(f(\beta_{t},V_{t})-\rho k(V_{t})\tilde{\mu}_{t}\right),
\label{beta-conditional}
\end{align} 
we assume that they verify the integrability condition:
\[
\displaystyle\int_{0}^{T}|\tilde{\mu}_{t}|^{2}+|\tilde{\beta}_{t}|^{2} dt <\infty ~~~~ \rm{a.s}.
\]

\noindent Here $\tilde{\mu}_{t}$ and $\tilde{\beta}_{t}$ are the unobservable processes that account for the market price of risk. The first is related to the asset's Brownian component. The second to the stochastic volatility's Brownian motion.  

%\noindent We now introduce the $\mathbb{G}$-conditional law of the random variables introduced in (\ref{mu-conditional}) and (\ref{beta-conditional}).

\vspace{3mm}
\noindent Also we introduce the following process:
\begin{equation}\label{Hmar}
L_{t}=1-\displaystyle \int_{0}^{t}L_{s}\left[\tilde{\mu}_{s}dW^{1}_{s}+ \tilde{\beta}_{s}dW^{2}_{s}\right].
\end{equation}

\noindent We shall make the usual standing assumption of filtering theory. 

\vspace{2mm}

\begin{assumption}\label{Martingale}
 The process $L$ is a martingale, that is, $\mathbb{E}[L_{T}]=1.$ 
 \end{assumption}

\vspace{2mm}
\noindent Under this assumption, we can now define a new probability measure $\tilde{\mathbb{P}}$ equivalent to $\mathbb{P}$ on $\left(\Omega,\mathbb{F}\right)$ characterized by:  

\begin{equation}\label{equivalent-pro}
\frac{d\tilde{\mathbb{P}}}{d\mathbb{P}}|\mathcal{F}_{t}=L_{t},~~~~~0\leq t \leq T.
\end{equation}

\noindent Then Girsanov's transformation ensures that
\begin{align}
\label{first-change}
&\tilde{W}^{1}_{t}=W^{1}_{t}+\displaystyle\int_{0}^{t}\tilde{\mu_{s}}ds~~\mbox{is a $(\tilde{\mathbb{P}},\mathbb{F})$-Brownian motion},\\
& \tilde{W}^{2}_{t}=W^{2}_{t}+\displaystyle\int_{0}^{t}\tilde{\beta_{s}}ds~~ \mbox{is a $(\tilde{\mathbb{P}},\mathbb{F})$-Brownian motion}.\\
\label{second-change}
\end{align} 
\noindent Also, we have that $(\tilde{\mu}_{t},\tilde{\beta}_{t})$ is independent of the Brownian motion $\left(\tilde{W}^{1}_{t}, \tilde{W}^{2}_{t}\right)$.

\noindent Therefore, the dynamics of $(S,V)$ under $\tilde{\mathbb{P}}$ become:
\begin{align}
\label{S-girsanov}
&\frac{dS_{t}}{S_{t}}= g(V_{t}) d\tilde{W}^{1}_{t},\\
&dV_{t}=\rho~k(V_{t}) d\tilde{W}_{t}^{1}+ \sqrt{1-\rho^{2}}~k(V_{t}) d\tilde{W}_{t}^{2}.
\label{V-girsanov}
\end{align}

\vspace{2mm}

\noindent We now state a lemma which will highly relevant in the following. The proof of this lemma is similar to lemma $3.1$ in Pham and Quenez \cite{Pham-Quenez}. 

\begin{lemma}\label{aug-filt}
Under assumption \ref{Martingale}, the filtration $\mathbb{G}$ is the augmented filtration of $(\tilde{W}^{1},\tilde{W}^{2})$. 
\end{lemma}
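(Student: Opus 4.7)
The plan is to prove equality of filtrations by showing both inclusions, working under the equivalent measure $\tilde{\mathbb{P}}$. Since $\tilde{\mathbb{P}} \sim \mathbb{P}$ by Assumption \ref{Martingale}, $\mathbb{P}$-null sets and $\tilde{\mathbb{P}}$-null sets coincide, so it suffices to compare the natural filtrations and then pass to the common augmentation at the end.

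For the easy inclusion, I would argue that under $\tilde{\mathbb{P}}$ the pair $(S,V)$ solves the SDE system \eqref{S-girsanov}--\eqref{V-girsanov} driven by $(\tilde{W}^{1},\tilde{W}^{2})$. Under the standing regularity assumptions on $g$ and $k$ (Lipschitz-type conditions ensuring strong existence and uniqueness), $(S,V)$ is a strong solution, hence adapted to the augmented filtration generated by $(\tilde{W}^{1},\tilde{W}^{2})$. In particular $\mathbb{F}^{S}$, and therefore its $\mathbb{P}$-augmentation $\mathbb{G}$, is contained in the augmented filtration of $(\tilde{W}^{1},\tilde{W}^{2})$.

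For the reverse inclusion, the idea is to reconstruct both Brownian motions from $S$ alone, in three steps. First, from It\^o's formula and \eqref{S-girsanov}, $d\log S_{t} = -\tfrac{1}{2}g(V_{t})^{2}dt + g(V_{t})d\tilde{W}^{1}_{t}$, so the quadratic variation $\langle \log S \rangle_{t} = \int_{0}^{t} g(V_{s})^{2}\,ds$ is pathwise computable from $S$ and hence $\mathbb{F}^{S}$-adapted; since $g>0$ is assumed to be invertible on the range of $V$, the process $V$ is $\mathbb{G}$-adapted. Second, from \eqref{S-girsanov} I can rewrite
\[
\tilde{W}^{1}_{t} = \int_{0}^{t} \frac{1}{g(V_{s})}\, \frac{dS_{s}}{S_{s}},
\]
which is a $\mathbb{G}$-adapted stochastic integral. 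Third, using \eqref{V-girsanov} and assuming $|\rho|<1$,
\[
\tilde{W}^{2}_{t} = \frac{1}{\sqrt{1-\rho^{2}}} \int_{0}^{t} \frac{1}{k(V_{s})}\bigl(dV_{s} - \rho\, k(V_{s})\, d\tilde{W}^{1}_{s}\bigr)
\]
is also $\mathbb{G}$-adapted. Combining these three steps, the augmented filtration of $(\tilde{W}^{1},\tilde{W}^{2})$ is included in $\mathbb{G}$.

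The main obstacle is the first step: rigorously recovering $V$ from the quadratic variation of $\log S$. This relies on $g$ being injective on the relevant range, a natural implicit hypothesis given $g>0$ and the requirement that $V$ does not explode, but one must be careful to make it precise. The case $|\rho|=1$ must also be excluded, since then $\tilde{W}^{2}$ never appears in the joint dynamics of $(S,V)$ and cannot be reconstructed from $\mathbb{G}$.
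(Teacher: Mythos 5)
Your proof follows essentially the same route as the paper's own sketch (which defers to Pham and Quenez): recover $V$ from the quadratic variation of $\log S$, then use the driftless dynamics \eqref{S-girsanov}--\eqref{V-girsanov} together with $g,k>0$ to pass back and forth between $(S,V)$ and $(\tilde{W}^{1},\tilde{W}^{2})$. Your version is in fact more careful than the paper's, since you make the strong-solution argument for the easy inclusion explicit and correctly flag the two hypotheses the paper uses only implicitly, namely injectivity of $g$ on the range of $V$ and $|\rho|<1$.
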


\begin{proof}
The sketch of the proof is summarized by two steps: 

\noindent Firstly, we show that the filtration $\mathbb{G}$ is equal to the enlarged progressive filtration $\mathbb{F}^{S}\vee \mathbb{F}^{V}$. The first inclusion is obvious and the other inclusion $\mathbb{F}^{S}\vee \mathbb{F}^{V} \subset \mathbb{G}$ is deduced from the fact that $V_{t}$ can be estimated from the quadratic variation of $log(S_{t})$. Secondly, from (\ref{S-girsanov}), (\ref{V-girsanov}) and the fact that $g(x),k(x)>0$, we have that $\mathbb{F}^{\tilde{W}^{1}}\bigvee \mathbb{F}^{\tilde{W}^{2}}$ the filtration generated by $(\tilde{W}^{1},\tilde{W}^{2})$

\end{proof}

\vspace{5mm}

\noindent We now make the following assumption on the risk processes $\left(\tilde{\mu},\tilde{\beta}\right)$.
 \begin{equation}\label{risk-assumptions}
 \forall t\in[0,T],~~~~ \mathbb{E}|\tilde{\mu}_{t}|+\mathbb{E}|\tilde{\beta}_{t}| < \infty
 \end{equation}
\noindent Under this assumption, we can introduce the conditional law of $\left(\tilde{\mu},\tilde{\beta}\right)$:
\begin{align}
\label{filter-mu}
&\overline{\mu}_{t}:=\mathbb{E}[\tilde{\mu}_{t}|\mathcal{G}_{t}],\\
\label{filter-beta}
&\overline{\beta}_{t}:=\mathbb{E}[\tilde{\beta}_{t}|\mathcal{G}_{t}].
\end{align}

\noindent Let us denote by $H$ the $(\tilde{\mathbb{P}},\mathbb{F})$ martingale defined as $H_{t}=\dfrac{1}{L_{t}}$. Now, we aim to construct the restriction of $\mathbb{P}$  equivalent to $\tilde{\mathbb{P}}$ on $(\Omega,\mathbb{G})$. First, let us consider the conditional version of Baye's formula: for any $\mathbb{P}$ integrable random variable $X$ ($X\in L^{1}(\mathbb{P})$), we have:
\begin{equation}\label{Bayes-restriction}
\mathbb{E}\left[X|\mathcal{G}_{t}\right]=\frac{\tilde{\mathbb{E}}\left[X H_{t}|\mathcal{G}_{t}\right]}{\tilde{\mathbb{E}}\left[H_{t}|\mathcal{G}_{t}\right]}.
\end{equation} 

\noindent Then by taking $X=L_{t}$, we get:
\begin{align}\label{girs-restriction}
\tilde{L}_{t}:=\mathbb{E}\left[L_{t}|\mathcal{G}_{t}\right]=\frac{1}{\tilde{\mathbb{E}}[H_{t}|\mathcal{G}_{t}]}.
\end{align}

\noindent Therefore, from (\ref{equivalent-pro}) (\ref{girs-restriction}), we have the following restriction to $\mathbb{G}$:
\begin{equation*}
\frac{d\tilde{\mathbb{P}}}{d\mathbb{P}}|\mathcal{G}_{t}=\tilde{L}_{t}.
\end{equation*}

\noindent Finally, from Bain and Crisan (proposition $2.30$) and Pardoux (proposition $2.2.7$), we have the following result:

\begin{proposition}\label{innovation-proc}
The following processes $\overline{W}^{1}$ and $\overline{W}^{2}$ are independent $\left(\mathbb{P},\mathbb{G}\right)$-Brownian motions.\begin{align*}
&\overline{W}^{1}_{t}=
W^{1}_{t}+\displaystyle\int_{0}^{t}\left(\tilde{\mu}_{s}-
\overline{\mu}_{s}\right)ds:=\tilde{W}^{1}_{t}-\displaystyle\int_{0}^{t}\overline{\mu}_{s}ds,\\
&\overline{W}^{2}_{t}=W^{2}_{t}+\displaystyle\int_{0}^{t}\left(\tilde{\beta}_{s}
-\overline{\beta}_{s}\right)ds:=\tilde{W}^{2}_{t}-\displaystyle\int_{0}^{t}\overline{\beta}_{s}ds.
\end{align*}
\end{proposition}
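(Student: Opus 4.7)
The plan is to apply L\'evy's characterization for multi-dimensional Brownian motion by verifying, in order, that $(\overline{W}^1, \overline{W}^2)$ is continuous and $\mathbb{G}$-adapted, that it is a $(\mathbb{P},\mathbb{G})$-martingale, and that its quadratic covariation matrix is $\delta_{ij}\, t$.

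Continuity and $\mathbb{G}$-adaptedness are immediate: the processes $\overline{\mu}_s$ and $\overline{\beta}_s$ are $\mathcal{G}_s$-measurable by the very definitions (\ref{filter-mu})--(\ref{filter-beta}), and Lemma \ref{aug-filt} makes $\tilde{W}^1, \tilde{W}^2$ into $\mathbb{G}$-adapted continuous processes, so the same holds for $\overline{W}^1$ and $\overline{W}^2$.

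For the quadratic covariations I would use the representation $\overline{W}^i_t = \tilde{W}^i_t - A^i_t$, where $A^1_t = \int_0^t \overline{\mu}_s\, ds$ and $A^2_t = \int_0^t \overline{\beta}_s\, ds$ have finite variation and so contribute nothing. Under $\tilde{\mathbb{P}}$ the pair $(\tilde{W}^1, \tilde{W}^2)$ is a standard two-dimensional Brownian motion, whence $\langle \tilde{W}^i, \tilde{W}^j\rangle_t = \delta_{ij}\, t$. Since quadratic variation is invariant under an equivalent change of measure and $\tilde{\mathbb{P}} \sim \mathbb{P}$, the identity persists $\mathbb{P}$-a.s.

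The crux is the $(\mathbb{P},\mathbb{G})$-martingale property. Writing $\overline{W}^1_t = W^1_t + \int_0^t(\tilde{\mu}_u - \overline{\mu}_u)\, du$, for $s \le t$ I would compute
\[
\mathbb{E}[\overline{W}^1_t - \overline{W}^1_s \mid \mathcal{G}_s] = \mathbb{E}[W^1_t - W^1_s \mid \mathcal{G}_s] + \int_s^t \mathbb{E}[\tilde{\mu}_u - \overline{\mu}_u \mid \mathcal{G}_s]\, du,
\]
where the Fubini exchange is justified by assumption (\ref{risk-assumptions}). The first term vanishes since $W^1$ is an $(\mathbb{F},\mathbb{P})$-martingale and $\mathcal{G}_s \subset \mathcal{F}_s$, so $\mathbb{E}[W^1_t - W^1_s \mid \mathcal{G}_s] = \mathbb{E}[\mathbb{E}[W^1_t - W^1_s \mid \mathcal{F}_s] \mid \mathcal{G}_s] = 0$; the integrand of the second term vanishes because, for $u \ge s$, the tower property gives $\mathbb{E}[\overline{\mu}_u \mid \mathcal{G}_s] = \mathbb{E}[\mathbb{E}[\tilde{\mu}_u \mid \mathcal{G}_u] \mid \mathcal{G}_s] = \mathbb{E}[\tilde{\mu}_u \mid \mathcal{G}_s]$. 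The same argument applies verbatim to $\overline{W}^2$, and L\'evy's theorem then delivers both the Brownian property and the independence.

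The main obstacle is the rigorous justification of the Fubini exchange and the upgrade from local martingale to true martingale. Assumption (\ref{risk-assumptions}) provides the absolute integrability needed for Fubini, while the bounded quadratic variation $\langle \overline{W}^i \rangle_t = t$ makes each $\overline{W}^i$ square-integrable on $[0,T]$ and hence a genuine martingale, closing the argument.
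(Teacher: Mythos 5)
Your argument is correct, and it is essentially the proof that the paper itself omits: for Proposition \ref{innovation-proc} the authors give no argument at all, simply citing Bain--Crisan (Prop.\ 2.30) and Pardoux (Prop.\ 2.2.7), and those references establish the innovation-process result by exactly your route --- L\'evy's characterization, with the martingale property obtained from the tower identity $\mathbb{E}[\overline{\mu}_u\mid\mathcal{G}_s]=\mathbb{E}[\tilde{\mu}_u\mid\mathcal{G}_s]$ for $u\ge s$ and the bracket obtained by discarding the finite-variation drift. Two small points. First, the conditional Fubini step needs $\mathbb{E}\int_s^t|\tilde{\mu}_u-\overline{\mu}_u|\,du<\infty$, whereas assumption (\ref{risk-assumptions}) only asserts $\mathbb{E}|\tilde{\mu}_t|<\infty$ for each fixed $t$; you should either invoke measurability of $u\mapsto\mathbb{E}|\tilde{\mu}_u|$ together with a local bound, or simply strengthen the hypothesis to the integrated form $\mathbb{E}\int_0^T(|\tilde{\mu}_u|+|\tilde{\beta}_u|)\,du<\infty$, which is the condition actually used in the cited references. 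Second, your closing paragraph about upgrading from local to true martingale is redundant: the direct computation of $\mathbb{E}[\overline{W}^1_t-\overline{W}^1_s\mid\mathcal{G}_s]$ already yields the true martingale property once integrability of $\overline{W}^1_t$ is noted (which follows from integrability of $W^1_t$ and of the drift term), so nothing further needs to be ``closed.''
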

\noindent These processes are called the innovation processes in filtering theory.  They include the distances between the true values of $\tilde{\mu}$ and $\tilde{\beta}$ and their estimates:

\vspace{7mm}
\noindent Then, by means of the innovation processes, we can describe the dynamics of $(S,V,R)$ within a framework of full observation model:
$$
(Q) =
\left\{
\begin{array}{lr}
\frac{dS_{t}}{S_{t}}=g(V_{t})\overline{\mu}_{t}dt + g(V_{t})d\overline{W}^{1}_{t},\\

\vspace{3mm}

dV_{t}=\left(\rho ~k(V_{t})\overline{\mu}_{t}+ \sqrt{1-\rho^{2}} ~k(V_{t})\overline{\beta}_{t} \right)dt + \rho k(V_{t}) d \overline{W}^{1}_{t} + \sqrt{1-\rho^{2}} k(V_{t})d \overline{W}^{2}_{t},\\

\vspace{2mm}

dR_{t}^{\pi} =  R_{t}^{\pi}\pi_{t}\left(g(V_{t})~\overline{\mu}_{t}dt + g(V_{t})d \overline{W}^{1}_{t}\right).
\end{array}
\right.
$$

\vspace{3mm}
%\begin{remark}
%Notice that $V_{t}$ can be estimated from the quadratic variation of the log of the price process, so we have that $V$ is $\mathbb{G}$-adapted. However from (\ref{beta-conditional}), we have:
%\begin{equation}\label{relation-risks}
%\overline{\beta}_{t}=\frac{\lambda_{V}}{\sigma_{V}\sqrt{1-\rho^{2}}}\left[\frac{\theta}{V_{t}}-1\right]-\frac{\rho}{\sqrt{1-\rho^{2}}}\overline{\mu}_{t}.
%\end{equation}
%Therefore the dynamics of $V_{t}$ in system $(Q)$ can be rewritten as: 
%\begin{equation*}
%dV_{t}= F(\overline{\beta}_{t},V_{t}) dt + \rho h(V_{t}) d \overline{W}^{1}_{t} + \sqrt{1-\rho^{2}} h(V_{t})d \overline{W}^{2}_{t}
%\end{equation*}
%where $F$ is a new function. In the case, where $f$ is just a function of $V$ or an additive function ($f(x,y)=x+r(y)$) or a multiplicative function ($f(x,y)=xr(v)$), for some function $r$, then $F(\overline{\beta}_{t},V_{t})=f(\overline{\beta}_{t},V_{t})$.
%\end{remark} 

\section{Filtering}\label{Filtering}
We have showed that conditioning arguments can be used to replace the initial partial information problem by a full information problem one which depends only on the past history of observed prices. But the reduction procedure involves the filters estimate  $\overline{\mu}_{t}$ and $\overline{\beta}_{t}$.

\vspace{3mm}
Our filtering problem can be summarized as follows: 
From lemma \ref{aug-filt}, we have $\mathbb{G}=\mathbb{F}^{\tilde{W}^{1}}\vee \mathbb{F}^{\tilde{W}^{2}}$. Then  the vector $(\tilde{W}^{1},\tilde{W}^{2})$ corresponds to the observation  process. On the other hand, our signal process is given by $(\tilde{\mu_{t}},\tilde{\beta}_{t})$. So the filtering problem is to characterize the conditional distribution of $(\tilde{\mu_{t}},\tilde{\beta}_{t})$, given the observation data $\mathbb{G}=\mathbb{F}^{\tilde{W}^{1}\bigvee \tilde{W}^{2}}$.

\smallskip\smallskip
We show in this section how the filters estimate depend on the models of the drift and the stochastic volatility. Using the non-linear filtering theory (presenting in appendix), we can deduce that the filters estimate satisfy some stochastic partial differential equations, called "Kushner-Stratonovich equations". Generally these equations are infinite-dimensional and thus very hard to solve them explicitly. So, in order to simplify the situation and in order to obtain a closed form for the optimal portfolio, we will be interested by some cases of models, when we can deduce a finite dimensional filters.

\subsection{General Case:}
\vspace{3mm}
\noindent Let us assume that 
the processes $\tilde{\mu}_{t}$ and $\tilde{\beta}_{t}$ are solutions of the following stochastic differential equations: 

   \begin{eqnarray}\label{general-filtering}
d\left(
\begin{array}{c}
\tilde{\mu}_{t} \\
\tilde{\beta}_{t}
\end{array} \right)= \left(
\begin{array}{c}
a \\
\overline{a}
\end{array} \right)dt+ \left(
\begin{array}{cc}
g_{1} & g_{2}\\
\overline{g}_{1} & \overline{g}_{2}
\end{array} \right)d\left(
\begin{array}{c}
W^{3}_{t} \\
W^{4}_{t}
\end{array} \right)+
\left(
\begin{array}{cc}
b_{1} & b_{2}\\
\overline{b}_{1} & \overline{b}_{2}
\end{array} \right)d\left(
\begin{array}{c}
W^{1}_{t} \\
W^{2}_{t}
\end{array} \right)
\end{eqnarray}
 
\noindent where we denote for simplification the functions $a:=a(\tilde{\mu}_{t},\tilde{\beta}_{t}), \overline{a}:=\overline{a}(\tilde{\mu}_{t},\tilde{\beta}_{t})$, ....... $\overline{b}_{2}=\overline{b}_{2}(\tilde{\mu}_{t},\tilde{\beta}_{t})$, and the Brownian motion $(W^{3}_{t},W^{4}_{t})$ is independent of $(W^{1}_{t},W^{2}_{t})$.

\vspace{2mm}
\noindent On the other hand, the dynamics of the observation process $(\tilde{W}^{1},\tilde{W}^{2})$ is given by:
\begin{equation}
\label{observation-process}
d\left(
\begin{array}{c}
\tilde{W}^{1}_{t} \\
\tilde{W}^{2}_{t}
\end{array} \right)=d\left(
\begin{array}{c}
W^{1}_{t} \\
W^{2}_{t}
\end{array} \right)+\left(
\begin{array}{c}
\tilde{\mu}_{t} \\
\tilde{\beta}_{t}
\end{array} \right) dt
\end{equation}

\begin{remark}
To avoid confusion in the sequel, we have:  $\mathbb{G}=\mathbb{F}^{\tilde{W}^{1}\bigvee \tilde{W}^{2}}=\mathbb{F}^{Y}$.
\end{remark}

\begin{notations}\label{notations-1}
Let us denote by: 
\[
X_{t}=\left(
\begin{array}{c}
\tilde{\mu}_{t}\\
\tilde{\beta}_{t}
\end{array} \right),~~~ Y_{t}=\left(
\begin{array}{c}
\tilde{W}_{t}^{1}\\
\tilde{W}^{2}_{t}
\end{array} \right),~~~ A=\left(
\begin{array}{c}
a \\
\overline{a}
\end{array} \right),~~~G=\left(
\begin{array}{cc}
g_{1}& g_{2}\\
\overline{g}_{1} & \overline{g}_{2}
\end{array} \right),~~~ B=\left(
\begin{array}{cc}
b_{1} & b_{2}\\
\overline{b}_{1} & \overline{b}_{2}
\end{array} \right)
\]
\begin{equation}
M_{t}=\left(
\begin{array}{c}
W_{t}^{3} \\
W_{t}^{4}
\end{array} \right)
W_{t}=\left(
\begin{array}{c}
W_{t}^{1} \\
W_{t}^{2}
\end{array} \right),~~~ h=\left(
\begin{array}{c}
h_{1} \\
h_{2}
\end{array} \right)~~~~K=\dfrac{1}{2}(BB^{T}+GG^{T}).\label{notation}\end{equation}
where for $ x=(m,b), h_{1}(x)=m$, $h_{2}(x)=b$ and $T$ denotes the the transposition operator.  
\end{notations}
\vspace{3mm}

\noindent With these notations, the signal-observation processes $(X_{t},Y_{t})$ satisfy (\ref{signal-process}) and (\ref{observation-process}):
\begin{align}
\label{signal-process-section}
&dX_{t}=A(X_{t})dt +G(X_{t})dM_{t}+ B(X_{t})dW_{t}\\
&dY_{t} = dW_{t}+h(X_{t}) dt
\label{observation-process-section}
\end{align}

\smallskip\smallskip
\subsubsection{\textbf{Estimate $\overline{\mu}_{t}$ and $\overline{\beta}_{t}$}}\label{estimate-filters}

Let us now make some assumptions which will be useful to show our results.

\underline{\textbf{Assumptions}}
\begin{itemize}
\item $i)$ The functions $A, G$ and $B$ are globally Lipschitz.
\item $ii)$ $X_{0}$ has finite second moment.
\item $iii)$ $X_{0}$ has finite third moment.
\end{itemize}

\begin{lemma}\label{mart-proof}
Let $(X,Y)$ be the solution of (\ref{signal-process-section}) and (\ref{observation-process-section}) and assume that $h$ has linear growth condition. If assumptions $i)$ and $ii)$ are satisfied, then (\ref{assumption-L}) is satisfied. Moreover, if assumption $iii)$ is satisfied, then (\ref{cond-zakai-1}) is satisfied.
\end{lemma}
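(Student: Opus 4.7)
The plan is to propagate moment bounds from $X_0$ to $X_t$ by standard SDE estimates using the Lipschitz hypothesis, and then transfer these bounds to $h(X)$ via the linear-growth assumption. Since Lipschitz of $A$, $G$, $B$ together with the constants $A(0)$, $G(0)$, $B(0)$ yields automatically a linear growth bound, the SDE \eqref{signal-process-section} falls within the classical existence and uniqueness framework, and the usual moment propagation lemma (Gronwall applied to $|X_t|^p$ after BDG on the stochastic integrals) tells us that for every $p\ge 2$,
\[
\mathbb{E}\!\left[\sup_{0\le t\le T} |X_t|^{p}\right] \;\le\; C_{T,p}\bigl(1+\mathbb{E}|X_0|^{p}\bigr),
\]
as soon as $\mathbb{E}|X_0|^{p}<\infty$. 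I would carry this moment estimate out first, independently for $p=2$ and $p=3$, since those are the only exponents needed.

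Second, I would use the linear growth hypothesis on $h$ in the form $|h(x)|\le C(1+|x|)$ to write, for any $p\ge 1$,
\[
\mathbb{E}\!\int_{0}^{T}|h(X_s)|^{p}\,ds \;\le\; C'\,T\Bigl(1+\mathbb{E}\!\sup_{0\le s\le T}|X_s|^{p}\Bigr),
\]
which is finite under assumption $ii)$ with $p=2$, and finite under assumption $iii)$ with $p=3$. The first bound gives exactly the $L^{2}$ integrability of the observation function along the signal that underlies \eqref{assumption-L} (the standard filtering-theoretic prerequisite used for instance in Bain-Crisan to ensure that $L$ defined by \eqref{Hmar} is a true martingale, via e.g.\ Beneš'/Novikov-type arguments as $h(X_t)=(\tilde\mu_t,\tilde\beta_t)^{\!T}$). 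The second bound provides the higher integrability needed for condition \eqref{cond-zakai-1}.

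The two conclusions therefore reduce to an application of the above $L^{p}$ estimates with $p=2$ under $i)$--$ii)$ and with $p=3$ under $iii)$, so the only non-routine ingredient is really the Gronwall moment bound, which is standard once the Lipschitz assumption has been checked.

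The main point of difficulty I anticipate is not in the moment propagation itself (which is textbook once $A$, $G$, $B$ are Lipschitz) but rather in verifying that the integrability $\mathbb{E}\!\int_0^T |h(X_s)|^{2}\,ds<\infty$ actually suffices to conclude the exact statement of \eqref{assumption-L}; in many references this requires an additional argument based on a localisation plus a Novikov-type condition on pieces, or the Beneš condition using the linear growth of $h$. I would address this by invoking the linear growth of $h$ to show that, on any subinterval, the truncated exponential martingales satisfy Novikov, and then pass to the limit. Everything else reduces to a direct application of standard SDE moment estimates.
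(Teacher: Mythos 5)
The paper does not actually prove this lemma: the stated ``proof'' is a citation of Lemmas $4.1.1$ and $4.1.5$ of Bensoussan, so your sketch has to be measured against what those lemmas establish. Your overall plan --- propagate $L^{p}$ moments of $X$ by BDG/Gronwall using the Lipschitz hypothesis on $A,G,B$, then transfer them to $h(X)$ via linear growth --- is indeed the backbone of that argument, and it does deliver the first half of (\ref{assumption-L}), namely $\mathbb{E}\int_{0}^{t}\|h(X_{s})\|^{2}\,ds<\infty$, exactly as you describe.

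There are, however, two genuine gaps. First, (\ref{assumption-L}) contains a \emph{second} condition, $\mathbb{E}\bigl[\int_{0}^{t}L_{s}\|h(X_{s})\|^{2}\,ds\bigr]<\infty$, which your proposal never addresses. Since $L$ is the density of $\tilde{\mathbb{P}}$ with respect to $\mathbb{P}$, this condition is $\int_{0}^{t}\tilde{\mathbb{E}}\bigl[\|h(X_{s})\|^{2}\bigr]\,ds<\infty$, i.e.\ a second-moment bound for $X$ under the \emph{reference} measure. Under $\tilde{\mathbb{P}}$ the observation $Y$ is the Brownian motion and $X$ solves $dX_{t}=\bigl(A(X_{t})-B(X_{t})h(X_{t})\bigr)dt+G(X_{t})dM_{t}+B(X_{t})dY_{t}$; since the sensor function here is $h(x)=x$ and $B$ is only assumed Lipschitz, the drift $A-Bh$ is in general of quadratic growth, so the textbook moment-propagation lemma you invoke does not apply verbatim to this SDE and a separate argument (localisation of $L$ plus Fatou, or structural hypotheses on $B$) is needed --- this is precisely the delicate point of the cited Lemma $4.1.1$. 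Your closing remark about localisation and Novikov on subintervals concerns the martingale property of $L$, which is a different issue (and is Assumption \ref{Martingale}, not what this lemma asks you to prove).

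Second, the passage from assumption $iii)$ to (\ref{cond-zakai-1}) is asserted rather than proved. Condition (\ref{cond-zakai-1}) requires $\int_{0}^{t}[\psi_{s}(\|h\|)]^{2}ds<\infty$ a.s., where $\psi_{s}(\|h\|)=\tilde{\mathbb{E}}\bigl[\|h(X_{s})\|\Lambda_{s}\,|\,\mathcal{G}^{Y}_{s}\bigr]$ is the \emph{unnormalised conditional distribution} applied to $\|h\|$, not $\|h(X_{s})\|$ itself; the bound $\mathbb{E}\int_{0}^{T}\|h(X_{s})\|^{3}ds<\infty$ does not directly imply it. The exponent $3$ enters through a Cauchy--Schwarz-type argument that splits the product $\|h(X_{s})\|\Lambda_{s}$ inside the conditional expectation and trades the extra power of $h$ against moments of the density $\Lambda$; carrying this out is the actual content of the cited Lemma $4.1.5$ and is the only non-routine part of the statement. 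As written, your argument proves a different (and insufficient) integrability property.
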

\begin{proof}
The proof is given  in \cite{Bensoussan}(see, lemma $4.1.1$ and lemma $4.1.5$).
\end{proof}

\vspace{2mm}
The following results show that we need to introduce an a priori models for the trend and the stochastic volatility in order to describe the dynamics of  $(\tilde{\mu}_{t},\tilde{\beta}_{t})$ as in (\ref{general-filtering}), and therefore we can deduce from proposition \ref{general-Kushner-equation}  the dynamics of the filters estimate $(\overline{\mu}_{t},\overline{\beta}_{t})$ and therefore deduce that of $(\overline{\mu}_{t}, \overline{\beta}_{t})$. More precisely, we show that these estimates depend essentially on the model of the volatility $V_{t}$. We need to choose the dynamics of $V_{t}$ such that the following two steps will be verified.

\begin{itemize}
\item \underline{First step:} Describe the dynamics of $(\tilde{\mu}_{t}, \tilde{\beta}_{t})$ as in (\ref{general-filtering})

We show that this description depend essentially on the model of $V_{t}$. In fact, if we apply It\^{o}'s  formula on $\tilde{\mu}_{t}$ and $\tilde{\beta}_{t}$ in order to describe their dynamics, we have that $V_{t}$ still appear, for that we need to describe $V_{t}$ only in terms of $\tilde{\mu}_{t}$ and $\tilde{\beta}_{t}$ in order to disappear it from their dynamics. This can be done from the definition of the $\tilde{\beta}_{t}$ but taking in account the choice of the variable $\beta_{t}$ or more precisely the choice of $f(\beta_{t},V_{t})$. We will clarify this  with an examples in paragraph $4.1.1$.  

\item \underline{Second step:} Verification of some regularity assumptions

Once we describe the dynamics of  $(\tilde{\mu}_{t}, \tilde{\beta}_{t})$ as in (\ref{general-filtering}), we must check in more that the coefficients of the dynamics verify some regularity assumptions, in order to use the above results of nonlinear filtering theory. 
\end{itemize}

\noindent We present now our result concerning the filtering problem: 

\begin{proposition}\label{dependence-drift}
 We assume that there exists a function $\Upsilon: \mathbb{R}^{2}\to\mathbb{R}$ such that $V_{t}=\Upsilon(\tilde{\mu}_{t},\tilde{\beta}_{t})$. If with this function, the dynamics of $X_{t}=(\tilde{\mu}_{t},\tilde{\beta}_{t})$ can be described as in (\ref{signal-process-section}) and assumptions $i),ii)$ and $iii)$ hold, then the conditional distribution $\alpha_{t}:\mathbb{E}[\phi(X_{t}|\mathbb{F}_{t}^{Y})]$ satisfy the following Kushner-Stratonovich equation:
 \begin{align}
 \nonumber
d\alpha_{t}(\phi)&=\alpha_{t}(A\phi)dt +\left[\alpha_{t}\left(\left(h^{1}+\mathcal{B}^{1}\right)\phi\right)-\alpha_{t}(h^{1})\alpha_{t}(\phi)\right]d\overline{W}_{t}^{1}\\
&~~~~~~~~~~~~~~~~+\left[\alpha_{t}\left(\left(h^{2}+\mathcal{B}^{2}\right)\phi\right)-\alpha_{t}(h^{2})\alpha_{t}(\phi)\right]d\overline{W}_{t}^{2}.
\label{alpha-section}
 \end{align}
 $\mbox{for any}~~\phi\in B(\mathbb{R}^{2})$(the space of bounded measurable functions $\mathbb{R}^{2}\to\mathbb{R}$). The operators $\mathcal{B}^{1}$ and $\mathcal{B}^{2}$ are given in (\ref{operator-B}).
\noindent Moreover the dynamics of $(\overline{\mu}_{t},\overline{\beta}_{t})$ satisfy the following stochastic differential equations: 
 \begin{align*}
&d\overline{\mu}_{t}=\alpha_{t}(a)dt + [\alpha_{t}\left(h^{1}\phi_{1}+b_{1}\right)-\alpha_{t}(h^{1})\alpha_{t}(\phi_{1})] d\overline{W}_{t}^{1}+[\alpha_{t}\left(h^{2}\phi_{1}+b_{2}\right)-\alpha_{t}(h^{2})\alpha_{t}(\phi_{1})]d\overline{W}_{t}^{2},\\
&d\overline{\beta}_{t}=\alpha_{t}(\overline{a}) dt+ [\alpha_{t}\left(h^{1}\phi_{2}+\overline{b}_{1}\right)-\alpha_{t}(h^{1})\alpha_{t}(\phi_{2})]d\overline{W}_{t}^{1} + [\alpha_{t}\left(h^{2}\phi_{2}+\overline{b}_{2}\right)-\alpha_{t}(h^{2})\alpha_{t}(\phi_{2})] d\overline{W}_{t}^{2}.
\end{align*}
 
\end{proposition}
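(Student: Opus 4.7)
My plan is to apply the general Kushner-Stratonovich equation recalled in the appendix (Proposition \ref{general-Kushner-equation}) to the signal-observation pair $(X_t,Y_t)$ of \eqref{signal-process-section}--\eqref{observation-process-section}, and then to specialize the resulting equation to the two coordinate test functions $\phi_1(m,b)=m$ and $\phi_2(m,b)=b$ so as to read off the SDEs for $\overline{\mu}_t=\alpha_t(\phi_1)$ and $\overline{\beta}_t=\alpha_t(\phi_2)$.

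The first step is to verify the hypotheses of that general result. Since $V_t=\Upsilon(\tilde{\mu}_t,\tilde{\beta}_t)=\Upsilon(X_t)$, every coefficient in the dynamics of $\tilde\mu$ and $\tilde\beta$ that originally depended on $V_t$ can be re-expressed as a function of $X_t$ alone, which is precisely what brings \eqref{general-filtering} into the autonomous form \eqref{signal-process-section} with $A,G,B$ depending only on $X_t$. Under assumption $i)$ the coefficients are globally Lipschitz, so \eqref{signal-process-section}--\eqref{observation-process-section} admits a unique strong solution; together with assumption $ii)$ and the linear growth of $h$, Lemma \ref{mart-proof} then guarantees that $L$ is a true martingale, so that Assumption \ref{Martingale} and the measure-change underlying the filtering framework are valid, while assumption $iii)$ secures the additional moment bound required by the appendix to justify the Kushner-Stratonovich equation itself.

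The second step is to invoke Proposition \ref{general-Kushner-equation}. Because $M_t=(W^3_t,W^4_t)$ is independent of the observation noise $W=(W^1_t,W^2_t)$, the $G$-part of the signal contributes only to the second-order generator $A$, whereas the $B$-part shares the observation noise and therefore produces, through the joint quadratic variation of signal and innovation, exactly the correction operators $\mathcal{B}^1,\mathcal{B}^2$ of \eqref{operator-B}. This delivers \eqref{alpha-section} for every $\phi\in B(\mathbb{R}^2)$.

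Finally, I specialize to the coordinate projections $\phi_1,\phi_2$. Because both functions are linear in $(m,b)$, the second-order part of the generator vanishes and one reads off $A\phi_1=a$, $A\phi_2=\overline{a}$; and because each $\mathcal{B}^j$ acts as the directional derivative along the $j$-th column of $B$, one gets $\mathcal{B}^1\phi_1=b_1$, $\mathcal{B}^2\phi_1=b_2$, and $\mathcal{B}^1\phi_2=\overline{b}_1$, $\mathcal{B}^2\phi_2=\overline{b}_2$. Substituting these expressions into \eqref{alpha-section} produces directly the two announced SDEs for $\overline{\mu}_t$ and $\overline{\beta}_t$. The delicate point I anticipate is that $\phi_1,\phi_2$ are unbounded whereas \eqref{alpha-section} is stated only for $\phi\in B(\mathbb{R}^2)$; I would treat this by a standard truncation $\phi_i^N:=(\phi_i\wedge N)\vee(-N)$ and pass to the limit $N\to\infty$, exploiting the moment bounds on $X_t$ that propagate from assumptions $ii)$ and $iii)$ through a Gronwall plus Burkholder-Davis-Gundy argument. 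Establishing the uniform integrability needed to commute the expectations $\alpha_t(\cdot)$ with the limit, in particular for the quadratic cross terms $\alpha_t(h^j\phi_i)$, is the step I expect to be the most technical.
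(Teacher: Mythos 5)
Your proposal follows essentially the same route as the paper: check that assumptions $i)$--$iii)$ together with the linear growth of $h$ let Lemma \ref{mart-proof} validate conditions (\ref{assumption-L}) and (\ref{cond-zakai-1}), invoke Proposition \ref{general-Kushner-equation} to get (\ref{alpha-section}), and then specialize to the coordinate functions $\phi_{1},\phi_{2}$, handling their unboundedness by truncation and dominated convergence; your identifications $A\phi_{1}=a$, $\mathcal{B}^{1}\phi_{1}=b_{1}$, $\mathcal{B}^{2}\phi_{1}=b_{2}$ (and similarly for $\phi_{2}$) are exactly what the paper uses implicitly. The one point where your plan as written would snag is the hard truncation $\phi_{i}^{N}=(\phi_{i}\wedge N)\vee(-N)$: the operators $\mathcal{A}$ and $\mathcal{B}^{j}$ in (\ref{Operator-A}) and (\ref{operator-B}) involve first and second derivatives of the test function, so a Lipschitz-but-not-$C^{2}$ truncation is not in their domain and (\ref{alpha-section}) cannot be applied to it directly. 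The paper avoids this by multiplying $\phi_{i}$ by a smooth cutoff $\psi^{k}(x)=\psi(x/k)$, for which $\mathcal{A}(\phi_{i}\psi^{k})\to\mathcal{A}\phi_{i}$ pointwise with domination, and then passes to the limit exactly as you intend; replacing your clipping by such a mollified cutoff makes your argument coincide with the paper's.
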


\begin{proof}
From the definition of $\tilde{\mu}_{t}$ and $\tilde{\beta}_{t}$ and depending on the models of $\mu_{t}$ and $\beta_{t}$, we have from It\^{o}'s formula that $V_{t}$ still appear in the dynamics of $\tilde{\mu}_{t}$ and $\tilde{\beta}_{t}$. As $V_{t}=\Upsilon(\tilde{\mu}_{t},\tilde{\beta}_{t})$, then we can describe the dynamics of the signal process $X_{t}=(\tilde{\mu},\tilde{\beta}_{t})$ as in (\ref{signal-process}). On the other hand, from the definition of the observation process given by (\ref{observation-process-section}), we have that the sensor function $h=(h_{1},h_{2})$ has a linear growth condition. Thus, as assumptions $i),ii)$ and $iii)$ are verified, then we can deduce from lemma \ref{mart-proof}, that the conditions (\ref{assumption-L}) and (\ref{cond-zakai-1}) are proved. Therefore the dynamics of $\alpha_{t}$ given in (\ref{alpha-section}) is deduced from proposition \ref{general-Kushner-equation}. 

\noindent It remains to deduce the dynamics of $(\overline{\mu}_{t},\overline{\beta}_{t})$.  

\noindent Let us consider the functions $\phi_{1}$ and $\phi_{2}$ as follows: 
\[
\mbox{for}~~x=(m,b),~~~~\phi_{1}(x)=m~~\mbox{and}~~\phi_{2}(x)=b.
\]
Then the filters $\overline{\mu}_{t}$ (resp.$\overline{\beta}_{t}$) can be deduce from (\ref{alpha-section}) by replacing $\phi$ by $\phi_{1}$ (resp.$\phi_{2}$). The problem here is that the Kushner-Stratonovich equation (\ref{alpha-section}) holds for any bounded Borel measurable $\phi$. But as $\phi_{1}$ (resp.$\phi_{2}$) not bounded, we proceed by truncating of $\phi_{1}$ (resp.$\phi_{2}$) at a fixed level which we let tend to infinity. For this, let us introduce the functions $(\psi^{k})_{k>0}$ defined as
\begin{align*}
\psi^{k}(x)=\psi(x/k),~~~~~x~in~\mathbb{R}^{2},
\end{align*}  
where 
$$
\psi(x) =
\left\{
\begin{array}{lr}
1& ~~~\mbox{if}~~|x|\le 1\\
\exp(\frac{|x|^{2}-1}{|x|^{2}-4})&~~~~~~~~~~~~~~~\mbox{if}~~1<|x|<2\\
2&~~~~~~~~\mbox{if}~~|x|\ge 2.
\end{array}
\right.
$$

Then by using the following relations given in:
\begin{align*}
&\displaystyle\lim_{k\to\infty}\phi_{1}\psi^{k}(x)=\phi_{1}(x),~~~~~~~~|\phi_{1}(x)\psi^{k}(x)|\le|\phi_{1}(x)|,\\
&\displaystyle\lim_{k\to\infty}A_{s}(\phi_{1}\psi^{k})(x)=A_{s}\phi_{1}(x).
\end{align*}
Then by replacing in equation (\ref{alpha-section}) $\phi$ by $\phi_{1}\psi^{k}$ and from dominated convergence theorem, we may pass to the limit as $k\to \infty$ and then we deduce that $\overline{\mu}_{t}:=\alpha_{t}(\phi_{1})$ (resp.$\overline{\beta}_{t}:=\alpha_{t}(\phi_{2})$ ) satisfy the dynamics given above.  
\end{proof}

\subsubsection{Existence and uniqueness of the solution to equation (\ref{alpha-section})}

\smallskip\smallskip 
We now take sufficient assumption on the coefficients of the signal-observation system in order to show that equation (\ref{alpha-section}) has a unique solution, see Bain and Crisan \cite[chap.4]{Bain}. We define in the following the space within which we prove the uniqueness.

\noindent Let us define the space of measure-valued stochastic processes within which we prove uniqueness of the solution to equation (\ref{alpha-section}). This space has to be chosen so that it contains only measures with respect to which the integral of any function with linear growth is finite. The reason of this choice is that we want to allow to the coefficients of the signal and observation processes to be unbounded. 

\smallskip\smallskip
Let $\psi:\mathbb{R}^{2}\to \mathbb{R}$ be the function $\psi(x)=1+||x||$, for any $x \in \mathbb{R}^{2}$ and define $C^{l}(\mathbb{R}^{2})$ to be the space of continuous functions $\phi$ such that $\phi/\psi\in C_{b}(\mathbb{R}^{2})$(the space of bounded continuous functions).

\noindent Let us denote by $\mathbb{M}^{l}(\mathbb{R}^{2})$ the space of finite measure $\mathcal{M}$ such that $\mathcal{M}(\psi)<\infty$. In particular, this implies that $\mathcal{\mu}(\phi)<\infty$ for all $\phi \in C^{l}(\mathbb{R}^{2})$. Moreover, we endow $\mathbb{M}^{l}(\mathbb{R}^{2})$ wit the corresponding weak topology: A sequence $(\mathcal{\mu}_{n})$ of measures in $\mathbb{M}^{l}(\mathbb{R}^{2})$ converges to $\mathcal{\mu} \in \mathbb{M}^{l}(\mathbb{R}^{2})$   if and only if $\displaystyle\lim_{n\to \infty}\mathcal{\mu}_{n}(\phi)=\mathcal{\mu}(\phi)$, for all $\phi \in C^{l}(\mathbb{R}^{2})$.

\begin{definition}
\begin{itemize}
\item The Class $\mathbb{U}$ is the space of all $Y_{t}$-adapted $\mathbb{M}^{l}(\mathbb{R}^{2})$-valued stochastic process $(\mathcal{\mu})_{t\geqslant 0}$ with c\`{a}dl\`{a}g paths such that, for all $t \geqslant 0$, we have
\[
\tilde{\mathbb{E}}\left[\displaystyle\int_{0}^{t}(\mathcal{\mu}_{s}(\psi))^{2}ds\right]<\infty.
\]
\item The Class $\mathbb{\tilde{U}}$ is the space of all $Y_{t}$-adapted $\mathbb{M}^{l}(\mathbb{R}^{2})$-valued stochastic process $(\mathcal{\mu})_{t\geqslant 0}$ with c\`{a}dl\`{a}g paths such that the process $m^{\mathcal{\mu}}\mathcal{\mu}$ belongs to the class $\mathbb{U}$, where the process $m^{\mathcal{\mu}}$ is defined as:
\[
m_{t}^{\mathcal{\mu}}=\exp(\displaystyle\int_{0}^{t}\mathcal{\mu}_{s}(h^{T})dY_{s}-\dfrac{1}{2}\displaystyle\int_{0}^{t}\mathcal{\mu}_{s}(h^{T})\mathcal{\mu}_{s}(h)ds)
\]
\end{itemize}
\end{definition}

Now we state the uniqueness result of the solution to equation (\ref{alpha-section}), see theorem $4.19$ in Bain and Crisan \cite[chap.4]{Bain} 
\begin{proposition}
Assuming that the functions $A$, $K$ and $h$ defined in (\ref{notation}) have twice continuously differentiable components and all their derivatives of first and second order are bounded.  Then equation (\ref{alpha-section}) has a unique solution in the class $\mathbb{\tilde{U}}$.
\end{proposition}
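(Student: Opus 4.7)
The plan is to follow the classical reduction of uniqueness for the Kushner-Stratonovich equation to uniqueness for the unnormalized Zakai equation, and then to prove Zakai uniqueness by a dual backward Kolmogorov PDE argument, in the spirit of Bain and Crisan \cite[Chap.~4]{Bain}.

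\emph{Reduction to Zakai.} Given any candidate $\mu \in \tilde{\mathbb{U}}$ solving (\ref{alpha-section}), set $\rho_t(\phi) := m^{\mu}_t \, \mu_t(\phi)$. Applying It\^o's formula, using the innovation identity $dY_t = d\overline{W}_t + \mu_t(h)\,dt$ and the definition of $m^{\mu}$, the drift contributions coming from $\mu_t(h^i)\,dt$ cancel exactly against the cross-variation between $m^{\mu}$ and $\mu(\phi)$, and one obtains the linear Zakai equation
\[
d\rho_t(\phi) = \rho_t(A\phi)\,dt + \rho_t\bigl((h^1+\mathcal{B}^1)\phi\bigr)\,dY_t^1 + \rho_t\bigl((h^2+\mathcal{B}^2)\phi\bigr)\,dY_t^2.
\]
The defining integrability of $\tilde{\mathbb{U}}$ is precisely what is needed so that $\rho \in \mathbb{U}$, and the map $\mu \mapsto \rho$ is invertible via the normalization $\mu_t(\phi) = \rho_t(\phi)/\rho_t(\mathbf{1})$. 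Hence uniqueness in $\tilde{\mathbb{U}}$ for (\ref{alpha-section}) is equivalent to uniqueness in $\mathbb{U}$ for the Zakai equation above.

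\emph{Dual PDE argument.} Fix a horizon $T > 0$ and a test function $\phi \in C^2_b(\mathbb{R}^2)$, and introduce the backward Kolmogorov equation
\[
\partial_s \psi_s + A\psi_s = 0 \ \text{ on } [0,T]\times\mathbb{R}^2, \qquad \psi_T = \phi,
\]
where $A$ is the generator of the signal diffusion $X$. Under the hypothesis that the components of $A$, $K$ and $h$ are $C^2$ with bounded first- and second-order derivatives, standard Friedman-type parabolic estimates yield a classical solution $\psi \in C^{1,2}([0,T]\times\mathbb{R}^2)$ with linear growth in $x$ and spatial derivatives bounded uniformly in $s$. Applying It\^o's formula to $s \mapsto \rho_s(\psi_s)$, the terms $\rho_s(A\psi_s)\,ds$ and $\rho_s(\partial_s\psi_s)\,ds$ cancel, yielding an explicit representation of $\rho_T(\phi)$ as a deterministic functional of $\rho_0$ and the observation path $Y$, identical for any two solutions of the Zakai equation. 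A monotone-class extension from $C^2_b$ to all $\phi \in C^{l}(\mathbb{R}^2)$ then identifies $\rho_T$ uniquely, and arbitrariness of $T$ gives uniqueness of $(\rho_t)_{t \geq 0}$, hence of $(\mu_t)_{t \geq 0}$. Existence in $\tilde{\mathbb{U}}$ is then immediate: Proposition \ref{dependence-drift} shows that $\alpha_t$ solves (\ref{alpha-section}), and the moment bounds inherited from assumptions $ii)$ and $iii)$ on $X_0$ together with the Lipschitz hypothesis on $A,G,B$ ensure the required integrability of $m^{\alpha}_s\alpha_s(\psi)$.

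The principal obstacle is the linear (rather than bounded) growth of the sensor $h$, which is the very reason the state space of measures is taken to be $\mathbb{M}^{l}(\mathbb{R}^2)$ and the classes $\mathbb{U}, \tilde{\mathbb{U}}$ are equipped with the $L^2$-type bound $\tilde{\mathbb{E}}\int_0^t \mu_s(\psi)^2\,ds < \infty$. All the stochastic integrals arising in the reduction to Zakai and in the It\^o expansion of $\rho_s(\psi_s)$ involve integrands of the form $\rho_s(h\,\psi_s)$ or $\rho_s((h+\mathcal{B})\psi_s)$ with $\psi_s$ of linear growth, and their square integrability against $dY$ is exactly guaranteed by this bound, upgrading the local martingales to genuine martingales so that expectations may be taken freely. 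Once this integrability bookkeeping is carried out, both halves of the argument go through and the unique-solvability of (\ref{alpha-section}) in $\tilde{\mathbb{U}}$ follows.
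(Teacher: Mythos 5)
The paper does not actually prove this proposition: it simply invokes Theorem 4.19 of Bain and Crisan, and your sketch is a reconstruction of the argument behind that theorem. Your overall strategy is the right one and matches the reference: pass from a candidate solution $\mu$ of the Kushner--Stratonovich equation to $\rho_t = m^{\mu}_t\,\mu_t$, check that the cross-variation between $m^{\mu}$ and $\mu(\phi)$ cancels the nonlinear drift so that $\rho$ solves the linear Zakai equation in the class $\mathbb{U}$, invert by normalization, and then prove Zakai uniqueness by testing against a dual function; the role of the classes $\mathbb{U}$, $\tilde{\mathbb{U}}$ in upgrading local martingales to martingales when $h$ has only linear growth is also correctly identified.

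There is, however, one step that does not go through as written. You obtain the dual $\psi$ as a classical $C^{1,2}$ solution of the backward Kolmogorov equation $\partial_s\psi_s + A\psi_s = 0$ by ``standard Friedman-type parabolic estimates.'' Those estimates require uniform ellipticity (or at least non-degeneracy) of the diffusion matrix $K = \tfrac12(BB^{T}+GG^{T})$, and no such hypothesis appears in the proposition --- only that $A$, $K$ and $h$ are $C^{2}$ with bounded first and second derivatives. Indeed, in the examples of the paper (e.g.\ the Garch case, where $B \equiv 0$ and $G$ has a vanishing column) the generator is degenerate, so a classical solution of the backward PDE is not available by parabolic regularity theory. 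The way the cited reference circumvents this is to construct the dual probabilistically, as an expectation of $\phi$ along the $C^{2}$ stochastic flow of the signal SDE weighted by an exponential functional of $h$; the bounded-derivative hypotheses are exactly what make that flow differentiable and give the needed growth and regularity of the dual, with no ellipticity required. So your argument needs either an added ellipticity assumption (which would narrow the scope of the proposition) or the replacement of the PDE construction of $\psi$ by the flow-based one. The remaining steps (the monotone-class extension to $C^{l}(\mathbb{R}^2)$, and existence via Proposition 4.3 together with the moment bounds on $X_0$) are fine.
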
 

\smallskip\smallskip
\begin{remark}
The equations satisfied by the filters are infinite-dimensional and cannot be solved explicitly. These filters have to be solved numerically, but in concrete application, the filter could thus never be implemented exactly, so in order to avoid this difficulty, some approximation schemes have been proposed. For example, the extended Kalman filter, which is based upon linearization of the state equation around the current estimate, see e.g Pardoux \cite{pardoux}. This method is not mathematically justified, but it is widely used in practice. The partial differential equations method which based on the fact that the density of the unnormalised conditional distribution of the signal is the solution of a partial differential equation, see e.g Bensoussan\cite{Bensoussan} and Pardoux \cite{pardoux}. Also, we can use the approximation scheme used by Gobet el al \cite{Gobet} which consist in discretizing the Zakai equation, which is linear, and then deduce the approximation of the conditional distribution $\alpha_{t}$ from Kllianpur-Striebel formula (\ref{Kallianpur}).
\end{remark}

 \subsubsection{Application}
 In this section, we will present two types of models: a models for which we cannot apply our result in proposition \ref{dependence-drift} in order to deduce the filters estimate and a models where proposition \ref{dependence-drift} can be applied.
 
 \smallskip
\noindent Let us consider the following 
 \begin{align}
\label{model-S-app}
&\frac{dS_{t}}{S_{t}}= \mu_{t}dt + e^{V_{t}} dW^{1}_{t}\\
\label{model-V-app}
&  dV_{t}=\lambda_{V}\left(\theta- V_{t}\right) dt + \sigma_{V}\rho dW^{1}_{t}+\sigma_{V}\sqrt{1-\rho^{2}} dW^{2}_{t} \\
& d\mu_{t}=\lambda_{\mu}\left(\theta_{\mu}-\mu_{t}\right) dt + \sigma_{\mu} dW^{3}_{t},~~~~~~~~\mu_{0}\rightsquigarrow \mathcal{N}(m_{0},\sigma_{0}),
\label{model-mu-app}
\end{align} 

\noindent Here the risks of the models are given by: 
\[\tilde{\mu}_{t}=\dfrac{\mu_{t}}{e^{V_{t}}}~~~~~~~~~\tilde{\beta}_{t}=\dfrac{\lambda_{V}(\theta-V_{t})}{\sigma_{V}\sqrt{1-\rho^{2}}}-\dfrac{\rho}{\sqrt{1-\rho^{2}}}\tilde{\mu}_{t}\]
\noindent Applying It\^{o}'s formula on $\tilde{\mu}_{t}$ and $\beta_{t}$, we have the following dynamics:

\begin{align*}
\nonumber
\tilde{\mu}_{t}&=\tilde{\mu}_{0}+\displaystyle\int_{0}^{t}\lambda_{\mu}\theta_{\mu} e^{-V_{s}}ds+\tilde{\mu}_{s}\left(\sigma_{V}^{2}-\lambda_{\mu}-\left[\lambda_{V}\left(\theta-V_{s}\right)+\frac{1}{2}\sigma_{V}^{2}\right]\right)ds+\displaystyle\int_{0}^{t}\sigma_{V}e^{-V_{s}}dW^{3}_{s}\\
&~~~~~~~-\displaystyle\int_{0}^{t}\rho\sigma_{V}\tilde{\mu}_{s}dW_{s}^{1}-\displaystyle\int_{0}^{t}
\sqrt{1-\rho^{2}}\sigma_{V}\tilde{\mu}_{s}dW_{s}^{2}.\\
&\tilde{\beta}_{t}=-\displaystyle\int_{0}^{t}\frac{\lambda_{V}^{2}(\theta-V_{s})}{\sigma_{V}\sqrt{1-\rho^{2}}}ds-\displaystyle\int_{0}^{t}\frac{\lambda_{V}\rho}{\sqrt{1-\rho^{2}}}dW^{1}_{s}-\displaystyle\int_{0}^{t}\lambda_{V} dW_{s}^{2}
-\frac{\rho}{\sqrt{1-\rho^{2}}}d\tilde{\mu}_{s}.
\end{align*}

\noindent On the other hand, from the definition of $\tilde{\beta}_{t}$, we can express $V_{t}$ in terms of $\tilde{\mu}_{t}$ and $\tilde{\beta}_{t}$ as follows:
\begin{equation}\label{description}
V_{t}=-\frac{\sigma_{V}\sqrt{1-\rho^{2}}}{\lambda_{V}}\tilde{\beta}_{t}-\frac{\sigma_{V}\rho}{\lambda_{V}}\tilde{\mu}_{t}+\theta.
\end{equation}
If we replace $V_{t}$ in the above dynamics, we can deduce that $(\tilde{\mu}_{t},\tilde{\beta}_{t})$ can be described as in (\ref{signal-process}), where:

\begin{align*}
&a(m,b)=\lambda_{\mu} \theta_{\mu}\exp\left(\frac{\sigma_{V}\sqrt{1-\rho^{2}}}{\lambda_{V}}b + \frac{\sigma_{V}\rho}{\lambda_{V}}m-\theta\right)+ \left(\frac{1}{2}\sigma_{V}^{2}-\lambda_{\mu}-\sigma_{V}\sqrt{1-\rho^{2}}b-\sigma_{V}\rho m\right)m;\\
&b_{1}(m,b)=-\rho\sigma_{V}m;~~b_{2}(m,b)=-\sigma_{V}m \sqrt{1-\rho^{2}};~~g_{1}(m,b)=\sigma_{V}\exp\left(\frac{\sigma_{V}\sqrt{1-\rho^{2}}}{\lambda_{V}}b + \frac{\sigma_{V}\rho}{\lambda_{V}}m-\theta\right).
\end{align*}
and 
\begin{align*}
&\overline{a}(m,b)=-\lambda_{V} b -\frac{\lambda_{V} \rho}{\sqrt{1-\rho^{2}}}m-\frac{\rho}{\sqrt{1-\rho^{2}}}a(m,b);~~\overline{b}_{1}(m,b)=-\frac{\lambda_{V} \rho}{\sqrt{1-\rho^{2}}}-\frac{\rho^{2}\sigma_{V}}{\sqrt{1-\rho^{2}}}m\\
&\overline{b}_{2}(m,b)=-\lambda_{V}+\rho\sigma_{V}m;~~\overline{g}_{1}(m,b)= \frac{-\rho}{\sqrt{1-\rho^{2}}}\exp\left(\frac{\sigma_{V}\sqrt{1-\rho^{2}}}{\lambda_{V}}b + \frac{\sigma_{V}\rho}{\lambda_{V}}m-\theta\right); g_{2}=\overline{g}_{2}=0.
\end{align*} 

With (\ref{description}), the dynamics of $(\tilde{\mu}_{t},\tilde{\beta}_{t})$ is described as in (\ref{signal-process}) but assumption $i)$ about the globally Lipschitz conditions is not satisfied, then proposition \ref{dependence-drift} can't be applied. 

\begin{remark}
Notice that here $\beta_{t}$ is a constant function. Also we can choose for example $\beta_{t}=\mu_{t}$ which in this case we can still describe $V_{t}$ only in terms of $\tilde{\mu}_{t}$ and $\tilde{\beta}_{t}$.  But if we take $\beta_{t}$ is another process, in this case it is not clear that $V_{t}$ can be described only in terms of  $\tilde{\mu}_{t}$ and $\tilde{\beta}_{t}$.
\end{remark}

\noindent Let us now consider another example: Heston model 
  \begin{align*}
&\frac{dS_{t}}{S_{t}}= \mu_{t}dt + \sqrt{V_{t}}dW^{1}_{t},\\
& dV_{t}=\lambda_{V}\left(\theta- V_{t}\right) dt + \sigma_{V}\sqrt{V_{t}}\left(\rho dW^{1}_{t}+\sqrt{1-\rho^{2}} dW^{2}_{t}\right), \\
&d\mu_{t}=\lambda_{\mu}\left(\theta_{\mu}-\mu_{t}\right) dt + \sigma_{\mu} dW^{3}_{t},~~~~~~\mu_{0}\rightsquigarrow \mathcal{N}(m_{0},\sigma_{0}),\\
\end{align*}
 Here the risks are given by $\tilde{\mu}_{t}=\dfrac{\mu_{t}}{\sqrt{V_{t}}}$ and $\tilde{\beta}_{t}=\dfrac{\lambda_{V}(\theta-V_{t})}{\sigma_{V}\sqrt{V_{t}}\sqrt{1-\rho^{2}}}-\dfrac{\rho}{\sqrt{1-\rho^{2}}}\tilde{\mu}_{t}.$
 Also here we are in the above situation that is we can describe the dynamics of $\tilde{\mu}_{t}$ and $\tilde{\beta}_{t}$ as in (\ref{dependence-drift}), but assumption $i)$ is not satisfied.

 \vspace{3mm}
Now we give some examples with which proposition (\ref{dependence-drift}) can be applied and therefore we can deduce the filters estimate. 
we will be interested by the stochastic factor Garch model and the stochastic factor Log Ornstein-Uhlenbeck model.

\vspace{3mm}
\noindent \underline{Stochastic factor Garch model:}

\noindent Let us consider the following Garch-model: 
\begin{align*}
&\frac{dS_{t}}{S_{t}}= \sqrt{V_{t}}\left(\mu_{t}dt + dW^{1}_{t}\right),\\
& dV_{t}=\beta_{t}\left(\theta- V_{t}\right) dt + \sigma_{V}V_{t}\left(\rho dW^{1}_{t}+\sqrt{1-\rho^{2}} dW^{2}_{t}\right), \\
&d\mu_{t}=\lambda_{\mu}\left(\theta_{\mu}-\mu_{t}\right) dt + \sigma_{\mu} dW^{3}_{t},~~~~~~\mu_{0}\rightsquigarrow \mathcal{N}(m_{0},\sigma_{0}),\\
& d\beta_{t}= \lambda_{\beta}\beta_{t} dt + \sigma_{\beta} dW^{4}_{t}~~~~~~\beta_{0}\rightsquigarrow \mathcal{N}(m_{1},\sigma_{1}).
\end{align*}
where $W^{3}$ and $W^{4}$ are independent and independent from $W^{1}$ and $W^{2}$ where $\mu_{0}$ and $\beta_{0}$ follow respectively a normal distribution of mean $m_{0}$ (resp.$m_{1}$) and variance $\sigma_{0}$ (resp.$\sigma_{1}$).

Here the risk of the model are given by:
\[
\tilde{\mu}_{t}=\mu_{t}~~~~~~~~\mbox{and}~~~~\tilde{\beta}_{t}=\dfrac{\beta_{t}(\theta-V_{t})}{\sqrt{1-\rho^{2}}V_{t}}-\dfrac{\rho}{\sqrt{1-\rho^{2}}}\tilde{\mu}_{t}.
\]

In order to compute the filters estimate in this case of models, we will be interested by using proposition \ref{dependence-drift}. For that, we need to take $\theta=0$. Because, if we apply It\^{o}'s formula on $\tilde{\mu}_{t}$ and $\tilde{\beta}_{t}$ in the case where $\theta\neq 0$, we obtain a dynamics with coefficients are not Lipschitz, that is, assumption $i)$ is not verify and therefore proposition \ref{dependence-drift} can't be applied. For that we will take $\theta=0$. 
Let $\theta=0$, then from It\^{o}'s formula, we have:
\begin{align*}
d\left(
\begin{array}{c}
\tilde{\mu}_{t} \\
\tilde{\beta}_{t}
\end{array} \right)= A\left(
\begin{array}{c}
\tilde{\mu}_{t} \\
\tilde{\beta}_{t}
\end{array}\right)dt+ G\left(
\begin{array}{c}
\tilde{\mu}_{t} \\
\tilde{\beta}_{t}
\end{array}\right)dM_{t}.
\end{align*}
where the functions $A,G$ and $B$ are given as follows:
\[ A\left(
\begin{array}{c}
m\\
b
\end{array}\right)=\left(
\begin{array}{c}
\lambda_{\mu}(\theta_{\mu}-m)\\
\lambda_{\beta} b +\dfrac{\rho(\lambda_{\beta}+\lambda_{\mu})}{\overline{\rho}} m -\dfrac{\rho\lambda_{\mu}\theta_{\mu}}{\overline{\rho}}
\end{array} \right),
G\left(
\begin{array}{c}
m\\
b
\end{array}\right)=\left(
\begin{array}{cc}
-\dfrac{\rho \sigma_{\mu}}{\overline{\rho}} & 0\\
0 & \sigma_{\beta}(b+\dfrac{\rho}{\overline{\rho}}m)
\end{array} \right).
\]
where $\overline{\rho}=\sqrt{1-\rho^{2}}$ and the function $B$ is null, so we are in the case where the signal process $X_{t}:=(\tilde{\mu}_{t},\tilde{\beta_{t}})$ and the observation processes $Y_{t}:=(\tilde{W}^{1}_{t},\tilde{W}^{2}_{t})$ are independent. This implies that the operator $\mathcal{B}^{1}$ and $\mathcal{B}^{2}$ will disappear in the Zakai and Kushner-Stratonovich equations. As for this model, the assumptions of proposition \ref{dependence-drift} are satisfied, then the conditional distribution $\alpha_{t}$ is given for any $\phi$ by:
\[
d\alpha_{t}(\phi)=\alpha_{t}(\mathcal{A}\phi)dt +\left[\alpha_{t}\left(h^{1}\phi\right)-\alpha_{t}(h^{1})\alpha_{t}(\phi)\right]d\overline{W}_{t}^{1}+\left[\alpha_{t}\left(h^{2}\phi\right)-\alpha_{t}(h^{2})\alpha_{t}(\phi)\right]d\overline{W}_{t}^{2}.
\] 
Here the operator $\mathcal{A}$ is given by (\ref{Operator-A}), where $K=\dfrac{1}{2}GG^{T}$.

\noindent Therefore, the dynamics of the filter estimate are given as follows: 
\begin{align*}
&d\overline{\mu}_{t}= \lambda_{\mu}(\theta_{\mu}-\overline{\mu}_{t})dt + \left(\alpha_{t}(h^{1}\phi_{1})-\overline{\mu}_{t}^{2}\right)d\overline{W}
^{1}_{t}+ \left(\alpha_{t}(h^{2}\phi_{1})-\overline{\beta_{t}}\overline{\mu}_{t}\right)d\overline{W}^{2}_{t},\\
& d\overline{\beta}_{t}= \left(\lambda_{\beta} \overline{\beta}_{t}+\dfrac{\rho(\lambda_{\beta}+ \lambda_{\mu})}{\overline{\rho}}\overline{\mu}_{t}-\dfrac{\rho \lambda_{\mu}\theta_{\mu}}{\overline{\rho}}\right) dt + \left(\alpha_{t}(h^{1}\phi_{2})-\overline{\mu}_{t}\overline{\beta}_{t}\right)d\overline{W}
^{1}_{t}+ \left(\alpha_{t}(h^{2}\phi_{2})-\overline{\beta_{t}}^{2}\right)d\overline{W}^{2}_{t}.
\end{align*}
Numerically, in order to simulate $\alpha_{t}$, we can use the approximation scheme developed by Gobet et al \cite{Gobet} or the extended Kalman filter studied by Pardoux \cite[Chap.6]{pardoux}. 

\vspace{4mm}

Also we consider another example for which we can apply proposition (\ref{dependence-drift}): the stochastic factor Log Ornstein-Uhlenbeck model. 
the special features of this model is not only we can apply proposition (\ref{dependence-drift}), but also  we are in a particular case of the signal-observation system (\ref{signal-process}) where $A,B$ and $G$ are deterministic. So we are in the framework of the classical Kalman-Bucy filter with correlation between the signal and the observation processes, see Pardoux \cite[Chap.6]{Pardoux-stoch} and Kallianpur\cite[Theo 10.5.1]{Kallianpur}. This filter is deduced from the general Kushner-Stratonovich equation (\ref{general-Kushner-equation}), but the advantage of this filter is that it is a finite dimensional filter.  

\vspace{2mm}
\noindent \underline{\textbf{Finite dimensional filter:}} stochastic factor Log Ornstein-Uhlenbeck model

\noindent Let us consider the following Log Ornstein-Uhlenbeck model:
\begin{align}
\label{model-S-app}
&\frac{dS_{t}}{S_{t}}= e^{V_{t}}\left(\mu_{t}dt + dW^{1}_{t}\right)\\
\label{model-V-app}
&  dV_{t}=\lambda_{V}\left(\theta- V_{t}\right) dt + \sigma_{V}\rho dW^{1}_{t}+\sigma_{V}\sqrt{1-\rho^{2}} dW^{2}_{t} \\
& d\mu_{t}=\lambda_{\mu}\left(\theta_{\mu}-\mu_{t}\right) dt + \sigma_{\mu} dW^{3}_{t}. 
\label{model-mu-app}
\end{align}  
Then from the definition of $\tilde{\mu}_{t}$ and $\tilde{\beta}_{t}$ and It\^{o}'s formula, the risks of the system have the following dynamics:
$$d\left(
\begin{array}{c}
\tilde{\mu}_{t} \\
\tilde{\beta}_{t}
\end{array} \right)= \left(A(t)\left(
\begin{array}{c}
\tilde{\mu}_{t} \\
\tilde{\beta}_{t}
\end{array} \right)+b(t)\right)dt+ G(t)d\left(
\begin{array}{c}
\overline{W}^{3}_{t} \\
\overline{W}^{4}_{t}
\end{array}\right)+ B(t) \left(
\begin{array}{c}
\overline{W}^{1}_{t} \\
\overline{W}^{2}_{t}
\end{array} \right).$$
Here:

\[ A=\left(
\begin{array}{cc}
 -\lambda_{\mu}\ & 0\\
\dfrac{\rho[\lambda_{\mu}-\lambda_{V}]}{\overline{\rho}} & -\lambda_{V}
\end{array} \right),~b=\left(
\begin{array}{c}
\lambda_{\mu}\theta_{\mu}\\
-\dfrac{\rho}{\overline{\rho}}\lambda_{\mu}\theta_{\mu}
\end{array} \right),~G=\left(
\begin{array}{cc}
\sigma_{\mu}& 0\\
-\dfrac{\rho}{\overline{\rho}}\sigma_{\mu}& 0
\end{array} \right)~ B=\left(
\begin{array}{cc}
0 & 0 \\
-\dfrac{\rho}{\overline{\rho}}\lambda_{V}  &-\lambda_{V}
\end{array} \right).\]
where $\overline{\rho}=\sqrt{1-\rho^{2}}$.

Therefore using theorem $10.5.1$ in \cite{Kallianpur}, we can deduce the following stochastic differential equations for the filters: 

\begin{align}\label{filter-estimates-application}
d\left(
\begin{array}{c}
\overline{\mu}_{t} \\
\overline{\beta}_{t}
\end{array} \right)= \left(A(t)\left(
\begin{array}{c}
\overline{\mu}_{t} \\
\overline{\beta}_{t}
\end{array} \right)+b(t)\right)dt+\left(B(t)+\Theta_{t}\right)d\left(
\begin{array}{c}
\overline{W}^{1}_{t} \\
\overline{W}^{2}_{t}
\end{array} \right).
\end{align}
Where $\Theta_{t}$ is the conditional covariance matrix ($2\times 2$) of the signal satisfies the following deterministic matrix Ricatti equation:
\begin{equation}\label{Riccati-equation}
d\Theta_{t}=A\Theta_{t}+\Theta_{t}A^{T}+ GG^{T}-\Theta_{t}\Theta_{t}^{T}-\Theta_{t}B^{T}-B\Theta_{t}.
\end{equation}

Also we can consider the case where the mean $\theta$ of the stochastic volatility $V_{t}$ is a linear function of $\mu_{t}$. For example, assume the above dynamics of $(S_{t},V_{t},\mu_{t})$ with $\theta=\mu_{t}$. Therefore, the filters estimate $(\overline{\mu}_{t},\overline{\beta}_{t})$ verifies (\ref{filter-estimates-application}). Here $G$ and $B$ are the same matrix given above, but $A$ and $b$ are given by:
 
  \[ A=\left(
\begin{array}{cc}
 -\lambda_{\mu}\ & 0\\
\dfrac{\rho[\lambda_{\mu}-\lambda_{V}]}{\overline{\rho}}-\dfrac{\lambda_{\mu}\lambda_{V}}{\sigma_{V}\overline{\rho}}& -\lambda_{V}
\end{array} \right),~b=\left(
\begin{array}{c}
\lambda_{\mu}\theta_{\mu}\\
\dfrac{\lambda_{V}}{\sigma_{V}\overline{\rho}}-\dfrac{\rho}{\overline{\rho}}\lambda_{\mu}\theta_{\mu}
\end{array} \right).\]

\begin{remark}
Also, we have the above results about the filters estimate if we  consider the Stien-stein model, where the stock  has the dynamics: $\frac{dS_{t}}{S_{t}}= |V_{t}|\left(\mu_{t}dt + dW^{1}_{t}\right)$ and the stochastic volatility $V_{t}$ and the drift $\mu_{t}$ are given by (\ref{model-V-app}) and (\ref{model-mu-app}).
\end{remark}

\section{Application to portfolio optimization}\label{optimization-section}
Before presenting our results, let us recall that the trader's objective is to solve the following optimization problem: 
\begin{equation}\label{rappel-Optim-prob} 
J(x)=\displaystyle\sup_{\pi\in \mathcal{A}_{t}} \mathbb{E}[U(R_{T}^{\pi})]~~~~ x>0,
\end{equation}
  where  the dynamics of $R_{t}^{\pi}$ in the full information context is given by:
\[
dR_{t}^{\pi} =  R_{t}^{\pi}\pi_{t}\left(g(V_{t})~\overline{\mu}_{t}dt + g(V_{t})d \overline{W}^{1}_{t}\right).
\]
Here $\mathcal{A}_{t}$ is the set of admissible controls $\pi_{t}$ which are $\mathbb{F}^{S}$-adapted process,take their value in a compact $\mathbb{U}\subset \mathbb{R}$, and satisfies the integrability condition: 

\begin{equation}\label{integrability-cond-1}
\displaystyle\int_{t}^{T} g^{2}(V_{s})\pi_{s}^{2} <\infty~~~~~~\mathbb{P}~a.s.
\end{equation}

\vspace{2mm}
 We have showed that using the nonlinear filtering theory, the partial observation portfolio problem is transformed into a full observation one with the additional filter in the dynamic of the wealth, for which one may apply the martingale or PDE approach. 

\smallskip
\noindent Here we will interested by the martingale approach in order to resolve our optimization problem. The motivation to use the martingale approach instead of the PDE approach is that we don't need to impose any constraint on the admissible control (see remark \ref{pas-PDE-approach}).

\smallskip
\noindent As the reduced market model is not complete, due to the stochastic factor $V$, we have to solve the related dual optimization problem. For that, we complement the martingale approach by using the PDE approach in order to solve explicitly the dual problem. For the case of CARA's utility functions, show by verification result, that under some assumptions on the market coefficients, the dual value function and the dual optimizer are related to the solution of a semilinear partial differential equation.

\subsection{\textbf{Martingale approach}}
Before presenting our result concerning the solution of the dual problem, let us begin by reminding some general results about the martingale approach.

\noindent  The martingale approach in incomplete market is based on a dual formulation of the optimization problem in terms of a suitable family of $(\mathbb{P},\mathbb{G})$-local martingales. 
The important result for the dual formulation is the martingale representation 
theorem given in \cite{Pham-Quenez} for $(\mathbb{P},\mathbb{G})$-local martingales with respect to the innovation processes $\overline{W}^{1}$ and $\overline{W}^{2}$. 

 \begin{lemma}[Martingale representation theorem]
Let $A$ be any $(\mathbb{P},\mathbb{G})$-local martingale. Then, there exist a $\mathbb{G}$-adapted processes $\phi$ and $\psi$, $\mathbb{P}$ a.s. square-integrable and such that
\begin{equation}\label{representation}
A_{t}=\displaystyle\int_{0}^{t}\phi_{s}d\overline{W}^{1}_{s}+
\displaystyle\int_{0}^{t}\psi_{s}d\overline{W}^{2}_{s}.
\end{equation}  
\end{lemma}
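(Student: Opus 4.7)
The plan is to transport the martingale representation question from $\mathbb{P}$ to the reference measure $\tilde{\mathbb{P}}$, invoke the classical Brownian martingale representation theorem in the presence of the two independent Brownian motions $(\tilde{W}^1,\tilde{W}^2)$, and push the resulting representation back to $\mathbb{P}$ using the dynamics of $\tilde{L}$ together with It\^o's product rule.

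First, given a $(\mathbb{P},\mathbb{G})$-local martingale $A$, I set $\tilde{A}_t := A_t/\tilde{L}_t$, well defined since $\tilde{L}$ is strictly positive. The conditional Bayes formula applied to $\tilde{L}_t = d\tilde{\mathbb{P}}/d\mathbb{P}|_{\mathcal{G}_t}$ from (\ref{girs-restriction}) immediately yields $\tilde{\mathbb{E}}[\tilde{A}_t|\mathcal{G}_s] = \tilde{A}_s$, so $\tilde{A}$ is a $(\tilde{\mathbb{P}},\mathbb{G})$-local martingale. Lemma \ref{aug-filt} identifies $\mathbb{G}$ with the $\tilde{\mathbb{P}}$-augmented filtration of the two independent $\tilde{\mathbb{P}}$-Brownian motions $(\tilde{W}^1,\tilde{W}^2)$, so the Brownian martingale representation theorem produces $\mathbb{G}$-predictable processes $\tilde{\phi},\tilde{\psi}$, $\tilde{\mathbb{P}}$-a.s.\ square-integrable on $[0,T]$, with
\begin{equation*}
\tilde{A}_t = \tilde{A}_0 + \int_0^t \tilde{\phi}_s\, d\tilde{W}^1_s + \int_0^t \tilde{\psi}_s\, d\tilde{W}^2_s.
\end{equation*}

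Next, I reverse the change of measure. Proposition \ref{innovation-proc} gives $d\tilde{W}^1_t = d\overline{W}^1_t + \overline{\mu}_t\,dt$ and $d\tilde{W}^2_t = d\overline{W}^2_t + \overline{\beta}_t\,dt$, which identifies $\tilde{L}$ as the $(\mathbb{P},\mathbb{G})$ exponential martingale
\begin{equation*}
d\tilde{L}_t = -\tilde{L}_t\bigl(\overline{\mu}_t\, d\overline{W}^1_t + \overline{\beta}_t\, d\overline{W}^2_t\bigr).
\end{equation*}
Applying It\^o's product rule to $A = \tilde{A}\,\tilde{L}$, the drift produced when rewriting $d\tilde{A}$ in terms of $d\overline{W}^i$ (namely $\tilde{L}(\tilde{\phi}\,\overline{\mu}+\tilde{\psi}\,\overline{\beta})\,dt$) cancels exactly the cross-variation $d\langle \tilde{A},\tilde{L}\rangle_t = -\tilde{L}_t(\tilde{\phi}\,\overline{\mu}+\tilde{\psi}\,\overline{\beta})\,dt$, and collecting martingale parts leaves
\begin{equation*}
dA_t = \tilde{L}_t\bigl(\tilde{\phi}_t - \tilde{A}_t\,\overline{\mu}_t\bigr)\, d\overline{W}^1_t + \tilde{L}_t\bigl(\tilde{\psi}_t - \tilde{A}_t\,\overline{\beta}_t\bigr)\, d\overline{W}^2_t.
\end{equation*}
Setting $\phi_t,\psi_t$ equal to these two $\mathbb{G}$-predictable coefficients yields the representation (\ref{representation}).

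The main obstacle I expect concerns integrability and localization: one has to pick a single localizing sequence for which $A^{\tau_n}$ is a true $\mathbb{P}$-martingale and $\tilde{A}^{\tau_n}$ is simultaneously well defined as a true $\tilde{\mathbb{P}}$-martingale, and verify that the resulting $\phi,\psi$ are $\mathbb{P}$-a.s.\ square-integrable on $[0,T]$, starting only from the $\tilde{\mathbb{P}}$-a.s.\ square-integrability of $\tilde{\phi},\tilde{\psi}$. Because $\tilde{L}$ is strictly positive and the innovations $\overline{W}^1,\overline{W}^2$ are orthogonal under $\mathbb{P}$, this is essentially routine, but it is where a fully rigorous write-up would need care.
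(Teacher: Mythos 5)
Your argument is correct and is essentially the proof the paper implicitly relies on: the paper gives no proof of its own but defers to Pham and Quenez \cite{Pham-Quenez}, whose argument is exactly this transfer to $\tilde{\mathbb{P}}$ via $\tilde{L}$, Brownian representation with respect to $(\tilde{W}^{1},\tilde{W}^{2})$ using Lemma \ref{aug-filt}, and It\^o's product rule on $A=\tilde{A}\,\tilde{L}$ to return to the innovations $\overline{W}^{1},\overline{W}^{2}$. The localization and square-integrability issues you flag are indeed the only points needing care, and they are handled routinely by the equivalence of $\mathbb{P}$ and $\tilde{\mathbb{P}}$ together with the pathwise continuity of $\tilde{L}$ and $\tilde{A}$ and the assumption $\int_{0}^{T}\overline{\mu}_{t}^{2}\,dt<\infty$ a.s.
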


\noindent Now, we aim to describe the dual formulation of the optimization problem. 
We now make the following assumption which will be useful in the sequel: 
\begin{equation}\label{assumption-density}
\displaystyle\int_{0}^{T}\overline{\mu}_{t}^{2}dt < \infty,~~~~~~\displaystyle\int_{0}^{T}\nu_{t}^{2}dt < \infty~~~~\mathbb{P}-a.s.
\end{equation}
\noindent For any $\mathbb{G}$-adapted process $\nu=\{\nu_{t},~0 \leq t \leq T\}$, which satisfies (\ref{assumption-density}), we introduce the $\left(\mathbb{P},\mathbb{G}\right)$-local martingale
strictly positive:
\begin{equation}\label{exp-EMM}
Z_{t}^{\nu}=\exp\left(-\displaystyle\int_{0}^{t}\overline{\mu}_{s}d\overline{W}^{1}_{s}-
\displaystyle\int_{0}^{t}\nu_{s}d\overline{W}^{2}_{s}
-\frac{1}{2}\displaystyle\int_{0}^{t}\overline{\mu}_{s}^{2} ds-\frac{1}{2}\displaystyle\int_{0}^{t}\nu_{s}^{2} ds\right)
\end{equation}

\noindent When, $\mathbb{E}\left[Z_{T}^{\nu}\right]=1$, the process $Z$ is a martingale and then there exists a probability measure $\mathbb{Q}$ equivalent to $\mathbb{P}$ with:
\begin{equation*}
\frac{d\mathbb{Q}}{d\mathbb{P}}|_{\mathcal{G}_{t}}=Z_{T}^{\nu}.
\end{equation*}

\noindent Here $\overline{\mu}$ is the risk related to the asset's Brownian motion $W^{1}$, which is chosen such that $Q$ is a equivalent martingale measure, that is, the process $Z^{\nu} R$ is a $\left(\mathbb{P},\mathbb{G}\right)$-local martingale. On the other hand, $\nu$ is the risk related to the stochastic volatility's Brownian motion and this risk will be determined as the optimal solution of the dual problem defined below. 

\noindent Consequently, from It\^{o}'s formula, the process $Z^{\nu}$ satisfies:
\begin{equation}\label{risk-process}
dZ_{t}^{\nu}=-Z_{t}^{\nu}\left(\overline{\mu}_{s}d\overline{W}_{s}^{1}+\nu_{s}d\overline{W}^{2}_{s}\right).
\end{equation}  

\vspace{5mm}
\noindent As shown by Karatzas et al \cite{Ioannis and Lehoczky}, the solution of the primal problem (\ref{rappel-Optim-prob}) relying upon solving the dual optimization problem:

\begin{equation}\label{dual-problem}
J_{dual}(z)=\displaystyle\inf_{\mathbb{Q}\in \mathcal{Q}} \mathbb{E}\left[\tilde{U}(z\frac{d\mathbb{Q}}{d\mathbb{P}})\right]:=\inf_{\nu \in \mathcal{K}}\mathbb{E}\left[\tilde{U}(zZ_{T}^{\nu})\right] ,~~~~~~~~z>0
\end{equation}

\noindent Where:

\begin{itemize}
\item $\mathcal{Q}$ is the set of equivalent martingale measures given by:
\begin{equation}\label{equivalent-measure}
\mathcal{Q}=\{\mathbb{Q}\sim \mathbb{P}|~R~\mbox{is a local}~ (\mathbb{Q},\mathbb{G})-\mbox{martingale}\}.
\end{equation}
\item   $\tilde{U}$ is the convex dual of $U$ given by:
\begin{equation}\label{convex}
\tilde{U}(y)=\displaystyle\sup_{m>0}\left[U(m)-ym\right],~~~~~~~~m>0.
\end{equation}
\item  $\mathcal{K}$ is the Hilbert space of $\mathbb{G}$-adapted process $\nu$ such that $\mathbb{E}\left[\displaystyle \int_{0}^{T}|\nu_{t}^{2}|dt \right] < \infty$.
\end{itemize}

\noindent We henceforth impose the following assumptions on the utility functions in order to guarantee that the dual problem admits a solution $\tilde{\nu} \in \mathcal{K}$:

\begin{assumption}\label{existence-dual}
\begin{itemize}

\item For some $p\in(0,1), \gamma\in(1,\infty)$, we have
\begin{equation*}
p U'(x)\geq U'(\gamma x)~~~~~~~~\forall x \in (0,\infty).
\end{equation*}
\item $x\to xU'(x)$ is nondecreasing on $(0,\infty)$.
\item For every $z \in (0,\infty)$, there exists $\nu \in \mathcal{K}$ such that $\tilde{J}(z)< \infty$.
\end{itemize}
\end{assumption}
By same arguments as in theorem $12.1$ in  Karatzas et al \cite{Ioannis and Lehoczky}, we have existence to the dual problem (\ref{dual-problem}). 
 \begin{proposition}
 Under assumption \ref{existence-dual}, for all $z>0$, the dual problem (\ref{dual-problem}) admits a solution $\tilde{\nu}(z) \in \mathcal{K}$
 \end{proposition}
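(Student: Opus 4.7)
The plan is to adapt the classical existence proof from Theorem $12.1$ in Karatzas--Lehoczky--Shreve to the present setting, the only thing to check being that the three bullets of Assumption~\ref{existence-dual} play exactly the same role as the corresponding hypotheses there. I would take a minimizing sequence $(\nu_n)_{n\ge 1}\subset \mathcal{K}$ with $\mathbb{E}[\tilde U(zZ_T^{\nu_n})]\to J_{dual}(z)$, and aim to extract a limit $\tilde\nu\in\mathcal{K}$ attaining the infimum. Note that by definition $\mathcal{K}$ is a Hilbert space (the natural $L^2(dt\otimes d\mathbb{P})$ space of $\mathbb{G}$-adapted processes) and the convex dual $\tilde U$ is convex and decreasing on $(0,\infty)$.

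The first step is to establish tightness/boundedness of the minimizing sequence. The asymptotic elasticity condition $pU'(x)\ge U'(\gamma x)$ transfers, via the Legendre--Fenchel transform, into a polynomial growth control on $\tilde U$ at $0$ and at $\infty$; together with the third bullet (finiteness of $\tilde J$ at some admissible $\nu$) this forces the family $\{zZ_T^{\nu_n}\}$ to stay in a uniformly integrable set. The monotonicity of $x\mapsto xU'(x)$ translates into a convexity property of $y\mapsto \tilde U(y)$ used to ensure the functional $\nu\mapsto \mathbb{E}[\tilde U(zZ_T^\nu)]$ is convex on $\mathcal{K}$. Combining these, one obtains that the sequence $(\nu_n)$ is norm-bounded in $\mathcal{K}$.

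The second step is weak extraction and passage to the limit. By reflexivity of $\mathcal{K}$, pass to a weakly convergent subsequence $\nu_{n_k}\rightharpoonup \tilde\nu\in\mathcal{K}$. By Mazur's lemma one can select a sequence of convex combinations $\hat\nu_k=\sum_{j\ge k}\lambda_j^k \nu_{n_j}$ converging strongly in $\mathcal{K}$, hence $dt\otimes d\mathbb{P}$-a.e.\ along a further subsequence, to $\tilde\nu$. Convexity of the exponential functional together with Fatou's lemma (applied to the nonnegative convex $\tilde U$, after handling its sign/growth via the first bullet of Assumption~\ref{existence-dual}) then yields
\[
\mathbb{E}[\tilde U(zZ_T^{\tilde\nu})]\le \liminf_{k\to\infty}\mathbb{E}[\tilde U(zZ_T^{\hat\nu_k})]\le \liminf_{k\to\infty}\mathbb{E}[\tilde U(zZ_T^{\nu_{n_k}})]=J_{dual}(z),
\]
so that $\tilde\nu$ is a minimizer. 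One should also verify that $\tilde\nu$ satisfies the integrability condition (\ref{assumption-density}) so that $Z_T^{\tilde\nu}$ is well defined as an exponential local martingale; this follows from $\tilde\nu\in\mathcal{K}\subset L^2(dt\otimes d\mathbb{P})$ and the a.s.\ finiteness of $\int_0^T\overline{\mu}_t^2\,dt$ already recorded in (\ref{assumption-density}).

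The step I expect to be delicate is the continuity/lower semicontinuity of $\nu\mapsto Z_T^\nu$ along the convex combinations: $Z_T^\nu$ is a stochastic exponential, so one has to pass $\mathcal{K}$-convergence through the It\^o integral $\int_0^T\nu_s\,d\overline W_s^2$ and then through the exponential. This is done by the standard route of showing convergence in probability of the stochastic integrals (via the BDG inequality and the $L^2(dt\otimes d\mathbb{P})$ convergence of $\hat\nu_k\to\tilde\nu$) and then using continuous mapping; the asymptotic elasticity bound on $\tilde U$ then provides enough domination to apply Fatou and conclude. All the remaining arguments are routine convex-analytic manipulations identical to those in Karatzas--Lehoczky--Shreve, so once these verifications are in place the conclusion follows.
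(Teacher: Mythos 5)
The paper does not actually write out a proof: it simply invokes ``the same arguments as in Theorem 12.1 of Karatzas et al.'', which is exactly the argument you reconstruct (minimizing sequence, convexity of $\nu\mapsto\mathbb{E}[\tilde U(zZ_T^\nu)]$ via the monotonicity of $x\mapsto xU'(x)$, uniform integrability from the elasticity condition $pU'(x)\ge U'(\gamma x)$, Mazur/Koml\'os-type convex combinations, and Fatou). Your proposal is therefore correct and follows essentially the same route as the paper, only spelled out in more detail than the paper itself provides.
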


\noindent In the sequel, we denote by $I:]0,\infty[\rightarrow]0,\infty[$ the inverse function of $U'$ on $]0,\infty[$. It's a decreasing function and verifies $\displaystyle\lim_{x\to 0 } I(x)=\infty $ and $\displaystyle\lim_{x\to\infty}I(x)=0$.

\noindent Now from Karatzas et al \cite{Ioannis and Lehoczky} and Owen \cite{Owen}, we have the following result about the solution of the primal utility maximization problem (\ref{Optim-prob-1}).  

\begin{theorem}\label{optimal-wel}
The optimal wealth for the utility maximization problem (\ref{Optim-prob-1}) is given by 
\[
\tilde{R}_{t}=\mathbb{E}\left[\frac{Z_{T}^{\tilde{\nu}}}{Z_{t}^{\tilde{\nu}}}I(z_{x}Z_{T}^{\tilde{\nu}})|\mathcal{G}_{t}\right]
\]
where $ \tilde{\nu} =\tilde{\nu}(z_{x})$ is the solution of the dual problem and $z_{x}$ is the Lagrange multiplier such that $\mathbb{E}\left[Z_{T}^{\tilde{\nu}}I(z_{x}Z_{T}^{\tilde{\nu}})\right]=x$. Also the optimal portfolio $\tilde{\pi}$ is implicitly determined by the equation
\begin{equation}\label{optimal-portf}
d\tilde{R}_{t}=\tilde{\pi}_{t}g(V_{t})d\tilde{W}^{1}_{t}.
\end{equation}
\end{theorem}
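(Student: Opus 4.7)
The plan is to follow the classical convex-duality pipeline (Karatzas--Lehoczky--Shreve--Xu, Owen) adapted to the reduced full-observation market $(Q)$ and to the innovation Brownian motions $(\overline{W}^1,\overline{W}^2)$. The starting point is the Fenchel inequality $U(m)\le \tilde U(y) + ym$ for $m,y>0$, with equality iff $m = I(y)$. Applying it with $m = R_T^\pi$ and $y = z Z_T^\nu$ and taking expectations gives, for any admissible $\pi$ and any $\nu\in\mathcal{K}$,
\[
\mathbb{E}[U(R_T^\pi)] \le \mathbb{E}[\tilde U(z Z_T^\nu)] + z\,\mathbb{E}[R_T^\pi Z_T^\nu].
\]

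First I would establish the budget inequality $\mathbb{E}[R_T^\pi Z_T^\nu]\le x$ for every admissible $\pi$ and every $\nu$ whose $Z^\nu$ is a true martingale. An It\^o-product computation combining the wealth dynamics in $(Q)$ with \eqref{risk-process} shows that the finite-variation part of $R^\pi Z^\nu$ vanishes, so $R^\pi Z^\nu$ is a non-negative $(\mathbb{P},\mathbb{G})$-local martingale, hence a supermartingale starting at $x$. Inserting this into the Fenchel bound and taking $\inf_{\nu\in\mathcal{K}}$ yields the weak duality $J(x)\le J_{\text{dual}}(z)+zx$ for every $z>0$.

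Second, I would produce a candidate primal optimizer saturating the bound. Existence and uniqueness of $z_x$ with $\mathbb{E}[Z_T^{\tilde\nu}I(z_x Z_T^{\tilde\nu})]=x$ follows from the monotonicity of $I$ and the limits $I(0^+)=\infty$, $I(\infty)=0$ stated before the theorem; here $\tilde\nu=\tilde\nu(z_x)$ is the dual optimizer provided by the preceding proposition. Define $\tilde R_t$ by the stated formula, which by the conditional Bayes rule equals $\mathbb{E}^{\mathbb{Q}_{\tilde\nu}}[I(z_x Z_T^{\tilde\nu})\mid\mathcal{G}_t]$. By construction, $\tilde R_0=x$, $\tilde R_T = I(z_x Z_T^{\tilde\nu})$, and $\tilde R\,Z^{\tilde\nu}$ is a $(\mathbb{P},\mathbb{G})$-martingale. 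Combined with the pointwise Fenchel equality $U(\tilde R_T) = \tilde U(z_x Z_T^{\tilde\nu}) + z_x Z_T^{\tilde\nu}\tilde R_T$, this turns the duality chain into equality and proves that $\tilde R$ is primal-optimal.

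Third, I would recover the portfolio via the martingale representation theorem \eqref{representation}. Applying it to the $(\mathbb{P},\mathbb{G})$-martingale $\tilde R\,Z^{\tilde\nu}$ gives $d(\tilde R_t Z_t^{\tilde\nu}) = \phi_t\,d\overline{W}^1_t + \psi_t\,d\overline{W}^2_t$ for some $\mathbb{G}$-adapted $\phi,\psi$. Applying It\^o's formula to $\tilde R = (\tilde R\,Z^{\tilde\nu})/Z^{\tilde\nu}$ together with \eqref{risk-process}, one extracts
\[
d\tilde R_t = \Bigl(\tfrac{\phi_t}{Z_t^{\tilde\nu}}+\tilde R_t\,\overline{\mu}_t\Bigr)d\overline{W}^1_t + \Bigl(\tfrac{\psi_t}{Z_t^{\tilde\nu}}+\tilde R_t\,\tilde\nu_t\Bigr)d\overline{W}^2_t + (\text{drift}).
\]
Comparing with the admissible wealth SDE $dR_t = R_t\pi_t g(V_t)(\overline{\mu}_t\,dt + d\overline{W}^1_t)$, the $d\overline{W}^2_t$-coefficient must vanish, and then $\tilde\pi_t$ is read off from the $d\overline{W}^1_t$-coefficient; the stated compact form $d\tilde R_t = \tilde\pi_t g(V_t)\,d\tilde W^1_t$ follows from $d\tilde W^1_t = d\overline{W}^1_t + \overline{\mu}_t\,dt$.

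The main obstacle is the vanishing of the $\overline{W}^2$-coefficient in the third step, which is the place where incompleteness genuinely enters. Concretely one must show that, at the dual optimum $\tilde\nu$, $\psi_t/Z_t^{\tilde\nu} + \tilde R_t\,\tilde\nu_t \equiv 0$, i.e.\ $\tilde R$ is replicable by trading in $S$ alone. This is a first-order optimality (Euler-type) condition for the convex dual functional $\nu\mapsto \mathbb{E}[\tilde U(z_x Z_T^\nu)]$ on $\mathcal{K}$: perturbing $\tilde\nu$ along an arbitrary direction of $\mathcal{K}$, using $\tilde U'(y)=-I(y)$ and the explicit form \eqref{exp-EMM} of $Z_T^\nu$, the derivative of the dual functional produces exactly the $\overline{W}^2$-component of $\tilde R$ under $\mathbb{Q}_{\tilde\nu}$, whose vanishing is forced by optimality. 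This is the analytical heart of the theorem; once it is established, the portfolio formula and the conditional-expectation representation of $\tilde R_t$ follow cleanly from the previous steps.
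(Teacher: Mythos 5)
Your proposal is correct and follows essentially the same route the paper intends: the paper gives no proof of its own for this theorem, simply invoking Karatzas et al.\ and Owen, and your three steps (Fenchel inequality plus the supermartingale budget constraint for $R^{\pi}Z^{\nu}$, saturation via $\tilde R_T=I(z_xZ_T^{\tilde\nu})$ with $\tilde R Z^{\tilde\nu}$ a martingale, and the first-order/Euler condition for the dual problem forcing the $\overline{W}^2$-component to vanish so that $\tilde R$ is replicable) are precisely the argument of those references. The only point stated too quickly is the existence of $z_x$: since $\tilde\nu=\tilde\nu(z)$ itself varies with $z$, monotonicity of $I$ alone does not give it; one should instead use the conjugacy characterization $z_x\in\mathrm{argmin}_{z>0}\{J_{dual}(z)+xz\}$ of Remark \ref{Lagrange}, as in Theorem $12.1$ of Karatzas et al.
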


\begin{remark}\label{Lagrange}
The constraint  $\mathbb{E}\left[Z_{T}^{\tilde{\nu}}I(z_{x}Z_{T}^{\tilde{\nu}})\right]=x$ to choose $z_{x}$ is satisfied if
\begin{equation}\label{condition-zx}
z_{x}\in argmin_{z>0}\{J_{dual}(z)+xz\}.
\end{equation} 
\end{remark}

\vspace{3mm}
Now we begin by presenting our results about the solution of the dual problem. 
\subsubsection{Solution of the dual problem (\ref{dual-problem})}

\vspace{3mm}
\noindent We remark from theorem \ref{optimal-wel} that optimal wealth depends on the optimal choice of $\nu$. So we are interested in the following by finding the optimal risk $\nu$ which is solution of (\ref{dual-problem}). 

\vspace{2mm}
\noindent Here we present two cases. Firstly, we show that in the case when the filter estimate of the price risk $\overline{\mu}_{t}\in \mathcal{F}_{t}^{\tilde{W}^{1}}$, the infimum  of the dual problem is reached for $\tilde{\nu}=0$. Secondly, for the general case, the idea is to derive a Hamilton-Jacobi-Bellman equation for dual problem, which involves the volatility risk $\nu$ as control process. 

\smallskip\smallskip

\begin{lemma}\label{EMM-opt}
Assume that $\overline{\mu}_{t}\in \mathcal{F}_{t}^{\tilde{W}^{1}}$, then the infimum of the dual problem is reached for $\tilde{\nu}=0$, that is: 
\begin{equation}
J_{dual}(z)= \displaystyle\inf_{\mathbb{Q}\in \mathcal{Q}} \mathbb{E}\left[\tilde{U}(z\frac{d\mathbb{Q}}{d\mathbb{P}})\right] =\mathbb{E}\left[\tilde{U}\left(z Z_{T}^{0}\right)\right].
\end{equation} 
\end{lemma}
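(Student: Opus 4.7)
The plan is to show that $\mathbb{E}[\tilde{U}(zZ_{T}^{\nu})]\ge\mathbb{E}[\tilde{U}(zZ_{T}^{0})]$ for every admissible $\nu\in\mathcal{K}$, with equality at $\nu\equiv 0$, by combining a multiplicative factorisation of the dual density with a conditional Jensen inequality that exploits the independence of the innovation Brownian motions $\overline{W}^{1}$ and $\overline{W}^{2}$.

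First, writing (\ref{exp-EMM}) out and using $\langle\overline{W}^{1},\overline{W}^{2}\rangle\equiv 0$, the density factorises as $Z_{T}^{\nu}=Z_{T}^{0}\,N_{T}^{\nu}$, where $N_{T}^{\nu}=\exp\bigl(-\int_{0}^{T}\nu_{s}d\overline{W}_{s}^{2}-\tfrac{1}{2}\int_{0}^{T}\nu_{s}^{2}ds\bigr)$ involves only $\overline{W}^{2}$. From $\overline{W}_{t}^{1}=\tilde{W}_{t}^{1}-\int_{0}^{t}\overline{\mu}_{s}ds$ and the hypothesis $\overline{\mu}\in\mathbb{F}^{\tilde{W}^{1}}$, the filtrations $\mathbb{F}^{\tilde{W}^{1}}$ and $\mathbb{F}^{\overline{W}^{1}}$ coincide, so $Z_{T}^{0}$ is $\mathcal{F}_{T}^{\overline{W}^{1}}$-measurable.

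Conditioning on $\mathcal{F}_{T}^{\overline{W}^{1}}$ and noting that $\tilde{U}$, being the convex conjugate of the strictly concave $U$, is convex (and strictly decreasing since $\tilde{U}'=-I$), I invoke conditional Jensen:
\[
\mathbb{E}[\tilde{U}(zZ_{T}^{\nu})\mid\mathcal{F}_{T}^{\overline{W}^{1}}]\;\ge\;\tilde{U}\bigl(zZ_{T}^{0}\,\mathbb{E}[N_{T}^{\nu}\mid\mathcal{F}_{T}^{\overline{W}^{1}}]\bigr),
\]
where the factor $zZ_{T}^{0}$ has been pulled inside the argument because it is $\mathcal{F}_{T}^{\overline{W}^{1}}$-measurable. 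The crux is then to show that $\mathbb{E}[N_{T}^{\nu}\mid\mathcal{F}_{T}^{\overline{W}^{1}}]\le 1$: since $\overline{W}^{1}$ and $\overline{W}^{2}$ are independent $(\mathbb{P},\mathbb{G})$-Brownian motions, the initial enlargement of the filtration by $\mathcal{F}_{T}^{\overline{W}^{1}}$ preserves the Brownian character of $\overline{W}^{2}$, so $N^{\nu}$ remains a non-negative local martingale (hence a supermartingale) in the enlarged filtration, and the bound follows.

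Combining this with the monotonicity of $\tilde{U}$, which gives $\tilde{U}(zZ_{T}^{0}\xi)\ge\tilde{U}(zZ_{T}^{0})$ whenever $0\le\xi\le 1$, and taking unconditional expectations yields $\mathbb{E}[\tilde{U}(zZ_{T}^{\nu})]\ge\mathbb{E}[\tilde{U}(zZ_{T}^{0})]$; the lower bound is attained at $\nu\equiv 0$, proving the lemma. The principal obstacle is the rigorous justification of the conditional supermartingale property, since the control $\nu$ is $\mathbb{G}$-adapted and may depend on the past of $\overline{W}^{1}$; this reduces precisely to checking that initial enlargement by the entire $\mathcal{F}_{T}^{\overline{W}^{1}}$ does not destroy the martingale property of $\overline{W}^{2}$, which is exactly the independence between the two innovation Brownian motions granted by Proposition~\ref{innovation-proc}.
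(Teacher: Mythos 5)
Your overall strategy --- factorize $Z_T^{\nu}=Z_T^{0}N_T^{\nu}$, condition, and apply conditional Jensen to the convex $\tilde{U}$ --- is the same as the paper's. Where you diverge is in the treatment of the conditional expectation of the $\overline{W}^{2}$-exponential: the paper proves $\mathbb{E}[N_T^{\nu}\mid\mathcal{F}_T^{\tilde{W}^{1}}]=1$ by the change of measure $d\mathbb{P}^{\nu}/d\mathbb{P}=N_T^{\nu}$ and the observation that, since $\overline{W}^{1}$ remains a Brownian motion under $\mathbb{P}^{\nu}$ and $\overline{\mu}_t\in\mathcal{F}_t^{\tilde{W}^{1}}$, the law of $\tilde{W}^{1}$ is unchanged; you instead settle for the bound $\le 1$ via a supermartingale argument in an initially enlarged filtration and then invoke the fact that $\tilde{U}$ is decreasing. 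Your route has the genuine advantage of not requiring $N^{\nu}$ to be a true martingale (the paper tacitly assumes $\mathbb{E}[N_T^{\nu}]=1$ in order to define $\mathbb{P}^{\nu}$, which is not guaranteed by $\nu\in\mathcal{K}$ alone), and the enlargement step itself is sound: since $(\overline{W}^{1},\overline{W}^{2})$ is a two-dimensional $(\mathbb{P},\mathbb{G})$-Brownian motion, increments of $\overline{W}^{2}$ after time $s$ are independent of $\mathcal{G}_s\vee\sigma(\overline{W}^{1}_u,\,u\le T)$, so $N^{\nu}$ is a nonnegative local martingale, hence a supermartingale, for $\mathcal{H}_t=\mathcal{G}_t\vee\mathcal{F}_T^{\overline{W}^{1}}$.

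The gap is in the measurability step that lets you pull $zZ_T^{0}$ inside $\tilde{U}$. You condition on $\mathcal{F}_T^{\overline{W}^{1}}$ and justify $Z_T^{0}\in\mathcal{F}_T^{\overline{W}^{1}}$ by asserting that $\mathbb{F}^{\tilde{W}^{1}}$ and $\mathbb{F}^{\overline{W}^{1}}$ coincide. But the relation $\overline{W}^{1}_t=\tilde{W}^{1}_t-\int_0^t\overline{\mu}_s\,ds$ together with $\overline{\mu}\in\mathbb{F}^{\tilde{W}^{1}}$ only yields $\mathbb{F}^{\overline{W}^{1}}\subseteq\mathbb{F}^{\tilde{W}^{1}}$; the inclusion you actually need, $\mathbb{F}^{\tilde{W}^{1}}\subseteq\mathbb{F}^{\overline{W}^{1}}$, is the innovations problem for the equation $\tilde{W}^{1}_t=\overline{W}^{1}_t+\int_0^t F_s(\tilde{W}^{1})\,ds$ and requires pathwise uniqueness or strong solvability, which is neither assumed nor automatic for a general progressive functional $F$. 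If you instead condition on $\mathcal{F}_T^{\tilde{W}^{1}}$ (as the paper does), the measurability of $Z_T^{0}$ is immediate, but then your independence argument no longer applies verbatim: you must show that $\overline{W}^{2}$ remains a Brownian motion after adjoining $\mathcal{F}_T^{\tilde{W}^{1}}$, and the future of $\tilde{W}^{1}$ involves the $\mathbb{G}$-adapted drift $\overline{\mu}$, so this reduces to the same innovations issue --- precisely what the paper's Girsanov computation is designed to circumvent. Either prove the required inclusion under the lemma's hypothesis, or replace the supermartingale step by the law-invariance argument for the conditional expectation.
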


\begin{proof}
See Appendix A.
\end{proof}

\vspace{4mm}
%\subsection{Solution of the dual problem in the general case }
\noindent Generally, the filter estimate of the price risk doesn't satisfy lemma \ref{EMM-opt} and therefore it's a difficult problem to derive an explicit characterization for the solution of the dual problem and therefore for the optimal wealth and portfolio. For that, we need to present the dual problem as a stochastic control problem with controlled process $Z_{t}^{\nu}$ and control process $\nu$.

\vspace{4mm}
\noindent  Firstly, from the underlying dynamics of $Z_{t}^{\nu}$, we notice that our optimization problem $\ref{dual-problem}$ has three state variables which will be take in account to describe the associated Hamilton-Jaccobi-Belleman equation: the dynamic (\ref{risk-process}) of $Z_{t}^{\nu}$, the dynamic of the stochastic volatility $(V_{t})$ which is given in system $(Q)$ and the dynamic of the filter estimate of the price risk $\overline{\mu}_{t}$.

\begin{remark}\label{choice of beta}
We have showed in filtering section, that the filter estimate $\overline{\mu}_{t}$ satisfies a stochastic differential equation which in general is infinite dimensional and is not a Markov process. Therefore, we can't use it to describe the HJB.
On the other hand, we have also showed that for some models of stochastic volatility models, we can obtain a finite dimensional stochastic differential equation for $\overline{\mu}_{t}$ which is also a Markov process. So in the sequel, we will assume that the filter $\mu_{t}$ is Markov.

On the other hand, we need in general to take in account the dynamics of $\overline{\mu}_{t}$ and $\overline{\beta_{t}}$.  But for simplification, we will consider $\beta_{t}$ as a linear function of $\mu_{t}$ or a constant. Also for this choice of $\beta_{t}$, we can obtain, due to the separation technique used in proposition (\ref{explicit-form-exmaple}), a closed form for the value function and the optimal portfolio.
\end{remark}

In the following, we assume that $\overline{\mu}_{t}$ is Markov. So for initial time $t\in[0,T]$ and for fixed $z$, the dual value function is defined by the following stochastic control problem:  
\begin{equation}\label{dual-control-problem}
\tilde{J}(z,t,\overline{z},v,m):=\inf_{\nu\in \mathcal{K}}\mathbb{E}\left[\tilde{U}( z Z_{T}^{\nu})|Z_{t}^{\nu}=\overline{z},V_{t}=v,\overline{\mu}_{t}=m\right].
\end{equation}

\noindent Where the dynamics of $(Z_{t}^{\nu},V_{t},\overline{\mu}_{t})$ are given as follows:
\begin{align*}
&dZ_{t}^{\nu}=-Z_{t}^{\nu} \overline{\mu}_{s}d\overline{W}_{s}^{1} -Z_{t}^{\nu}\nu_{s}d\overline{W}^{2}_{s}\\
&dV_{t}=f(\overline{\mu}_{t},V_{t}) dt +\rho k(V_{t})d\overline{W}_{t}^{1}+\sqrt{1-\rho^{2}}k(V_{t})d\overline{W}^{2}_{t}\\ 
&d\overline{\mu}_{t}=\tau(\overline{\mu}_{t})dt + \vartheta(\overline{\mu}_{t})d\overline{W}_{t}^{1}+\Upsilon(\overline{\mu}_{t})
 d\overline{W}_{t}^{2}.
\end{align*}
where $f$ is a linear function. 

\smallskip
\noindent Remark that the dual value function in (\ref{dual-problem}) is simply deduced from $J_{dual}(z)=J_{dual}(z,0,\overline{z},v,m)$.

\vspace{3mm}

If we assume that $Y_{t}=(V_{t},\overline{\mu}_{t})$ be a bi-dimensional process, then the controlled process $(Z_{t}^{\nu},Y_{t})$ satisfies the following dynamics:
\begin{align}\label{general-dynamic-R-dual}
&dZ_{t}^{\nu}=-Z_{t}^{\nu} \psi(Y_{s})d\overline{W}_{s}^{1} -Z_{t}^{\nu}\nu_{s}d\overline{W}^{2}_{s}\\
&dY_{t}=\Gamma(Y_{t})dt +\Sigma(Y_{t})dW_{t}
\label{general-dynamic-Y-dual}
\end{align} 
where $W_{t}=(\overline{W}^{1}_{t},\overline{W}_{t}^{2})$ is a bi-dimensional Brownian motion, and for $y=(v,m)$, we have:
\begin{align*}
\psi(y)=m,~~\Gamma(y)= \left(
\begin{array}{c}
f(m,v) \\
\tau(m)
\end{array} \right)
\end{align*}and 

\[\Sigma(y)= \left(
\begin{array}{cc}
\rho k(v) & \sqrt{1-\rho^{2}}~k(v)\\
\vartheta(m) & \Upsilon(m)
\end{array} \right).\]

Then we have the new reformulation of the above stochastic problem (\ref{general-optimization}) and its HJB equation as follows: 
\begin{equation}\label{-refo-dual-control-problem}
J_{dual}(z,t,\overline{z},y):=\inf_{\nu\in \mathcal{K}}\mathbb{E}\left[\tilde{U}( z Z_{T}^{\nu})|Z_{t}^{\nu}=\overline{z},Y_{t}=y\right].
\end{equation}

\vspace{3mm}
Now assuming that $\tilde{U}$ satisfies the following property: 
\begin{equation}\label{property}
\tilde{U}(\lambda x)= g_{1}(\lambda)\tilde{U}(x)+g_{2}(\lambda), 
\end{equation}
for $\lambda>0$~and for any functions $g_{1}$ and $g_{2}$.  

\smallskip\smallskip
\noindent The special advantage of this assumption is: we can solve the dual problem (\ref{dual-problem}) independently of $z$. In general, a solution to the dual problem (\ref{dual-problem}) depends on $z$, but for this type of $\tilde{U}$ this dependence vanishes. Then (\ref{dual-problem}) reads as follows:  
\[J_{dual}(z)= 
g_{1}(z)\displaystyle\inf_{\nu\in \mathcal{K}}\mathbb{E}\left[\tilde{U}(Z_{T}^{\nu})|Z_{t}^{\nu}=\overline{z},Y_{t}=y\right]+g_{2}(z).
\] 
Let us now denote 
\begin{equation}\label{new-dual}
\tilde{J}(t,\overline{z},y)=\displaystyle\inf_{\nu\in \mathcal{K}}\mathbb{E}\left[\tilde{U}(Z_{T}^{\nu})|Z_{t}^{\nu}=\overline{z},Y_{t}=y\right].
\end{equation}
Remark that the solution of the dual problem (\ref{dual-problem}) is given by:
\begin{equation}
J_{dual}(z):=J_{dual}(z,0,\overline{z},v,m)=g_{1}(z)\tilde{J}(0,\overline{z},y)+g_{2}(z).
\end{equation}

\noindent Formally, the Hamilton-Jacobi-Bellman equation associated to the above stochastic control problem (\ref{new-dual}) is the following nonlinear partial differential equation: 
\begin{align}
\nonumber
\frac{\partial \tilde{J}}{\partial_{t}}&+\frac{1}{2}Tr\left(\Sigma(y) \Sigma ^{T}(y) D^{2}_{y}\tilde{J}\right)+\Gamma^{T}(y)D_{y}\tilde{J}\\
&+\displaystyle\inf_{\nu \in \mathcal{K}}\left[\frac{1}{2}(\psi(y)^{2}+\nu^{2})\overline{z}^{2}D^{2}_{\overline{z}}\tilde{J}-\overline{z}[\psi(y)K_{1}^{T}(y)+\nu K_{2}^{T}] D^{2}_{\overline{z},y}\tilde{J}\right]=0.
\label{new-HJB-dual}
\end{align}
with the boundary condition
\begin{equation}\label{boundary-condition-dual}
\tilde{J}(T,x,y)=U(x).
\end{equation}

And the associated optimal dual optimizer $\tilde{\nu}$ is given by:
\[
\tilde{\nu}_{t}=\dfrac{K_{2}^{T}~D^{2}_{\overline{z},y}\tilde{J}}{\overline{z}~D^{2}_{\overline{z}}\tilde{J}}.
\]
Here $D_{y}$ and $D^{2}_{y}$ denote the gradient and the Hessian operators with respect to the variable $y$. $D^{2}_{\overline{z},y}$ is the second derivative vector with respect to the variables $\overline{z}$ and $y$ and for $y=(v,m)$, $K_{1}(y)=\left(
\begin{array}{c}
\rho k(v) \\
\vartheta(m) 
\end{array} \right)$ ~~and~~$K_{2}(y)=\left(
\begin{array}{c}
\sqrt{1-\rho^{2}} k(v) \\
\Upsilon(m) 
\end{array} \right).$ 

\smallskip
\noindent The above HJB is nonlinear, but if we consider the case of CARA's utility functions and via a suitable transformation, we can make this equation semilinear and then characterize the dual value function $\tilde{J}$ through the classical solution of this semilinear equation which is more simpler than the usual fully nonlinear HJB equation.

\subsection{Special cases for utility function}
Let us consider the two more standard utility functions: logarithmic and power, defined by:

\[
U(x)= \left\{
\begin{array}{cc}
\ln(x) & x \in \mathbb{R}^{+}\\\\
\dfrac{x^{p}}{p} & ~~~~~~~~x\in \mathbb{R}^{+}, p\in(0,1) 
\end{array} \right.
\]
For these functions, the convex dual functions associated are given by: \[
\tilde{U}(z)= \left\{
\begin{array}{cc}
-(1+\ln(z)) & z \in \mathbb{R}^{+}\\\\
-\dfrac{z^{q}}{q} & ~~~~~~~~z\in \mathbb{R}, q=\dfrac{p}{p-1} 
\end{array} \right.
\]
These utility functions are of particular interests: firstly, they satisfy property (\ref{property})and secondly, due to the homogeneity of the convex dual functions together with the fact that the process  $Z_{t}^{\nu}$ and the control $\nu$ appear linearly, we can suggest a suitable transformation, for which we can characterize the dual value functions $\tilde{J}$ through a classical solution of a semilinear semilinear partial differential equations which will be described below. 

\smallskip
Let us now make some assumptions which will be useful for proving our verification results. 

\underline{\textbf{Assumption (H)}}

$i)$ $\Gamma$ and $\Sigma$ are Lipscitz and $C^{1}$ with bounded derivatives. 

$ii)$ $\Sigma\Sigma^{T}$ is uniformly elliptic, that is, there exists $c>0$ such that for $y,\xi \in \mathbb{R}^{2}$:
\begin{equation*}
\sum_{i,j=1}^{2}(\Sigma\Sigma^{T}(y))_{ij}\xi_{i}\xi_{j}\geq c |\xi|^{2}.
\end{equation*}

$iii)$ $\Sigma$ is bounded or is a deterministic matrix.  
 
$iv)$ There exists a positive constant $\epsilon$ such that
\begin{equation*}
\exp\left(\epsilon\displaystyle\int_{0}^{T}(\psi^{2}(Y_{t})+\nu_{t}^{2})dt\right) \in L^{1}(\mathbb{P}).
\end{equation*} 
 \noindent Notice that the Lipschitz assumption on $\Gamma$ and $\Sigma$ ensure the existence and uniqueness of the solution of (\ref{general-dynamic-Y-dual}). Moreover, we have:
\begin{equation}\label{integrability-condition-dual}
\mathbb{E}[\sup_{0\leq s\leq t}|Y_{s}|^{2}] < \infty.
\end{equation}

\vspace{3mm}
\noindent \subsubsection{\underline{Logarithmic utility:}}
For the logarithmic utility case, we can look for a candidate solution of (\ref{new-HJB}) and (\ref{boundary-condition}) in the form :

\begin{equation}\label{log-transf}
\tilde{J}(t,\overline{z},y)=-(1+ln(\overline{z}))-\Phi(t,y)
\end{equation}

\noindent Then direct substitution of (\ref{log-transf}) in (\ref{new-HJB-dual}) and (\ref{boundary-condition-dual}) gives us the following semilinear partial differential equation for $\Phi$:
\begin{align}\label{loga-PDE-dual}
-\frac{\partial \Phi}{\partial_{t}}-\dfrac{1}{2}Tr\left(\Sigma(y)\Sigma^{T}(y) D^{2}_{y}\Phi\right)+H(y,D_{y}\Phi)=0,
\end{align}
\noindent 
with the boundary condition:
\begin{equation}\label{loga-boundary-dual}
\Phi(T,y)=0.
\end{equation}

Where the Hamiltonian $H$ is defined by:
\begin{equation*}
 H(y,Q)=-\Gamma^{T}(y)Q+\displaystyle\inf_{\nu}\left(\dfrac{1}{2}(\psi^{2}(y)+\nu^{2})\right).
\end{equation*}
We now state a verification result for the logarithmic case, which relates the solution of the above semilinear (\ref{loga-PDE-dual}) and (\ref{loga-boundary-dual}) to the stochastic control problem (\ref{new-dual}).

\begin{theorem}[verification theorem]\label{logarithmic-verification-dual}
Let assumption \textbf{H} $i)$ holds. Suppose that there exists a solution $\Phi \in C^{1,2}([0,T)\times \mathbb{R}^{2})\cap C^{0}([0,T]\times \mathbb{R}^{2})$ to the  semilinear (\ref{loga-PDE-dual}) with the terminal condition (\ref{loga-boundary-dual}). Also we assume that $\Phi$ satisfies a polynomial growth conditon, i.e:
\begin{equation*}
|\Phi(t,y)| \leq C(1+|y|^{k})~~~~\mbox{for some}~k\in\mathbb{N}.
\end{equation*}
Then, for all $(t,\overline{z},y)\in[O,T]\times\mathbb{R}^{+}\times\mathbb{R}^{2}$
\begin{equation*}
\tilde{J}(t,\overline{z},y)\leq-1-\ln(\overline{z})-\Phi(t,y),
\end{equation*}
and for the optimal risk $\tilde{\nu}=0$, we have $\tilde{J}(t,x,y)=-1-ln(x)-\Phi(t,y)$ . 
\end{theorem}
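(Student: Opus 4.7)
The plan is to carry out a standard verification argument: introduce the candidate
\begin{equation*}
\tilde{w}(t,\overline{z},y) := -1-\ln(\overline{z})-\Phi(t,y),
\end{equation*}
and apply It\^o's formula to $s\mapsto \tilde{w}(s,Z^{\nu}_{s},Y_{s})$ along an arbitrary admissible control $\nu\in\mathcal{K}$. The terminal condition $\Phi(T,\cdot)=0$ converts $\tilde{w}(T,Z^{\nu}_{T},Y_{T})$ into $\tilde{U}(Z^{\nu}_{T})$, while the semilinear PDE (\ref{loga-PDE-dual}) should kill the non-control part of the drift. The crucial feature of the logarithmic ansatz is that $\tilde{w}$ is separable in $(\overline{z},y)$, so $D^{2}_{\overline{z},y}\tilde{w}=0$; this makes the cross-variation term vanish and decouples the $\nu$-dependence.

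Computing the partial derivatives, $\partial_{t}\tilde{w}=-\partial_{t}\Phi$, $D_{\overline{z}}\tilde{w}=-1/\overline{z}$, $D^{2}_{\overline{z}}\tilde{w}=1/\overline{z}^{2}$, $D_{y}\tilde{w}=-D_{y}\Phi$ and $D^{2}_{y}\tilde{w}=-D^{2}_{y}\Phi$, and plugging in (\ref{general-dynamic-R-dual})--(\ref{general-dynamic-Y-dual}) gives the finite-variation part of $d\tilde{w}$ as
\begin{equation*}
\Bigl(-\partial_{t}\Phi-\tfrac{1}{2}\operatorname{Tr}(\Sigma\Sigma^{T}D^{2}_{y}\Phi)-\Gamma^{T}D_{y}\Phi+\tfrac{1}{2}\psi^{2}(Y_{s})\Bigr)\,ds+\tfrac{1}{2}\nu_{s}^{2}\,ds.
\end{equation*}
By (\ref{loga-PDE-dual}) the first bracket equals zero, so the drift of $\tilde{w}(s,Z^{\nu}_{s},Y_{s})$ reduces to $\tfrac{1}{2}\nu_{s}^{2}\,ds\geq 0$, with equality precisely when $\nu\equiv 0$.

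The main technical hurdle is to show that the remaining stochastic integrals against $d\overline{W}^{1}$ and $d\overline{W}^{2}$ are true martingales, so that taking expectations is legitimate. The integrands combine $\psi(Y_{s})$, $\nu_{s}$, and $\Sigma(Y_{s})D_{y}\Phi(s,Y_{s})$; controlling them requires the polynomial growth of $\Phi$ (lifted to $D_{y}\Phi$ by interior parabolic estimates under assumption~\textbf{H}(i)), the moment bound (\ref{integrability-condition-dual}) on $Y$, the $L^{2}$ bound on $\nu$ coming from $\nu\in\mathcal{K}$, and the exponential moment furnished by assumption~\textbf{H}(iv). The standard way is to localize at exit times $\tau_{n}$ of expanding balls, apply optional stopping to the stopped martingales, and invoke dominated convergence as $n\to\infty$.

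Once the martingale terms are neutralized, integration from $t$ to $T$ and conditioning on $\{Z^{\nu}_{t}=\overline{z},\,Y_{t}=y\}$ yield
\begin{equation*}
\tilde{w}(t,\overline{z},y)+\tfrac{1}{2}\,\mathbb{E}\!\left[\int_{t}^{T}\!\nu_{s}^{2}\,ds\,\Big|\,Z^{\nu}_{t}=\overline{z},\,Y_{t}=y\right]=\mathbb{E}\!\left[\tilde{U}(Z^{\nu}_{T})\,\big|\,Z^{\nu}_{t}=\overline{z},\,Y_{t}=y\right]
\end{equation*}
for every $\nu\in\mathcal{K}$, hence $\tilde{w}(t,\overline{z},y)\leq\mathbb{E}[\tilde{U}(Z^{\nu}_{T})\mid\cdots]$. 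Since $\nu\equiv 0$ belongs to $\mathcal{K}$ and makes the integral vanish, we obtain $\tilde{w}(t,\overline{z},y)=\mathbb{E}[\tilde{U}(Z^{0}_{T})\mid\cdots]$. Taking the infimum over $\mathcal{K}$ on the right-hand side gives $\tilde{J}(t,\overline{z},y)\leq\tilde{w}(t,\overline{z},y)=-1-\ln(\overline{z})-\Phi(t,y)$, which is the announced inequality, and the strict positivity of the drift $\tfrac{1}{2}\nu^{2}$ for $\nu\neq 0$ identifies $\tilde{\nu}=0$ as the minimizer, delivering equality at the optimum.
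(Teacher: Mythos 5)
Your argument is correct and is essentially the paper's own proof in a slightly repackaged form: the paper first reduces $\tilde{J}_{\nu}$ to $-1-\ln(\overline{z})+\mathbb{E}\bigl[\tfrac{1}{2}\int_{t}^{T}(\psi^{2}(Y_{s})+\nu_{s}^{2})\,ds\bigr]$ and then applies It\^{o}'s formula to $\Phi(s,Y_{s})$ alone with the localization $\theta_{n}=T\wedge\inf\{s>t:|Y_{s}-y|\geq n\}$, whereas you apply It\^{o} to the full candidate $-1-\ln(\overline{z})-\Phi(t,y)$ and read off the residual drift $\tfrac{1}{2}\nu_{s}^{2}\geq 0$ directly; the two computations are identical in substance, and your localization/uniform-integrability step matches the paper's (the polynomial growth of $\Phi$ together with (\ref{integrability-condition-dual}) suffices here, so the appeal to assumption \textbf{H} $iv)$ is not actually needed in the logarithmic case).
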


\begin{proof}

Let $\tilde{J}_{\nu}(t,\overline{z},y)=\mathbb{E}\left[\tilde{U}(Z_{T}^{\nu})|Z_{t}^{\nu}=\overline{z},Y_{t}=y\right]$. From (\ref{new-dual}) and $\tilde{U}(\overline{z})=-1-\ln(\overline{z})$, we have the following expression for $\tilde{J}_{\nu}$:
\begin{equation}\label{decomposition-log}
\tilde{J}_{\nu}(t,\overline{z},y)=-1-\ln(\overline{z})+ \mathbb{E}\left[\dfrac{1}{2}\displaystyle\int_{t}^{T}(\psi^{2}(Y_{s})+\nu_{s}^{2})ds\right].
\end{equation}
 
Let $\nu$ be an arbitrary control process, $Y$ the associated process with $Y_{t}=y$ and define the stopping time
 \begin{equation*}
 \theta_{n}:=T\wedge \inf\{s>t: |Y_{s}-y|\geq n\}. 
 \end{equation*}
Now, let $\Phi$ be a $C^{1,2}$ solution to (\ref{loga-PDE-dual}). Then, by It\^{o}s formula, we have:
\begin{align}
\nonumber
\Phi(\theta_{n},Y_{\theta_{n}})&=\Phi(t,y)+\displaystyle\int_{t}^{\theta_{n}}
\left(\frac{\partial \Phi}{\partial_{t}}+ \dfrac{1}{2}Tr(\Sigma\Sigma^{T}D^{2}_{y}\Phi)+\Gamma^{T}D_{y}\Phi\right)(s,Y_{s})ds +\displaystyle\int_{t}^{\theta_{n}}((D_{y}\Phi)^{T} \Sigma)(s,Y_{s})d\overline{W}_{s}\\
&\leq \Phi(t,y)+\dfrac{1}{2}\displaystyle\int_{t}^{\theta_{n}}(\psi^{2}(Y_{s})+\nu_{s}^{2})ds +\displaystyle\int_{t}^{\theta_{n}}((D_{y}\Phi)^{T} \Sigma)(s,Y_{s})d\overline{W}_{s}
\label{inequality-phi}
\end{align}
From the definition of $\theta_{n}$, the integrand in the stochastic integral is bounded on $[t,\theta_{n}]$, a consequence of the continuity of $D_{y}\Phi$ and assumption \textbf{H} $i)$. Then, by taking expectation, one obtains:
\begin{equation*}
\mathbb{E}[\Phi(\theta_{n},Y_{\theta_{n}})]\leq \Phi(t,y) +\mathbb{E}\left[\dfrac{1}{2}\displaystyle\int_{t}^{\theta_{n}}(\psi^{2}(Y_{s})+\nu_{s}^{2})ds\right] .
\end{equation*} 
We now take the limit  as $n$ increases to infinity, then  $\theta_{n}\to T a.s$. From the growth condition satisfied by $\Phi$ and (\ref{integrability-condition}), we can deduce the uniform integrability of $(\Phi(\theta_{n},Y_{\theta_{n}}))_{n}$. Therefore, it follows from the dominated convergence theorem and the boundary condition (\ref{loga-boundary-dual}) that for all $\nu \in \mathcal{K}$:
\begin{equation*}
-\Phi(t,y)\leq \mathbb{E}\left[\dfrac{1}{2}\displaystyle\int_{t}^{T}(\psi^{2}(Y_{s})+\nu_{s}^{2})ds\right] 
\end{equation*}  
Then from  (\ref{log-transf}), we have:
\[\tilde{J}_{\nu}(t,\overline{z},y)\leq -1-\ln(\overline{z})-\Phi(t,y).
\]
Now by repeating the above argument by replacing $\nu$ by $\tilde{\nu}=0$ which is the optimal risk, we can finally deduce that:
\[
\tilde{J}_{\tilde{\nu}}(t,\overline{z},y)=-1-\ln(\overline{z})-\Phi(t,y).
\]
which ends the proof since $\tilde{J}(t,\overline{z},y)=\inf_{\nu\in\mathcal{K}}\tilde{J}_{\nu}(t,\overline{z},y)$
\end{proof}

Let us now study the regularity of the solution $\Phi$ to the semilinear (\ref{loga-PDE-dual}) with the terminal condition (\ref{loga-boundary-dual}).

\begin{proposition}\label{regularity-log}
Under assumptions \textbf{H} $i)$ and $ii)$, there exists a solution $\Phi \in C^{1,2}([0,T)\times \mathbb{R}^{2})\cap C^{0}([0,T]\times \mathbb{R}^{2})$ with polynomial qrowth condition in $y$, to the semilinear (\ref{loga-PDE-dual}) with the terminal condition (\ref{loga-boundary-dual}).  
\end{proposition}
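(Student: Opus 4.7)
The plan is to first notice that the $\inf$ in the Hamiltonian $H$ is explicit: the logarithmic substitution (\ref{log-transf}) kills the cross derivative $D^2_{\overline{z},y}\tilde{J}$, so the minimiser in (\ref{new-HJB-dual}) is $\nu^{\ast}=0$ and $H(y,D_y\Phi)$ reduces to $-\Gamma^{T}(y)D_y\Phi+\tfrac12\psi^{2}(y)$. Thus (\ref{loga-PDE-dual})--(\ref{loga-boundary-dual}) is the \emph{linear} backward Cauchy problem
\[
-\partial_t\Phi-\tfrac12\,\mathrm{Tr}\!\bigl(\Sigma\Sigma^{T}(y)D^{2}_y\Phi\bigr)-\Gamma^{T}(y)D_y\Phi+\tfrac12\psi^{2}(y)=0,\qquad \Phi(T,y)=0 .
\]
This reformulation is the key simplification, because the existence, regularity and growth of $\Phi$ can now be read off the Feynman--Kac representation.

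Next I would exhibit the candidate $\Phi(t,y)=-\tfrac12\,\mathbb{E}\bigl[\int_{t}^{T}\psi^{2}(Y^{t,y}_s)\,ds\bigr]$, where $Y^{t,y}$ denotes the solution of (\ref{general-dynamic-Y-dual}) started at $y$ at time $t$; $\psi(y)=m$, so $\psi^{2}(y)\le|y|^{2}$. Polynomial growth is then immediate: under assumption \textbf{H}\,$i)$ the SDE (\ref{general-dynamic-Y-dual}) has Lipschitz coefficients, hence the standard moment estimate $\mathbb{E}[\sup_{t\le s\le T}|Y^{t,y}_s|^{2}]\le C(1+|y|^{2})$ (compare (\ref{integrability-condition-dual})) yields $|\Phi(t,y)|\le C(1+|y|^{2})$. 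Continuity of $\Phi$ up to the boundary $t=T$ also follows from the representation since the integral shrinks uniformly on compacts as $t\uparrow T$.

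For the $C^{1,2}$ regularity on $[0,T)\times\mathbb{R}^{2}$ I would combine assumption \textbf{H}\,$ii)$ (uniform ellipticity of $\Sigma\Sigma^{T}$) with the $C^{1}$ boundedness of derivatives of $\Gamma,\Sigma$ from \textbf{H}\,$i)$, and appeal to the interior parabolic Schauder theory of Friedman, \emph{Partial Differential Equations of Parabolic Type}. A more self-contained probabilistic route is available as well: under \textbf{H}\,$i)$ the stochastic flow $y\mapsto Y^{t,y}_s$ is $C^{1}$ with tempered moments, so differentiation under the expectation gives explicit formulas for $D_y\Phi$ and (via a second variation / Bismut--Elworthy identity, where \textbf{H}\,$ii)$ enters to provide integrable Malliavin weights) for $D^{2}_y\Phi$, together with their continuity in $(t,y)$. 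Uniqueness in the polynomial-growth class is a routine Itô/localisation verification.

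The main obstacle is the $C^{1,2}$ interior regularity when $\Gamma$ (and possibly $\Sigma$) is only Lipschitz, hence allowed to be unbounded on $\mathbb{R}^{2}$, whereas the classical Schauder theorems are usually quoted for bounded coefficients. I would handle this by a localisation argument: truncate the coefficients to coincide with $\Gamma,\Sigma$ on a ball $B_{R}$ and extend smoothly and boundedly outside, obtain a classical $C^{1,2}$ solution $\Phi^{R}$ of the truncated problem, represent $\Phi^{R}$ by the Feynman--Kac formula associated with the truncated diffusion $Y^{R,t,y}$, and pass to the limit $R\to\infty$ using the uniform moment bounds on $Y^{t,y}$ and $Y^{R,t,y}$. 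The uniform ellipticity \textbf{H}\,$ii)$ guarantees that the interior Schauder estimates on each $B_{R'}\subset B_{R}$ are independent of $R$ for $R\gg R'$, so the limit inherits $C^{1,2}$ regularity on all of $[0,T)\times\mathbb{R}^{2}$, and polynomial growth is preserved by the construction.
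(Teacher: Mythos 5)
Your proposal is correct, but it takes a genuinely different route from the paper. The paper's proof is a one\-line appeal to Theorem 4.3 of Fleming and Soner, justified by the observation that the Hamiltonian $H(y,Q)=-\Gamma^{T}(y)Q+\inf_{\nu}\tfrac12(\psi^{2}(y)+\nu^{2})$ is globally Lipschitz in $Q$; it treats the equation as a generic semilinear one and delegates existence, regularity and growth to that reference. You instead exploit the stronger structural fact that the infimum is attained at $\nu=0$, so $H$ is \emph{affine} in $Q$ and (\ref{loga-PDE-dual})--(\ref{loga-boundary-dual}) is a linear backward Cauchy problem; this lets you write the solution explicitly as $\Phi(t,y)=-\tfrac12\,\Ee\bigl[\int_{t}^{T}\psi^{2}(Y^{t,y}_{s})\,ds\bigr]$, from which the polynomial growth bound and the continuity at $t=T$ are immediate consequences of the moment estimate (\ref{integrability-condition-dual}). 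Your candidate is moreover exactly the function that the verification Theorem \ref{logarithmic-verification-dual} identifies a posteriori (compare (\ref{decomposition-log}) with $\nu=0$), so the two results dovetail. What your approach buys is twofold: it is constructive, and it squarely confronts the fact that $\Gamma$ and $\Sigma$ are only Lipschitz, hence possibly unbounded, whereas the standard statements of the Fleming--Soner existence theorem (and of classical Schauder theory) are phrased for bounded coefficients --- your truncation-plus-interior-Schauder limit argument is precisely the detail the paper's citation leaves implicit. The one place where you should be careful is the purely probabilistic route to $D^{2}_{y}\Phi$: differentiating the flow twice needs more smoothness than \textbf{H}\,$i)$ provides, so the Bismut--Elworthy/ellipticity variant (or the PDE route you also offer) is the one to retain.
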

\begin{proof}
Under assumptions $i)$ and $ii)$ and the fact that the Hamiltonian $H$ satisfies a global Lipschitz condition on $D_{y}\Phi$, we can deduce from theorem $4.3$ in Fleming and soner \cite[p.163]{Fleming-soner} the existence and uniqueness of a classical solution to the semilinear equation(\ref{loga-PDE-dual}). 
\end{proof}

\vspace{3mm}
\noindent \subsubsection{\underline{Power utility:}}
As the above reasons given in the logarithmic case, we can suggest that the value function must be of the form:

\begin{equation}\label{power-form-dual}
\tilde{J}(t,\overline{z},y)=-\dfrac{\overline{z}^{q}}{q}\exp(-\Phi(t,y)). 
\end{equation}
Then if we substitute the above form in (\ref{new-HJB-dual})and (\ref{boundary-condition-dual}), we can deduce the following semilinear P.D.E for $\Phi$:
\begin{align}\label{semi-linear-equation-dual}
&-\frac{\partial \Phi}{\partial_{t}}-\frac{1}{2}Tr\left(\Sigma \Sigma^{T} D_{y}^{2}\Phi\right)+H(y,D_{y}\Phi)=0,\\
&~~\Phi(T,y)=0. 
\label{terminal-condition-power-dual}
\end{align}

\vspace{2mm} 
\noindent The Hamiltonian $H$ is defined by:
\begin{align}
\label{first-expression-dual}
H(y,Q)&=\frac{1}{2}Q^{T}\Sigma(y)\Sigma^{T}(y)Q-Q^{T}\Gamma(y) +\displaystyle\inf_{\nu\in\mathcal{K}}\left[\dfrac{1}{2}q(q-1)(\psi^{2}(y)+\nu^{2})+q\left(\psi(y)K_{1}^{T}+\nu K_{2}^{T}\right)Q\right]\\
&= \frac{1}{2}Q^{T}\left(\Sigma(y) \Sigma^{T}(y)-G(y)\right)Q - Q^{T}F(y)+\Psi(y).
\label{second-expression-dual}
\end{align}

\noindent where for $y:=(v,m)$: 
\begin{align*}
&G(y)=\frac{q}{q-1} K_{2}(y) K_{2}^{T}(y)\\
& F(y)= \Gamma(y)-q\psi(y)K_{1}\\
& \Psi(y)=\frac{1}{2}q(q-1)\psi^{2}(y).
\end{align*}

We now state a verification result for the power case, which relates the solution of the above semilinear (\ref{semi-linear-equation-dual}) and (\ref{terminal-condition-power-dual}) to the stochastic control problem (\ref{new-dual}). 

\begin{theorem}[verification theorem]\label{power-verification-dual}
Let assumptions \textbf{H} $i)$, $iii)$ and $iv)$ hold. Suppose that there exists a solution $\Phi \in C^{1,2}([0,T)\times \mathbb{R}^{2})\cap C^{0}([0,T]\times \mathbb{R}^{2})$ with linear growth condition on the derivation $D_{y}\Phi$, to the  semilinear (\ref{semi-linear-equation-dual}) with the terminal condition (\ref{terminal-condition-power-dual}).
Then, for all $(t,x,y)\in[O,T]\times\mathbb{R}^{+}\times\mathbb{R}^{2}$
\begin{itemize}
\item[i)] $\tilde{J}(t,\overline{z},y)\leq-\dfrac{\overline{z}^{q}}{q}\exp(-\Phi(t,y))$. 
\end{itemize}
Now, assume that there exists a minimizer $\tilde{\nu}$ of 
\[
\nu\longrightarrow \dfrac{1}{2}q(q-1)\nu^{2}+q\nu K_{2}(y)^{T}D_{y}\Phi
\]  
such that 
\begin{align*}
-\frac{\partial \Phi}{\partial_{t}}-\frac{1}{2}Tr\left(\Sigma \Sigma^{*} D_{y}^{2}\Phi\right)+ H(y,D_{y}\Phi)=0.
\end{align*}
Then 
\begin{itemize}
\item[ii)] $\tilde{J}(t,\overline{z},y)=-\dfrac{\overline{z}^{q}}{q}\exp(-\Phi(t,y))$.
\end{itemize}
and the associated optimal $\tilde{\nu}$ is given by the Markov control $\{\tilde{\nu_{t}}=\tilde{\nu}(t,Y_{t})\}$ with
\begin{equation}\label{optimal-dual-nu}
\tilde{\nu}_{t}=-\frac{1}{q-1}K_{2}^{T}(Y_{T})D_{y}\Phi(t,Y_{t}).
\end{equation}

\end{theorem}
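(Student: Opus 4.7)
The plan is a standard verification argument based on It\^o's formula, in the same spirit as the logarithmic case (Theorem~\ref{logarithmic-verification-dual}), but with the multiplicative ansatz $\phi(t,\overline{z},y):=-\frac{\overline{z}^{q}}{q}\exp(-\Phi(t,y))$. Three ingredients are needed: a drift computation that collapses to a perfect square via the semilinear PDE~(\ref{semi-linear-equation-dual}), a localisation to cope with unbounded coefficients, and a uniform integrability argument to pass to the limit.

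First I fix $\nu\in\mathcal{K}$, start $(Z^\nu,Y)$ from $(\overline{z},y)$ at time $t$, and apply It\^o to $\phi(s,Z_s^\nu,Y_s)$. With the partial derivatives $\partial_{\overline{z}}\phi=-\overline{z}^{\,q-1}e^{-\Phi}$, $\partial^{2}_{\overline{z}\overline{z}}\phi=-(q-1)\overline{z}^{\,q-2}e^{-\Phi}$, $D_y\phi=-\phi\,D_y\Phi$, $D^{2}_y\phi=\phi(D_y\Phi\,D_y\Phi^{T}-D^{2}_y\Phi)$, $D^{2}_{\overline{z},y}\phi=\overline{z}^{\,q-1}e^{-\Phi}D_y\Phi$, and the covariations $d[Z^\nu]_s=(Z_s^\nu)^{2}(\psi^{2}+\nu_s^{2})ds$ and $d[Z^\nu,Y]_s=-Z_s^\nu(\psi K_1+\nu_s K_2)ds$, substitution of the PDE~(\ref{semi-linear-equation-dual}) and completion of the square in $\nu_s$ reduce the drift to
\[
d\phi(s,Z_s^\nu,Y_s)=\phi(s,Z_s^\nu,Y_s)\,\frac{q(q-1)}{2}\bigl(\nu_s-\tilde\nu(s,Y_s)\bigr)^{2}ds+dM_s,
\]
where $\tilde\nu(s,y)=-\tfrac{1}{q-1}K_2^{T}(y)D_y\Phi(s,y)$ is precisely the pointwise minimiser of the Hamiltonian in~(\ref{first-expression-dual}) and $M$ is a local martingale. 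Since $q=p/(p-1)<0$ gives $q(q-1)>0$ and $\phi>0$, the $ds$-drift has a definite sign and vanishes identically for $\nu=\tilde\nu$.

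To conclude, I localise by $\theta_n:=T\wedge\inf\{s>t:|Y_s|+|\ln Z_s^\nu|\ge n\}$, so that $M_{\cdot\wedge\theta_n}$ is a true martingale. Taking expectations between $t$ and $\theta_n$ gives the one-sided comparison between $\mathbb{E}[\phi(\theta_n,Z_{\theta_n}^\nu,Y_{\theta_n})]$ and $\phi(t,\overline{z},y)$; passing to the limit $n\to\infty$ by dominated convergence, using the boundary condition $\Phi(T,\cdot)=0$ together with $\phi(T,Z_T^\nu,Y_T)=\tilde U(Z_T^\nu)$, and then taking the infimum over $\nu\in\mathcal{K}$ yields part~$i)$. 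For~$ii)$, plugging $\nu=\tilde\nu$ into the drift identity makes the $ds$-term vanish identically, so equality follows provided $\tilde\nu\in\mathcal{K}$; this admissibility is a consequence of the linear-growth bound on $D_y\Phi$, the control on $K_2$ granted by assumption~\textbf{H}~$iii)$, and the moment estimate~(\ref{integrability-condition-dual}).

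The main obstacle is justifying the limit passage, namely the uniform integrability of $\phi(\theta_n,Z_{\theta_n}^\nu,Y_{\theta_n})$ and the vanishing of $\mathbb{E}[M_{\theta_n}]$. Writing $(Z_s^\nu)^{q}$ as the stochastic exponential of $q\overline{\mu}$ and $q\nu$, a Girsanov change combined with the exponential moment condition in~\textbf{H}~$iv)$ (via Novikov and H\"older) yields an $L^{1+\delta}$ bound on $(Z_{\theta_n}^\nu)^{q}$ uniformly in $n$, while the linear growth of $D_y\Phi$ together with~(\ref{integrability-condition-dual}) controls $e^{-\Phi(\cdot,Y_\cdot)}$. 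These moment bounds also kill the stopped stochastic integral in expectation. The drift algebra is routine; the integrability step is where the full strength of assumptions~\textbf{H}~$iii)$--$iv)$ is actually used.
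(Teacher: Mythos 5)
Your proposal is correct in substance but organizes the verification differently from the paper. You apply It\^o directly to the candidate $\phi(t,\overline{z},y)=-\frac{\overline{z}^{q}}{q}e^{-\Phi(t,y)}$ as a function of the pair $(Z^{\nu},Y)$ under $\mathbb{P}$, use the PDE to collapse the drift to $\phi\,\frac{q(q-1)}{2}(\nu_s-\tilde\nu(s,Y_s))^{2}\,ds$, and then localize; your derivative and bracket computations, the sign $q(q-1)>0$, and the identification of the minimizer $\tilde\nu=-\frac{1}{q-1}K_2^{T}D_y\Phi$ all check out. The paper instead changes measure first: it introduces $\mathbb{Q}^{\nu}$ with density built from $q\psi(Y)$ and $q\nu$ so that $\tilde J_{\nu}=-\frac{\overline{z}^{q}}{q}\,\mathbb{E}^{\nu}\bigl[\exp\bigl(\int_t^T\tfrac12 q(q-1)(\psi^{2}+\nu^{2})du\bigr)\bigr]$, applies It\^o only to $\Phi(t,Y_t)$ under the new measure and the shifted dynamics of $Y$, exponentiates the resulting inequality, and reduces the integrability question to showing that one Dol\'eans exponential of $-D_y^{T}\Phi\,\Sigma$ is a true $\mathbb{Q}^{\nu}$-martingale (via a Gronwall bound on $|Y|$ and the Gaussian moment estimate obtained from assumption \textbf{H} $iii)$). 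The trade-off is that the paper's route isolates all the $Z$-dependence in the measure so that only a functional of $Y$ has to be controlled, whereas your route must establish uniform integrability of $(Z^{\nu}_{\theta_n})^{q}e^{-\Phi(\theta_n,Y_{\theta_n})}$ directly; but your sketch of that step (writing $(Z^{\nu})^{q}$ as a stochastic exponential, Girsanov, Novikov/H\"older from \textbf{H} $iv)$) is precisely the paper's $\mathbb{Q}^{\nu}$ in disguise, so the two arguments rest on the same analytic inputs and neither is more general. Two small points you should make explicit: first, your nonnegative drift yields $\mathbb{E}[\tilde U(Z_T^{\nu})\mid\cdot]\ge -\frac{\overline{z}^{q}}{q}e^{-\Phi(t,y)}$ for every $\nu$, i.e.\ the comparison goes in the direction natural for an infimum problem (the ``$\le$'' printed in part $i)$ of the statement, and in the paper's own last line, is inconsistent with what the computation actually establishes — the paper's own displayed inequality $\exp(-\Phi(t,y))\le\mathbb{E}^{\nu}[\cdots]$ combined with $-\overline{z}^{q}/q>0$ gives ``$\ge$''); second, the passage to the limit in $n$ for the ``$\ge$'' direction genuinely needs the uniform integrability you flag, since Fatou alone gives the wrong-way bound, so that step cannot be waved away.
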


\begin{proof}

Let us introduce the new probability $\mathbb{Q}^{\nu}$ as follows:
\begin{equation*}
\frac{d\mathbb{Q}^{\nu}}{d\mathbb{P}}=\exp\left(-\displaystyle\int^{t}_{0}q\psi(Y_{u}) d\overline{W}^{1}_{u}-\displaystyle\int^{t}_{0}q \nu_{u} d\overline{W}^{2}_{u}-\frac{1}{2}\displaystyle\int_{0}^{t}q^{2}(\psi^{2}(Y_{u})+\nu_{u}^{2})du\right),
\end{equation*}
From assumption $iv)$ the probability measure$\mathbb{Q}^{\nu}$ with the density process $\frac{d\mathbb{Q}^{\nu}}{d\mathbb{P}}$ is well defined, see Liptser and Shiryaev \cite[P.233]{LiptserShiryaev}.

\vspace{3mm}
\noindent Let $\tilde{J}_{\nu}(t,\overline{z},y)=\mathbb{E}\left[\tilde{U}(Z_{T}^{\nu})|Z_{t}^{\nu}=\overline{z},Y_{t}=y\right]$. 

\noindent From (\ref{new-dual}) and $\tilde{U}(\overline{z})=-\dfrac{\overline{z}^{q}}{q}$, we have from It\^{o}'s formula the following expression for $\tilde{J}_{\nu}$: 
\begin{align}\label{nouvelle-expression-dual}
\tilde{J}_{\nu}(t,\overline{z},y)&=-\frac{\overline{z}^{q}}{q}\mathbb{E}^{\nu}\left[\exp\left(\displaystyle\int_{t}^{T}\dfrac{1}{2}q(q-1)(\psi^{2}(Y_{u})+\nu^{2})du\right)|Y_{t}=y\right].
\end{align}
\noindent Also by Girsanov's theorem, the dynamics of $Y$ under $\mathbb{Q}^{\nu}$, is given by:
\begin{equation}\label{new-dynamic-of-Y}
dY_{t}=\left(\Gamma(Y_{t})-q\psi(Y_{t})K_{1}(Y_{t})-q\nu_{t}K_{2}(Y_{t})\right)dt+\Sigma(Y_{t})dW_{t}^{\nu},
\end{equation} 
where $W^{\nu}$ is a bi-dimensional Brownian motion under $\mathbb{Q}^{\nu}$.

\vspace{2mm}
\noindent Now, let $\Phi$ be a $C^{1,2}$ solution to (\ref{semi-linear-equation-dual}), then by It\^{o}'s formula applied to $\Phi(t,Y_{t})$ under $\mathbb{Q}^{\nu}$, one obtains:
\begin{align*}
\Phi(\theta_{n},Y_{T})=\Phi(t,y)&+ \displaystyle\int_{t}^{T}\left( \frac{\partial \Phi}{\partial_{t}}+\left(\Gamma -q\psi K_{1}-q\nu_{t}K_{2}\right)^{T}D_{y}\Phi +\frac{1}{2}Tr(\Sigma \Sigma^{T}~D^{2}_{y}\Phi) \right)(u,Y_{u})du\\
&+ \displaystyle\int_{t}^{T}(D^{T}_{y}\Phi~~\Sigma)(u,Y_{u})dW_{u}^{\nu}
\end{align*}
Or $\Phi$ is solution of (\ref{semi-linear-equation-dual}), then one obtains:
\begin{align}
\nonumber
\Phi(T,Y_{T})&=\Phi(t,y) + \displaystyle\int_{t}^{T}\left(H(y,D_{y}\Phi) + \left(\Gamma -q\psi K_{1}-q\nu_{t}K_{2}\right)^{T}D_{y}\Phi\right)(u,Y_{u})du\\
&~~~~~~~~~~~~~~+\displaystyle\int_{t}^{T}(D^{T}_{y}\Phi~~\Sigma)(u,Y_{u})dW_{u}^{\nu}\\\nonumber
&\leq \Phi(t,y)+ \displaystyle\int_{t}^{T}\dfrac{1}{2}q(q-1)(\psi^{2}(Y_{u})+\nu^{2})du+ \frac{1}{2}\displaystyle\int_{t}^{T}\left(D^{T}_{y}\Phi~\Sigma \Sigma^{T}D_{y}\Phi\right)(u,Y_{u})du\\
&~~~~~~~~~~~~~~+\displaystyle\int_{t}^{T}(D^{T}_{y}\Phi~~\Sigma)(u,Y_{u})dW_{u}^{\nu},
\label{inequality-withcontrol-dual}
\end{align}
where the inequality comes from the representation (\ref{first-expression-dual}) of the Hamiltonian.

\noindent Therefore, we have: 
\begin{align*}
&\exp(-\Phi(t,y))\mathbb{E}^{\nu}\left[\exp\left(-\frac{1}{2}\displaystyle\int_{t}^{T}\left(D^{T}_{y}\Phi~\Sigma \Sigma^{T}D_{y}\Phi\right)(u,Y_{u})du-\displaystyle\int_{t}^{T}(D^{T}_{y}\Phi~~\Sigma)(u,Y_{u})dW_{u}^{\nu}\right)\right]\\
& \leq \mathbb{E}^{\nu}\left[\displaystyle\int_{t}^{T}\dfrac{1}{2}q(q-1)(\psi^{2}(Y_{u})+\nu^{2})du\right].
\end{align*}
Let us now consider the exponential $Q^{\nu}$-local martingales:
\begin{align*}
\epsilon_{t}^{\pi}=\exp\left(-\displaystyle\int_{0}^{t}(D^{T}_{y}\Phi~~\Sigma)(u,Y_{u})dW_{u}^{\nu}-\frac{1}{2}\displaystyle\int_{0}^{t}\left(D^{T}_{y}\Phi~\Sigma \Sigma^{T}D_{y}\Phi\right)(u,Y_{u})du\right).
\end{align*}
From the Lipschitz condition assumed in $i)$ and from $iii)$, we can deduce from Gronwall's lemma that there exists a positive constant $C$ such that: 
\begin{equation*}
 |Y_{t}|\leq C\left(1+\displaystyle\int_{0}^{t}|W^{\nu}_{u}|du +|W^{\nu}_{t}|\right)
\end{equation*}
Then we deduce that there exists some $\epsilon >0$ such that 
\begin{equation}\label{Grownal}
\sup_{t\in[0,T]}\mathbb{E}^{\nu}[\exp(\epsilon|Y_{t}|^{2})]<\infty.
\end{equation}

Therefore from (\ref{Grownal}) and the fact that $D_{y}\Phi$ satisfies a linear growth condition in y, we can deduce that $\epsilon^{\pi}$ is a martingale under $Q^{\nu}$, therefore we have:

\[
\exp(-\Phi(t,y))\leq \mathbb{E}^{\nu}\left[\displaystyle\int_{t}^{T}\dfrac{1}{2}q(q-1)(\psi^{2}(Y_{u})+\nu^{2})du\right]. 
\] 
The above inequality is proved for all $\nu \in \mathcal{K}$, therefore we can deduce from (\ref{nouvelle-expression-dual}) that:
\[
\tilde{J}(t,\overline{z},y)\leq -\frac{\overline{z}^{q}}{q}\exp(-\Phi(t,y)).
\] 
since $\tilde{J}(t,\overline{z},y)=\inf_{\nu\in\mathcal{K}}\tilde{J}_{\nu}(t,\overline{z},y)$, then $i)$ is proved.

Now by repeating the above argument and observing that the control $\tilde{\nu}$ given by (\ref{optimal-dual-nu}), achieves equality in (\ref{inequality-withcontrol-dual}), we can finally deduce that:
\[
\tilde{J}_{\nu}(t,\overline{z},y)=-\frac{\overline{z}^{q}}{q}\exp(-\Phi(t,y)).
\]
Also since $\tilde{J}(t,\overline{z},y)=\inf_{\nu\in\mathcal{K}}\tilde{J}_{\nu}(t,\overline{z},y)$, then $ii)$ is proved.
\end{proof}

We now study the existence of a classical solution to (\ref{semi-linear-equation-dual})-(\ref{terminal-condition-power-dual}). 

\noindent In fact, the existence of a classical solution to (\ref{semi-linear-equation-dual})-(\ref{terminal-condition-power-dual}) cannot be found directly in the literature since $Q\to H(y,Q)$ is not globally Lipschitz on $Q$ but satisfies a quadratic growth condition on $Q$. For that we can use the approach taken in \cite{Fleming-soner} by considering a certain sequence of approximating P.D.Es which are the HJB-equations of certain stochastic control problems for which the existence of smooth solution is well-known.

\vspace{2mm}
Let us make some assumptions which will be useful to prove the regularity for the solution  of (\ref{semi-linear-equation-dual}). 

\smallskip\smallskip
\underline{\textbf{Assumption (H')}} Let us consider either one of the following conditions:

\smallskip
\underline{$I)$-If $\Sigma$ is a deterministic matrix:} In this case we need the following assumption:

$i)$ $\Gamma$ and $\psi$ are Lipschitz and $C^{1}$ with bounded derivatives. 

\smallskip\smallskip
\underline{$II)$-If $\Sigma$ is not a deterministic matrix:} In this case we need the following assumptions:

\smallskip\smallskip
$i)$ $\Gamma$ and $\psi.K_{1}$ are Lipschitz and $C^{1}$.

$ii)$ $\psi^{2}$, $K_{2}K_{2}^{T}$ are $C^{1}$ with bounded derivatives.

$iii)$ $\Sigma\Sigma^{T}-\dfrac{q}{q-1}K_{2}K_{2}^{T}$ is uniformly elliptic. 

\smallskip
\noindent By the similar arguments used by Pham in \cite{Pham-1} and from the standard verification theorem proved by Fleming and soner [Theorem 3.1 P.163]\cite{Fleming-soner}, we can deduce our regularity result for the case when the Hamiltonian is not globally Lipschitz but satisfies a quadratic growth condition. 

\begin{theorem}\label{regularity-power}
Under one the assumptions (\textbf{H'}), there exists a solution $\Phi \in C^{1,2}([0,T)\times \mathbb{R}^{2})\cap C^{0}([0,T]\times \mathbb{R}^{2})$ with linear growth condition on the derivation $D_{y}\Phi$, to the  semilinear (\ref{semi-linear-equation-dual}) with the terminal condition (\ref{terminal-condition-power-dual}).
\end{theorem}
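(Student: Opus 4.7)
The plan is to approximate the semilinear PDE by a sequence whose Hamiltonians are globally Lipschitz in $Q$, obtain classical solutions $\Phi_n$ via the Fleming--Soner verification theorem, establish $n$-uniform a priori estimates, and then pass to the limit by compactness.

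For each $n \geq 1$, I would construct a truncated Hamiltonian $H_n$ by replacing the quadratic form $\tfrac12 Q^T(\Sigma\Sigma^T - G)Q$ with a smoothly cut-off version (e.g.\ replace $Q$ by $\chi_n(|Q|)Q$ with $\chi_n$ a mollified indicator of $\{|Q|\le n\}$). The resulting PDE
\begin{equation*}
-\partial_t \Phi_n - \tfrac12 \mathrm{Tr}(\Sigma\Sigma^T D_y^2 \Phi_n) + H_n(y, D_y \Phi_n) = 0, \qquad \Phi_n(T,y) = 0,
\end{equation*}
has a globally Lipschitz Hamiltonian in $Q$ and is the HJB equation of a bounded-control problem. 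Under (H') I, $\Sigma\Sigma^T$ is deterministic; under (H') II, (iii) provides the uniform ellipticity needed. In both situations, together with the Lipschitz/$C^1$ regularity of $\Gamma$, $\psi$, $K_1$, $K_2$ and the boundedness of their relevant derivatives, the hypotheses of Fleming--Soner \cite{Fleming-soner}[Theorem 3.1, p.163] are satisfied and yield a classical solution $\Phi_n \in C^{1,2}([0,T)\times\mathbb{R}^2)\cap C^0([0,T]\times\mathbb{R}^2)$ with polynomial growth.

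The next step is to derive $n$-uniform estimates. Using the stochastic control representation of $\Phi_n$ and the quadratic-in-$y$ growth of $\Psi(y) = \tfrac12 q(q-1)\psi^2(y)$, together with (\ref{integrability-condition-dual}), one first gets a uniform bound $|\Phi_n(t,y)| \leq C(1+|y|^2)$. The decisive point is the gradient estimate $|D_y \Phi_n(t,y)| \leq C(1+|y|)$ with $C$ independent of $n$; following Pham \cite{Pham-1}, this is obtained by writing the quadratic term as the minimized part of a quadratic form in an auxiliary control $\nu$, performing a Girsanov change of measure that absorbs this term into the drift of $Y$, and controlling the resulting exponential moments via (\ref{Grownal}) (guaranteed by (H') (iii)--(iv) of the verification theorem, or by the deterministic character of $\Sigma$ in case I). Once the linear growth of $D_y\Phi_n$ is secured, the truncation $\chi_n$ is inactive on compact sets for $n$ large, so $\Phi_n$ solves the untruncated equation locally.

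Finally, interior parabolic Schauder estimates applied on arbitrary compact subsets yield uniform $C^{1+\alpha/2,\,2+\alpha}_{\mathrm{loc}}$ bounds, so Arzel\`a--Ascoli extracts a subsequence $\Phi_n \to \Phi$, together with $\partial_t\Phi_n$, $D_y\Phi_n$ and $D_y^2\Phi_n$, converging locally uniformly. Passing to the limit in the PDE gives that $\Phi$ is a classical solution of (\ref{semi-linear-equation-dual})--(\ref{terminal-condition-power-dual}) with $D_y\Phi$ inheriting the linear growth. The main obstacle is the uniform gradient estimate in the second step: the quadratic dependence of $H$ on $D_y\Phi$ forbids a direct maximum-principle argument, and the whole scheme hinges on the structural hypothesis (H') II (iii) (or the deterministic case I) which makes the Girsanov absorption admissible.
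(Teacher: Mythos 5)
Your proposal follows essentially the same route the paper only sketches: approximate the quadratic-in-$Q$ Hamiltonian by globally Lipschitz ones, invoke Fleming--Soner for classical solutions of the approximating HJB equations, obtain $n$-uniform (gradient) bounds in the spirit of Pham \cite{Pham-1}, and pass to the limit by interior Schauder estimates and compactness. The only cosmetic difference is that the paper's approximating equations arise by restricting the dual control $\nu$ to a bounded set (so each approximation remains a genuine HJB equation, which is what you should use in place of the cutoff $\chi_n(|Q|)Q$ to justify the phrase ``bounded-control problem''), but this does not change the argument.
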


\begin{remark}
In general, a closed form solution to (\ref{semi-linear-equation-dual}) with the terminal condition (\ref{terminal-condition-power-dual}) does not exist. But we show that for some stochastic volatility model and in the case when the filters estimate are Gaussian, we can obtain a closed form, see section .  
\end{remark}

\smallskip
Let us now describe the relation between the optimal trading strategy and the optimal dual optimiser.
 
\subsection{Solution to the primal problem for special utility functions}\label{explicitform}
We have showed from theorem \ref{optimal-wel}, that the optimal wealth, and by consequence the optimal portfolio, depend on the optimal dual optimiser $\tilde{\nu}$. So we will study this relation in the special case of utility functions studied above. 

From theorem \ref{optimal-wel}, we have:
\begin{equation}\label{rappel-wealth}
\tilde{R}_{t}=\mathbb{E}\left[\frac{Z_{T}^{\tilde{\nu}}}{Z_{t}^{\tilde{\nu}}}I(z_{x}Z_{T}^{\tilde{\nu}})|\mathcal{G}_{t}\right]
\end{equation}
where $ \tilde{\nu}$ is the optimal dual maximizer and $z_{x}$ is the Lagrange multiplier such that $\mathbb{E}\left[Z_{T}^{\tilde{\nu}}I(z_{x}Z_{T}^{\tilde{\nu}})\right]=x$. 

\vspace{3mm}
Before presenting our result concerning the optimal wealth and the optimal portfolio, in order to avoid any confusion, let us describe the dynamics of the wealth $R_{t}$ in terms of the process $Y_{t}:=(V_{t},\overline{\mu}_{t})$ as follows: 
\begin{equation}\label{forme-Y}
dR_{t}=R_{t}\pi_{t}(\psi(Y_{t})\delta(Y_{t})dt +\delta(Y_{t})d\overline{W}^{1}_{t})
\end{equation}
where $\psi(Y_{t})=\overline{\mu}_{t}$ and $\delta(Y_{t})=g(V_{t})$.

\vspace{3mm}
\noindent\underline{\textbf{Logarithmic utility:}} $U(x)=\ln(x)$.
\begin{proposition}
We suppose that the assumptions of theorems \ref{logarithmic-verification-dual} and \ref{regularity-log} hold. Then the optimal wealth process is given by $\tilde{R}_{t}=\dfrac{x}{Z_{t}^{0}}$. Also the optimal portfolio $\tilde{\pi}$ and the primal value function are given by:
\begin{equation}\label{portf-primal}
\tilde{\pi}_{t}=\dfrac{\psi(Y_{t})}{\delta(Y_{t})}:=\dfrac{\overline{\mu}_{t}}{g(V_{t})}~~~~~~\mbox{and}~~J(x)=\ln(x)-\Phi(0,Y_{0}).
\end{equation}
where $\Phi$ is the solution of the semilinear equation (\ref{loga-PDE-dual}) with boundary condition (\ref{loga-boundary-dual}).
\end{proposition}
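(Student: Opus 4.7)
The plan is to assemble the three conclusions directly from the already-established duality machinery (Theorem \ref{optimal-wel}), the verification result for the logarithmic case (Theorem \ref{logarithmic-verification-dual}), and a short It\^o computation to read off the portfolio.

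First I would exploit the fact that for $U(x)=\ln x$ the inverse marginal utility is $I(y)=1/y$, and Theorem \ref{logarithmic-verification-dual} tells us that the dual optimiser is $\tilde{\nu}=0$. Substituting these in the formula of Theorem \ref{optimal-wel} collapses the conditional expectation to a deterministic constant:
\begin{equation*}
\tilde{R}_{t}=\mathbb{E}\!\left[\frac{Z_{T}^{0}}{Z_{t}^{0}}\cdot\frac{1}{z_{x}\,Z_{T}^{0}}\,\Big|\,\mathcal{G}_{t}\right]=\frac{1}{z_{x}\,Z_{t}^{0}}.
\end{equation*}
The Lagrange constraint $\mathbb{E}[Z_{T}^{0}I(z_{x}Z_{T}^{0})]=x$ then forces $z_{x}=1/x$, giving $\tilde{R}_{t}=x/Z_{t}^{0}$, which is the first claim.

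Next I would apply It\^o's formula to $Z_{t}^{0}\mapsto 1/Z_{t}^{0}$, using the dynamics $dZ_{t}^{0}=-Z_{t}^{0}\overline{\mu}_{t}d\overline{W}_{t}^{1}$ from (\ref{risk-process}) with $\nu\equiv 0$. This yields
\begin{equation*}
d\tilde{R}_{t}=\tilde{R}_{t}\bigl(\overline{\mu}_{t}^{\,2}\,dt+\overline{\mu}_{t}\,d\overline{W}_{t}^{1}\bigr).
\end{equation*}
On the other hand, the implicit characterisation (\ref{optimal-portf}) combined with the wealth dynamics in system $(Q)$ imposes $d\tilde{R}_{t}=\tilde{R}_{t}\tilde{\pi}_{t}\bigl(g(V_{t})\overline{\mu}_{t}\,dt+g(V_{t})d\overline{W}_{t}^{1}\bigr)$. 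Matching the diffusion coefficients (or equivalently the drifts) gives $\tilde{\pi}_{t}g(V_{t})=\overline{\mu}_{t}$, i.e.\ $\tilde{\pi}_{t}=\overline{\mu}_{t}/g(V_{t})$, which is the second claim.

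For the value function I see two equivalent routes. The direct route is to compute $J(x)=\mathbb{E}[\ln(\tilde{R}_{T})]=\ln x-\mathbb{E}[\ln Z_{T}^{0}]=\ln x+\tfrac12\mathbb{E}\!\int_{0}^{T}\!\overline{\mu}_{s}^{2}\,ds$, and then identify the last expectation with $-\Phi(0,Y_{0})$ via Feynman--Kac applied to the semilinear PDE (\ref{loga-PDE-dual})--(\ref{loga-boundary-dual}): after evaluating the infimum in $H$ at $\nu=0$, that PDE reduces to $\partial_{t}\Phi+\mathcal{L}\Phi=\tfrac12\psi^{2}(y)$ with terminal value $0$, whose probabilistic representation gives exactly $\Phi(t,y)=-\tfrac12\mathbb{E}[\int_{t}^{T}\psi^{2}(Y_{s})ds\mid Y_{t}=y]$ with $\psi(Y_{t})=\overline{\mu}_{t}$. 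The alternative route is purely duality: using $\tilde{U}(\lambda x)=\tilde{U}(x)-\ln\lambda$ (so $g_{1}\equiv 1$, $g_{2}(\lambda)=-\ln\lambda$) together with $\tilde{J}(0,1,Y_{0})=-1-\Phi(0,Y_{0})$ yields $J_{dual}(z)=-1-\ln z-\Phi(0,Y_{0})$, and then minimising $z\mapsto J_{dual}(z)+xz$ (Remark \ref{Lagrange}) recovers $z_{x}=1/x$ and $J(x)=\ln x-\Phi(0,Y_{0})$. The only non-routine step is the verification that $\Phi$ admits the stated Feynman--Kac representation, but this follows from the polynomial-growth regularity provided by Proposition \ref{regularity-log} together with the standard integrability $\mathbb{E}\!\int_{0}^{T}\overline{\mu}_{s}^{2}ds<\infty$ already assumed in (\ref{assumption-density}); this is the only point where one needs to be careful, and it is the main (mild) obstacle of the proof.
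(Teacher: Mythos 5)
Your proposal is correct and follows essentially the same route as the paper: substitute $I(y)=1/y$ and $\tilde{\nu}=0$ into Theorem \ref{optimal-wel} to get $\tilde{R}_{t}=x/Z_{t}^{0}$, read off $\tilde{\pi}$ by matching the It\^o dynamics of $x/Z_{t}^{0}$ with the wealth equation, and identify $J(x)=\ln x-\Phi(0,Y_{0})$ through the dual value function supplied by Theorem \ref{logarithmic-verification-dual} (your Feynman--Kac detour just re-derives the identity $-\Phi(t,y)=\tfrac12\mathbb{E}[\int_{t}^{T}\psi^{2}(Y_{s})ds\mid Y_{t}=y]$ that the verification theorem already contains). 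One small inaccuracy: assumption (\ref{assumption-density}) only gives $\int_{0}^{T}\overline{\mu}_{s}^{2}ds<\infty$ almost surely, not in expectation, so the integrability needed to kill the stochastic integral in $\mathbb{E}[\ln Z_{T}^{0}]$ should instead be drawn from the verification theorem's argument (polynomial growth of $\Phi$ plus (\ref{integrability-condition-dual})), exactly as the paper implicitly does.
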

\begin{proof}
\noindent In this case we have $I(x)=\dfrac{1}{x}$ and from theorem \ref{logarithmic-verification}, the dual optimizer $\tilde{\nu}=0$. Moreover, the Lagrange multiplier $z_{x}=\dfrac{1}{x}$.  Therefore from (\ref{rappel-wealth}), the optimal wealth is given  by 
\begin{equation}\label{wealth-log}
\tilde{R}_{t}=\dfrac{x}{Z_{t}^{0}}. 
\end{equation}
By applying It\^{o}'s formula to (\ref{wealth-log}) and from proposition \ref{innovation-proc}, we obtain that:
\[d\tilde{R}_{t}=\tilde{R}_{t}\psi(Y_{t})d\tilde{W}^{1}_{t}\]
On the other hand, we have from (\ref{forme-Y}) that $d\tilde{R}_{t}=\tilde{R}_{t}\tilde{\pi}_{t}\delta(Y_{t})d\tilde{W}^{1}_{t}$. Therefore comparing these two expressions for $\tilde{R}_{t}$, we obtain that the optimal portfolio $\tilde{\pi}$ is given by
(\ref{portf-primal}).Finally from the definition of the primal value function and (\ref{wealth-log}), we have $J(x)=\ln(x)-\mathbb{E}[\ln(Z_{T}^{0})]=\ln(x)+1+\tilde{J}(0,1,Y_{0})=\ln(x)-\Phi(0,Y_{0})$. The last equality comes from theorem \ref{logarithmic-verification-dual}. 
\end{proof}

\smallskip\smallskip
\noindent\underline{\textbf{Power utility:}} $U(x)=x^{p}/p~~0<p<1$.
\begin{proposition}\label{optimal-port-pow}
We suppose the assumptions of theorems \ref{power-verification-dual} and \ref{regularity-power} hold. Then the optimal wealth is given by:
\[
\tilde{R}_{t}=\dfrac{x}{\mathbb{E}[(Z_{T}^{\tilde{\nu}})^{q}]}(Z_{t}^{\tilde{\nu}})^{q-1}\exp\left(-\Phi(t,Y_{t})\right).
\]
the associated optimal portfolio is given by the Markov control $\{\tilde{\pi}_{t}=\tilde{\pi}(t,Y_{t})\}$ with 
\begin{equation}\label{portf-primal-power}
\tilde{\pi}_{t}=\dfrac{1}{1-p}\dfrac{\psi(Y_{t})}{\delta(Y_{t})}-\dfrac{K_{1}^{T}(Y_{t})}{\delta(Y_{t})}D_{y}\Phi(t,Y_{t})
\end{equation}
and the primal value function is given by:
\[
J(x)=\dfrac{x^{p}}{p}\exp(-(1-p)\Phi(0,Y_{0})).
\]

\noindent Where $q=\dfrac{p}{p-1}$,  $\tilde{\nu}$ is given by (\ref{optimal-dual-nu}) and $\Phi$ is a solution of the semilinear equation (\ref{semi-linear-equation-dual}) with boundary condition (\ref{boundary-condition-dual}).
\end{proposition}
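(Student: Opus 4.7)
The plan is to combine Theorem \ref{optimal-wel} with the characterization of the dual value function from Theorem \ref{power-verification-dual}, then recover $\tilde{\pi}$ by matching the martingale part of $\tilde{R}_t$ under It\^o's formula.

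First I would specialize the general formulae. For $U(x)=x^p/p$ one has $U'(x)=x^{p-1}$, so $I(x)=x^{1/(p-1)}=x^{q-1}$. The Lagrange condition $\mathbb{E}[Z_T^{\tilde{\nu}}I(z_xZ_T^{\tilde{\nu}})]=x$ then reads $z_x^{q-1}\mathbb{E}[(Z_T^{\tilde{\nu}})^q]=x$, which determines $z_x^{q-1}=x/\mathbb{E}[(Z_T^{\tilde{\nu}})^q]$. Substituting into Theorem \ref{optimal-wel} gives
\begin{equation*}
\tilde{R}_t=\frac{x}{\mathbb{E}[(Z_T^{\tilde{\nu}})^q]}\,\frac{\mathbb{E}[(Z_T^{\tilde{\nu}})^q\mid\mathcal{G}_t]}{Z_t^{\tilde{\nu}}}.
\end{equation*}

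The central step is to evaluate $\mathbb{E}[(Z_T^{\tilde{\nu}})^q\mid\mathcal{G}_t]$. Using the explicit form of $Z^{\tilde{\nu}}$ and the measure $\mathbb{Q}^{\tilde{\nu}}$ introduced in the proof of Theorem \ref{power-verification-dual}, a direct computation gives
\begin{equation*}
\Bigl(\frac{Z_T^{\tilde{\nu}}}{Z_t^{\tilde{\nu}}}\Bigr)^{\!q}=\frac{d\mathbb{Q}^{\tilde{\nu}}/d\mathbb{P}|_T}{d\mathbb{Q}^{\tilde{\nu}}/d\mathbb{P}|_t}\,\exp\Bigl(\tfrac{q(q-1)}{2}\int_t^T(\psi^2(Y_s)+\tilde{\nu}_s^2)\,ds\Bigr),
\end{equation*}
so that conditioning and switching to $\mathbb{Q}^{\tilde{\nu}}$ yields $\mathbb{E}[(Z_T^{\tilde{\nu}})^q\mid\mathcal{G}_t]=(Z_t^{\tilde{\nu}})^q\,\mathbb{E}^{\tilde{\nu}}[\exp(\tfrac{1}{2}q(q-1)\int_t^T(\psi^2(Y_s)+\tilde{\nu}_s^2)ds)\mid Y_t]$. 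By equation (\ref{nouvelle-expression-dual}) and Theorem \ref{power-verification-dual} the conditional expectation on the right is exactly $\exp(-\Phi(t,Y_t))$. This delivers $\mathbb{E}[(Z_T^{\tilde{\nu}})^q\mid\mathcal{G}_t]=(Z_t^{\tilde{\nu}})^q\exp(-\Phi(t,Y_t))$, from which the announced formula for $\tilde{R}_t$ follows after dividing by $Z_t^{\tilde{\nu}}$.

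For the optimal portfolio I would apply It\^o's formula to $\tilde{R}_t=C(Z_t^{\tilde{\nu}})^{q-1}\exp(-\Phi(t,Y_t))$ with $C=x/\mathbb{E}[(Z_T^{\tilde{\nu}})^q]$ and keep only the $d\overline{W}^1_t$ coefficient, since comparison with $dR_t=R_t\pi_t(\psi\delta\,dt+\delta\,d\overline{W}^1_t)$ identifies $\tilde{\pi}_t\delta(Y_t)$. Using $dZ_t^{\tilde{\nu}}=-Z_t^{\tilde{\nu}}(\psi(Y_t)d\overline{W}^1_t+\tilde{\nu}_t d\overline{W}^2_t)$ and $(D_y\Phi)^T\Sigma\,dW_t=K_1^TD_y\Phi\,d\overline{W}^1_t+K_2^TD_y\Phi\,d\overline{W}^2_t$, the $d\overline{W}^1_t$ term of $\tilde{R}_t$ is $-\tilde{R}_t\bigl[(q-1)\psi(Y_t)+K_1^T(Y_t)D_y\Phi(t,Y_t)\bigr]$. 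Matching and using $-(q-1)=1/(1-p)$ gives (\ref{portf-primal-power}). One should also check that the admissibility and integrability conditions for $\tilde{\pi}$ are inherited from the regularity of $\Phi$ in Theorem \ref{regularity-power}.

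Finally, for the value function, since $\Phi(T,\cdot)=0$ we get $\tilde{R}_T=C(Z_T^{\tilde{\nu}})^{q-1}$, hence $\tilde{R}_T^p=C^p(Z_T^{\tilde{\nu}})^q$ because $p(q-1)=q$. Therefore $J(x)=\mathbb{E}[\tilde{R}_T^p]/p=C^p\mathbb{E}[(Z_T^{\tilde{\nu}})^q]/p=x^p\mathbb{E}[(Z_T^{\tilde{\nu}})^q]^{1-p}/p$, and using $Z_0^{\tilde{\nu}}=1$ the identity $\mathbb{E}[(Z_T^{\tilde{\nu}})^q]=\exp(-\Phi(0,Y_0))$ obtained in the previous step gives the claimed expression $J(x)=(x^p/p)\exp(-(1-p)\Phi(0,Y_0))$. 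The main obstacle, in my view, is the conditional computation in the second step: one has to justify the measure change cleanly (using assumption $iv)$ of \textbf{H} so that $d\mathbb{Q}^{\tilde{\nu}}/d\mathbb{P}$ is a genuine martingale) and then invoke the verification identity in conditional form, which requires the Markov property of $Y$ and the polynomial bound on $D_y\Phi$ already used in Theorem \ref{power-verification-dual}.
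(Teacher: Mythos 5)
Your proposal is correct and follows essentially the same route as the paper: specialize $I$ and $z_{x}$ for the power utility, reduce $\tilde{R}_{t}$ to $\mathbb{E}[(Z_{T}^{\tilde{\nu}})^{q}\mid\mathcal{G}_{t}]$, evaluate that conditional expectation via the verification theorem \ref{power-verification-dual}, match the $d\overline{W}^{1}$ coefficient under It\^{o}'s formula to get $\tilde{\pi}$, and use $\Phi(T,\cdot)=0$ for $J(x)$. The only difference is that you spell out the measure-change computation behind $\mathbb{E}[(Z_{T}^{\tilde{\nu}})^{q}\mid\mathcal{G}_{t}]=(Z_{t}^{\tilde{\nu}})^{q}\exp(-\Phi(t,Y_{t}))$, which the paper leaves implicit in the phrase ``from theorem \ref{power-verification-dual} we deduce''.
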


\begin{proof}
In this case we have $I(x)=x^{1/(p-1)}$ and from theorem \ref{power-verification}, the dual optimizer $\tilde{\nu}$ is given by (\ref{optimal-dual-nu}). The Lagrange multiplier $z_{x}=\left(\dfrac{x}{\mathbb{E}[Z_{T}^{p/p-1}]}\right)^{p-1}$.  Therefore from (\ref{rappel-wealth}), the optimal wealth is given by 
\begin{align*}
\tilde{R}_{t}=\mathbb{E}\left[\frac{Z_{T}^{\tilde{\nu}}}{Z_{t}^{\tilde{\nu}}}I(z_{x}Z_{T}^{\tilde{\nu}})|\mathcal{G}_{t}\right]&=
\mathbb{E}\left[\frac{Z_{T}^{\tilde{\nu}}}{Z_{t}^{\tilde{\nu}}}(z_{x})^{1/(p-1)}(Z_{T}^{\tilde{\nu}})^{1/(p-1)}|\mathcal{G}_{t}\right]\\
&=\dfrac{x}{\mathbb{E}[(Z_{T}^{\tilde{\nu}})^{q}]}\dfrac{1}{Z_{t}^{\tilde{\nu}}}\mathbb{E}\left[(Z_{T}^{\tilde{\nu}})^{q}|\mathcal{G}_{t}\right]
\end{align*}
Therefore from theorem \ref{power-verification-dual}, we deduce that:
\begin{equation}\label{wealth-pow}
\tilde{R}_{t}=\dfrac{x}{\mathbb{E}[(Z_{T}^{\tilde{\nu}})^{q}]}(Z_{t}^{\tilde{\nu}})^{q-1}\exp\left(-\Phi(t,Y_{t})\right).
\end{equation}
Now, as in the logarithmic case,  by writing $d\tilde{R}_{t}=\tilde{R}_{t}\pi_{t}\delta(V_{t})d\tilde{W}^{1}_{t}$ and applying It\^{o}'s formula to $(Z_{t}^{\tilde{\nu}})^{q-1}\exp\left(-\Phi(t,Y_{t})\right)$, then  after comparing the two expressions for $\tilde{R}_{t}$, we deduce that:
\[
\tilde{\pi}_{t}=\dfrac{1}{1-p}\dfrac{\psi(Y_{t})}{\delta(Y_{t})}-\dfrac{K_{1}^{T}(Y_{t})}{\delta(Y_{t})}D_{y}\Phi(t,Y_{t}).
\]
Finally, from (\ref{wealth-pow}) and the boundary condition $\Phi(T,Y_{t})=0$, we have:
\[
J(x)=\dfrac{x^{p}}{p}\mathbb{E}[(Z_{T}^{\tilde{\nu}})^{q}]^{1-p}=\dfrac{x^{p}}{p}\exp(-(1-p)\Phi(0,Y_{0})).
\]
where the last equality comes from theorem \ref{power-verification-dual}.
\end{proof}

Let us now deduce the following relation between the primal and dual control function.
\begin{corollary}
The optimal portfolio $\tilde{\pi}$ is given by
\begin{equation}\label{relation}
\tilde{\pi}_{t}=\dfrac{1}{1-p}\dfrac{\psi(Y_{t})}{\delta(Y_{t})}-\dfrac{1}{1-p}\dfrac{K_{1}^{T}(K_{2}^{T})^{-1}}{\delta(Y_{t})}\tilde{\nu}_{t}.
\end{equation}
\end{corollary}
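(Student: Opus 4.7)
The plan is to derive the corollary by a direct algebraic substitution, combining the two key formulas already established: the optimal primal portfolio from Proposition \ref{optimal-port-pow} and the optimal dual control from the verification Theorem \ref{power-verification-dual}. Since both $\tilde{\pi}_t$ and $\tilde{\nu}_t$ are expressed in terms of the same gradient $D_y\Phi(t,Y_t)$ of the solution to the semilinear PDE (\ref{semi-linear-equation-dual}), the idea is to eliminate $D_y\Phi$ between them.

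First, I would recall from Proposition \ref{optimal-port-pow} that
\[
\tilde{\pi}_{t}=\frac{1}{1-p}\frac{\psi(Y_{t})}{\delta(Y_{t})}-\frac{K_{1}^{T}(Y_{t})}{\delta(Y_{t})}D_{y}\Phi(t,Y_{t}),
\]
and from equation (\ref{optimal-dual-nu}) of Theorem \ref{power-verification-dual} that
\[
\tilde{\nu}_{t}=-\frac{1}{q-1}K_{2}^{T}(Y_{t})D_{y}\Phi(t,Y_{t}).
\]
Next, using the identity $q=p/(p-1)$, I would compute
\[
\frac{1}{q-1}=\frac{1}{\tfrac{p}{p-1}-1}=p-1,\qquad -\frac{1}{q-1}=1-p,
\]
which yields the clean relation $K_{2}^{T}(Y_{t})D_{y}\Phi(t,Y_{t})=\dfrac{\tilde{\nu}_{t}}{1-p}$.

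Solving formally for $D_{y}\Phi(t,Y_{t})$ via the (pseudo-)inverse $(K_{2}^{T})^{-1}$, this gives $D_{y}\Phi(t,Y_{t})=\dfrac{1}{1-p}(K_{2}^{T})^{-1}\tilde{\nu}_{t}$. Substituting this expression into the formula for $\tilde{\pi}_{t}$ would yield
\[
\tilde{\pi}_{t}=\frac{1}{1-p}\frac{\psi(Y_{t})}{\delta(Y_{t})}-\frac{K_{1}^{T}(Y_{t})}{\delta(Y_{t})}\cdot\frac{1}{1-p}(K_{2}^{T})^{-1}\tilde{\nu}_{t},
\]
which is exactly the claimed identity (\ref{relation}) after collecting the scalar factor $1/(1-p)$.

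I expect no real obstacle in the computation itself: the proof is essentially one line of substitution and one arithmetic simplification of $1/(q-1)$. The only subtle point is notational, namely the meaning of $(K_{2}^{T})^{-1}$, since $K_{2}$ is a column vector and $K_{2}^{T}$ is a row vector. In keeping with the ambient convention in Sections 4--5 (and with the fact that only the projection of $D_y\Phi$ along $K_2$ enters the dual control), I would interpret $(K_{2}^{T})^{-1}$ as the Moore--Penrose pseudo-inverse, so that the substitution above holds along the subspace where $\tilde{\nu}_t$ determines $D_y\Phi$; this is the only place where care is required when reading (\ref{relation}).
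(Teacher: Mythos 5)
Your proof is correct and follows essentially the same route as the paper, which simply notes that the corollary is ``deduced easily'' from Theorem \ref{power-verification-dual} and Proposition \ref{optimal-port-pow}: you make the substitution and the arithmetic $-1/(q-1)=1-p$ explicit. Your remark that $(K_{2}^{T})^{-1}$ must be read as a (pseudo-)inverse of a row vector, so the elimination of $D_{y}\Phi$ is only formal along the direction of $K_{2}$, is a fair observation about a looseness already present in the paper's own statement.
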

\begin{proof}
The proof can be deduced easily from theorem \ref{power-verification} and proposition \ref{optimal-port-pow}.
\end{proof}
%
%\begin{remark}
%Notice that in general it is not easy to deduce an explicit formulas for the optimal wealth and strategy as in the above example. Because in general cases, the solution of the dual problem is not zero and then we can't have a simplification in the above formulas which leads to make explicit the calculus. So for general case, martingale approach is not useful to resolve our optimization problem. For that, we need to use another approach, like dynamic programming approach or PDE approach.   
%\end{remark}
\begin{remark}
For the logarithmic case, we notice that in the case of partial information,
the optimal portfolio can be formally derived from the full information case by replacing
the unobservable risk premium $\tilde{\mu}_{t}$ by its estimate $\overline{\mu}_{t}$. But on the other hand, in the power utility function, this property does not hold and the optimal strategy cannot be derived from the full information case by replacing the risk $\tilde{\mu}_{t}$ by its best estimate $\overline{\mu}_{t}$ due to the last additional term which depend on the filter.

\noindent This property corresponds to the so called separation principle. It is proved in Kuwana \cite{Kuwana} that certainty equivalence holds if and only if the utilities functions are logarithmic.
\end{remark}

\begin{remark}\label{pas-PDE-approach}
The advantage of using the martingale approach instead of the dynamic programming approach (PDE approach) is that we don't need to impose any constraint on the admissible portfolio controls, while it is essential in the case of the PDE approach. In fact, with the PDE approach, we need to make the following constraint on the admissible portfolio controls:
\begin{equation}\label{admissible-condition}
\sup_{t\in[0,T]}\mathbb{E}[\exp(c|\delta(Y_{t})\pi_{t}|)] <\infty,~~~~~~~\mbox{for some}~c>0.
\end{equation} 
this constraint is indispensable to impose in order to show a verification theorem in the case of power utility function. 
\end{remark}

\subsection{Application}
Here we give an example of stochastic volatility model for which we can obtain a closed form for the value function and the optimal portfolio. Let us consider the Log Ornstein-Uhlenbeck model defined in (\ref{model-S-app}), (\ref{model-V-app}) and (\ref{model-mu-app}). Also we consider the power utility function $U(x)=\dfrac{x^{p}}{p},~~0<p<1$. 
 
\smallskip
\noindent Firstly, notice that we have the following dynamics of $(R_{t}^{\pi},V_{t},\overline{\mu}_{t})$ in the full observation framework: 
\begin{align*}
&dR_{t}^{\pi}=\R_{t}^{\pi}\pi_{t}\left(\overline{\mu}_{t}e^{V_{t}}dt+e^{V_{t}}d\overline{W}^{1}_{t}\right)\\
& dV_{t}=\lambda_{V}\left(\theta- V_{t}\right) dt + \sigma_{V}\rho d\overline{W}^{1}_{t}+\sigma_{V}\sqrt{1-\rho^{2}} d\overline{W}^{2}_{t}\\
&d\overline{\mu}_{t}=\left(-\lambda_{\mu}\overline{\mu}_{t}+\lambda_{\mu}\theta_{\mu}\right)dt +\Theta_{t}^{11}d\overline{W}^{1}_{t}+\Theta_{t}^{12}d\overline{W}^{2}_{t}.
\end{align*}
 where the last dynamics is deduced from (\ref{filter-estimates-application}). $\Theta^{11}$ and $\Theta^{12}$ are solutions of Riccati equation (\ref{Riccati-equation}). 
 
Therefore the primal value function $J(x)$ and the associated optimal portfolio $\tilde{\pi}_{t}$ are given explicitly.  

\begin{proposition}\label{explicit-form-exmaple}
The optimal portfolio is given by: 
\[
\tilde{\pi}_{t}=\dfrac{1}{p-1}\dfrac{\overline{\mu}_{t}}{e^{V_{t}}}-\dfrac{\rho\sigma_{V}}{e^{V_{t}}}[\tilde{A}(t)+(T-t)] +\dfrac{\Theta_{11}}{e^{V_{t}}}(2\overline{A}(t)\overline{\mu}_{t}+\overline{B}(t)).
\]
and the primal value function is given by:
\[
J(x)=\dfrac{x^{p}}{p}\exp\big[-(1-p)\big(\tilde{A}(0)V_{0}+\tilde{B}(0)-V_{0}T-\overline{A}(0)\overline{\mu}^{2}_{0}-\overline{B}(0)\overline{\mu}_{0}-\overline{C}(0)\big)\big].
\]
where:
 \begin{align*}
 &\tilde{A}(t)=-\lambda_{V}\displaystyle\int_{t}^{T}(T-s)e^{-\lambda_{V}(s-t)}ds. \\
 &\tilde{B}(t)=\displaystyle\int_{t}^{T}\Big[-\dfrac{1}{2}(\sigma_{V}^{2}-\dfrac{q}{q-1}(1-\rho^{2})\sigma_{V}^{2})A^{2}(s)+\left((\sigma_{V}^{2}-\dfrac{q}{q-1}(1-\rho^{2})\sigma_{V}^{2})+\lambda_{V}\theta\right)A(s)\\
&~~~~~~~~~~-\dfrac{1}{2}(\sigma_{V}^{2}-\dfrac{q}{q-1}(1-\rho^{2})\sigma_{V}^{2})(T-s)^{2}-\lambda_{V}\theta(T-s)\Big] ds.
\end{align*}
and $\overline{A}$ is solution of the following Riccati equation:
\begin{align*}
\overline{A}^{'}(t)=-2\left(\Theta_{11}^{2}+\Theta_{12}^{2}-\dfrac{q}{q-1}\Theta_{12}^{2}\right)\overline{A}^{2}(t)+2(\lambda_{\mu}+q\Theta_{11})\overline{A}(t)-\dfrac{1}{2}q(q-1),~~\mbox{with}~~\overline{A}(T)=0
\end{align*}
and 
\begin{align*}
&\overline{B}(t)=\displaystyle\int_{t}^{T}B_{1}(s)\overline{A}(s) \exp\big[-(\lambda_{\mu}+q\Theta_{11})(s-t)+2(\Theta_{11}^{2}+\Theta_{12}^{2}-\dfrac{q}{q-1}\Theta_{12}^{2})\displaystyle\int_{t}^{s}\overline{A}(u)du\Big]ds.\\
&\overline{C}(t)=\displaystyle\int_{t}^{T}\Big[(\Theta_{11}^{2}+\Theta_{12}^{2})\overline{A}(s)+\dfrac{1}{2}\left(\Theta_{11}^{2}+\Theta_{12}^{2}-\dfrac{q}{q-1}\Theta_{12}^{2}\right)\overline{B}^{2}(s)-B_{1}(s)\overline{B}(s)\Big]ds.
 \end{align*}
where
 \[
B_{1}(s)=+2\left[(\rho\sigma_{V}\Theta_{11}+\sqrt{1-\rho^{2}}\sigma_{V}\Theta_{12}-\dfrac{q}{q-1}\sqrt{1-\rho^{2}}\sigma_{V}\Theta_{12})(\tilde{A}(s)-(T-s))-\lambda_{\mu} \theta_{\mu}\right]
 \]
and with terminal conditions: $\tilde{A}(T)=\tilde{B}(T)=\overline{A}(T)=\overline{B}(T)=\overline{C}(T)=0$.
\end{proposition}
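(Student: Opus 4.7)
The plan is to recognize this statement as a verification/computation in the power utility framework already developed in Theorem \ref{power-verification-dual}, Proposition \ref{optimal-port-pow}, and Theorem \ref{regularity-power}, with all the work concentrated on solving the semilinear PDE (\ref{semi-linear-equation-dual})--(\ref{terminal-condition-power-dual}) in closed form for the Log Ornstein--Uhlenbeck model. First I would read off the coefficients of the reduced system from the full-observation dynamics: with $Y_t=(V_t,\overline\mu_t)$, one has $\psi(v,m)=m$, $\delta(v)=e^v$, $\Gamma(v,m)=(\lambda_V(\theta-v),\ \lambda_\mu(\theta_\mu-m))^T$, $K_1(t)=(\rho\sigma_V,\Theta^{11}_t)^T$, $K_2(t)=(\sqrt{1-\rho^2}\sigma_V,\Theta^{12}_t)^T$, where $\Theta^{11},\Theta^{12}$ are the deterministic entries of the Riccati matrix. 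Since $\Sigma$ then depends only on $t$ (not on $(v,m)$), assumption (H$'$ \emph{I}) of Theorem \ref{regularity-power} is satisfied, so a classical solution $\Phi$ with the required growth exists and Theorem \ref{power-verification-dual} applies.

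The next step is to guess a separable polynomial ansatz suggested by the structure of the coefficients: $\Gamma$ is affine in $(v,m)$, $\Sigma\Sigma^T$ is independent of $(v,m)$, and the Hamiltonian (\ref{second-expression-dual}) contains the quadratic term $\Psi(y)=\frac12 q(q-1)m^2$. This suggests taking
\[
\Phi(t,v,m) \;=\; \bigl(\tilde A(t)-(T-t)\bigr)\,v \;+\; \tilde B(t) \;-\; \overline A(t)\,m^2 \;-\; \overline B(t)\,m \;-\; \overline C(t),
\]
for which the terminal condition $\Phi(T,\cdot,\cdot)=0$ is automatic once $\tilde A(T)=\tilde B(T)=\overline A(T)=\overline B(T)=\overline C(T)=0$, and which reproduces the displayed value $\Phi(0,V_0,\overline\mu_0)$ appearing in $J(x)$. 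Substituting into (\ref{semi-linear-equation-dual}) and using $\partial_v\Phi=\tilde A(t)-(T-t)$ and $\partial_m\Phi=-2\overline A(t)m-\overline B(t)$ produces only polynomial terms of degree $\le 2$ in $m$ and $\le 1$ in $v$, with no $vm$-cross term (this is the key algebraic consistency check). Matching coefficients then yields a cascade of ODEs: a linear ODE for $\tilde A$ (driven by $-\lambda_V$) with explicit integrating factor producing the formula for $\tilde A(t)$; a genuine Riccati equation for $\overline A$ (the quadratic nonlinearity comes from $\frac12(D_y\Phi)^T(\Sigma\Sigma^T-G)D_y\Phi$ acting through $\partial_m\Phi$); a linear ODE for $\overline B$ coupled to $\overline A$, solvable by variation of constants and producing the stated integral formula involving $B_1(s)$; and, finally, pure quadrature for $\tilde B$ and $\overline C$.

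Once $\Phi$ is known, I would apply Proposition \ref{optimal-port-pow} directly: the optimal portfolio formula (\ref{portf-primal-power}) gives
\[
\tilde\pi_t \;=\; \frac{1}{1-p}\,\frac{\overline\mu_t}{e^{V_t}} \;-\; \frac{1}{e^{V_t}}\bigl(\rho\sigma_V\,\partial_v\Phi(t,Y_t) + \Theta^{11}_t\,\partial_m\Phi(t,Y_t)\bigr),
\]
and substituting $\partial_v\Phi=\tilde A(t)-(T-t)$ and $\partial_m\Phi=-2\overline A(t)\overline\mu_t-\overline B(t)$ recovers the stated expression for $\tilde\pi_t$; the value function follows from $J(x)=\frac{x^p}{p}\exp(-(1-p)\Phi(0,Y_0))$. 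The main obstacle is not conceptual but computational: keeping track of the many quadratic-form coefficients arising from $\Sigma\Sigma^T-G=\Sigma\Sigma^T-\tfrac{q}{q-1}K_2K_2^T$ when expanding $\frac12(D_y\Phi)^T(\Sigma\Sigma^T-G)(D_y\Phi)$, and verifying by inspection that the $vm$-terms from $-(D_y\Phi)^T F$ cancel those that would otherwise force the ansatz to include a $vm$-coefficient. Once the coefficient matching is laid out carefully, the ODE system decouples in the triangular order $\tilde A\to\overline A\to\overline B\to(\tilde B,\overline C)$ and the integrations are elementary.
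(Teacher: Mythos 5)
Your proposal is correct and follows essentially the same route as the paper: both reduce the claim to Proposition \ref{optimal-port-pow} together with the verification theorems, and both solve the semilinear PDE (\ref{semi-linear-equation-dual})--(\ref{terminal-condition-power-dual}) by an ansatz that is affine in $v$ and quadratic in $m$, yielding the same triangular cascade of ODEs ($\tilde A \to \overline A \to \overline B \to (\tilde B,\overline C)$). The only cosmetic difference is that the paper reaches the polynomial form in two separation stages, first writing $\Phi=\tilde\Phi(t,v)-\tilde f(t,v,m)$ and then $\tilde f(t,v,m)=v(T-t)+\overline f(t,m)$, whereas you write the fully expanded ansatz $\Phi=(\tilde A(t)-(T-t))v+\tilde B(t)-\overline A(t)m^{2}-\overline B(t)m-\overline C(t)$ directly; these coincide term by term.
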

\begin{proof}
See Appendix A. 
\end{proof}
%The advantage of the above Log Ornstein-Uhlenbeck model is that we can obtain a closed form to the semiliear equation (\ref{semi-linear-equation-dual}) by the separation technique. 

\newpage
\appendix
\section{Appendix}
\underline{\textbf{Filtering}}
Let us consider the following partially observation system: 
\begin{align}
\label{signal-process}
&dX_{t}=A(X_{t})dt +G(X_{t})dM_{t}+ B(X_{t})dW_{t}\\
&dY_{t} = dW_{t}+h(X_{t}) dt
\label{observation-process}
\end{align}
Here $X$ is the two dimensional signal process and $Y$ is the two dimensional observation process. $A$ is a $2\times 1$ matrix, $G, B$ are $2\times 2$ matrix and $h$ is $2\times 1$ matrix . $W$ and $M$ are two dimensional independents Brownian motions. 
 
Now, we will be interested in the filtering problem which consists in evaluating the conditional expectation of the unobservable process having the observations. In the sequel, we denote this conditional expectation by $\alpha_{t}(\phi)=\mathbb{E}\left[\phi(X_{t})|\mathcal{F}_{t}^{Y}\right]$, where $\mathbb{F}^{Y}$ is the filtration generated by the observation process $Y$.

\vspace{2mm}

\noindent Then one of the approaches to obtain the evolution equation for $\alpha_{t}$ is to change
the measure. Using the change of measure $\tilde{\mathbb{P}}$ given in (\ref{equivalent-pro}), we can define a new measure $\tilde{\mathbb{P}}$, such that the observation process becomes a $\tilde{\mathbb{P}}$ Brownian motion independent of the signal variable $X_{t}$. For that we need to discuss some conditions under which the process $L$ is a martingale:
 \begin{equation}\label{martingale-L}
 L_{t}=\exp\left(-\sum_{i=1}^{2}\displaystyle\int_{0}^{t}h_{i}(X_{s})dW^{i}_{s}-\frac{1}{2}\sum_{i=1}^{2}\displaystyle\int_{0}^{t}h_{i}(X_{s})^{2}ds\right).
 \end{equation}

\noindent Firstly, the classical condition is Novikov's condition:
\begin{align*}
\mathbb{E}\left[\exp\left(\frac{1}{2}\displaystyle\int_{0}^{t}h_{1}(X_{s})^{2}ds+\frac{1}{2}\displaystyle\int_{0}^{t}h_{2}(X_{s})^{2}ds\right)\right]<\infty.
\end{align*}
Normally Novikov's condition is quite difficult to verify directly, so we need to use an alternative conditions under which the process $L$ is a martingale. 

\noindent From lemma $3.9$ in \cite{Bain}, we can deduce that $L$ is a martingale if the following conditions are satisfied:
\begin{align}\label{assumption-L}
\mathbb{E}\left[\displaystyle\int_{0}^{t}(||h(X_{s})||^{2}) ds\right]<\infty,~~~~~~~\mathbb{E}\left[\displaystyle\int_{0}^{t}L_{s}||h(X_{s})||^{2}ds\right]<\infty~~~~~~~\forall t>0.
\end{align} 
Let us now denote by $\Lambda_{t}$ the $\left(\tilde{\mathbb{P}},\mathbb{F}\right)$-martingale given 
by $\Lambda_{t}=\frac{1}{L_{t}}$. We then have:
\begin{align*}
\frac{d\mathbb{P}}{d\tilde{\mathbb{P}}}|\mathcal{F}_{t}&=\Lambda_{t},~~~~0 \leq t \leq~T\\
&=\exp\left(\sum_{i=1}^{2}\displaystyle\int_{0}^{t}h_{i}(X_{s})dW^{i}_{s}-\frac{1}{2}\sum_{i=1}^{2}\displaystyle\int_{0}^{t}h_{i}(X_{s})^{2}ds\right).
\end{align*}

\vspace{3mm}
\noindent Therefore the computation of $\alpha_{t}(\phi)$ is obtained by the so-called Kallianpur-Striebel formula, which is related to Bayes formula. For every  $\phi \in \mathbb{B}(\mathbb{R}^{d})$, we have the following representation:

\begin{equation}\label{Kallianpur}
\alpha_{t}(\phi):=\mathbb{E}\left[\phi(X_{t})|\mathcal{F}^{Y}_{t}\right]=\frac{\tilde{\mathbb{E}}\left[\phi(X_{t})\Lambda_{t}|\mathcal{G}^{Y}_{t}\right]}{\tilde{\mathbb{E}}\left[\Lambda_{t}|\mathcal{G}^{Y}_{t}\right]}:=\frac{\psi_{t}(\phi)}{\psi_{t}(1)},
\end{equation}

\noindent with $\psi_{t}(\phi):=\tilde{\mathbb{E}}[\phi(X_{t})\Lambda_{t}|\mathcal{G}^{Y}_{t}]$ is the unnormalized conditional distribution of $\phi(X_{t})$, given $\mathcal{G}^{Y}_{t}$, $\psi_{t}(1)$ can be viewed as the normalising factor and $\mathbb{B}(\mathbb{R}^{d})$ is the space of bounded measurable functions $\mathbb{R}^{2}\to\mathbb{R}$. 

\vspace{2mm}
\noindent In the following, we assume that for all $t\geq 0$,
 \begin{align}
\tilde{\mathbb{P}}\left[\displaystyle \int_{0}^{t}[\psi_{s}(||h||)]^{2}ds <\infty\right]=1, ~~\mbox{for all}~t >0.
\label{cond-zakai-1}
\end{align}

\noindent Let us now introduce the following notations which will be useful in the sequel. 

\begin{notations}
\noindent Let $K=\dfrac{1}{2}(BB^{T}+GG^{T})$ and $\mathcal{A}$ be the generator associated with the process $X$ in the second order differential operator:
\begin{equation}\label{Operator-A}
\mathcal{A}\phi=\sum_{i,j=1}^{2}K_{ij}\partial_{x_{i}x_{j}}^{2}\phi +\sum_{i=1}^{2}A_{i}\partial_{x_{i}}\phi, ~~~~~~~~\mbox{for}~\phi\in \mathbb{B}(\mathbb{R}^{d}).
\end{equation}
and its adjoint $\mathcal{A}^{*}$ is given by:
\begin{equation}\label{adjoint-A}
\mathcal{A}^{*}\phi=\sum_{i=1}^{2}\partial_{x_{i}x_{j}}^{2}(K_{ij}\phi)-\sum_{i=1}^{2}\partial_{x_{i}}(A_{i}\phi).
\end{equation}
Also we introduce the following operator $\mathcal{B}=(\mathcal{B}^{k})_{k=1}^{2}$:
\begin{equation}\label{operator-B}
\mathcal{B}^{k} \phi=\sum_{i=1}^{2}B_{ik}\partial_{x_{i}}\phi, ~~~~~~~~\mbox{for}~\phi\in \mathbb{B}(\mathbb{R}^{d}).
\end{equation}
and the adjoint of the operator $\mathcal{B}$ is given by $\mathcal{B}^{k,*}=(\mathcal{B}^{k,*})_{k=1}^{2}$:
\begin{equation}\label{adjoint-operator-B}
\mathcal{B}^{1,*}\phi=-\sum_{i=1}^{2}\partial_{x_{i}}(B_{i1}\phi),~~~~\mathcal{B}^{2,*}\phi=-\sum_{i=1}^{2}\partial_{x_{i}}(B_{i2}\phi).
\end{equation} 
\end{notations}
\noindent The following two propositions show that the unnormalized conditional distribution (resp. the conditional distribution) of the signal is a solution of a linear stochastic partial differential equation often called the Zakai equation (resp. nonlinear stochastic and parabolic type partial differential equation often called the Kushner-Stratonovich equation). These results due to Bain and Crisan \cite{Bain} and Pardoux \cite{pardoux}.

%\noindent Notice that this condition will be enough to show that the stochastic integral used below in equation (\ref{zakai-eq}) is well defined for any $f\in B(\mathbb{R}^{d})$ (see exercise $3.23$ in \cite[P.61]{Bain}).

\begin{proposition}\label{general-Zakai-equation}
Assume that the signal and observation processes satisfy (\ref{signal-process}) and (\ref{observation-process}). If conditions (\ref{assumption-L}) and (\ref{cond-zakai-1}) are satisfied then the unnormalized conditional distribution $\psi_{t}$ satisfies the following Zakai equation:
 \begin{equation}\label{general-zakai-ex}
d\psi_{t}(\phi)=\psi_{t}(A\phi)dt+\psi_{t}\left(\left(h^{1}+\mathcal{B}^{1}\right)\phi\right)d\tilde{W}_{t}^{1}+\psi_{t}\left(\left(h^{2}+\mathcal{B}^{2}\right)\phi\right)d\tilde{W}_{t}^{2}.
 \end{equation}
   $ \mbox{for any}~~\phi\in B(\mathbb{R}^{2})$.

\end{proposition}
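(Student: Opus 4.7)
The plan is to follow the classical reference-probability (Duncan--Mortensen--Zakai) approach: derive an Itô expansion for $\phi(X_t)\Lambda_t$ under $\tilde{\mathbb{P}}$, then take the conditional expectation with respect to $\mathcal{G}_t^Y=\mathcal{F}_t^{\tilde W}$ term by term. Assumption (\ref{assumption-L}) guarantees that $L$, and therefore $\Lambda=1/L$, is a genuine martingale under $\tilde{\mathbb{P}}$, while condition (\ref{cond-zakai-1}) provides the integrability needed to commute conditional expectation with the Lebesgue and Itô integrals that will appear.

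First I would rewrite the signal dynamics under $\tilde{\mathbb{P}}$. Substituting $dW_t=d\tilde W_t-h(X_t)\,dt$ in (\ref{signal-process}) and noting that $M$ remains a $\tilde{\mathbb{P}}$-Brownian motion independent of $\tilde W$ (the Girsanov kernel involves only $W^1,W^2$, which are independent of $M$), Itô's formula applied to $\phi(X_t)$ with $K=\tfrac{1}{2}(BB^T+GG^T)$ gives
\begin{equation*}
d\phi(X_t)=\Bigl\{\mathcal{A}\phi-\sum_{k=1}^{2}h^k\mathcal{B}^k\phi\Bigr\}(X_t)\,dt+(\nabla\phi)^T G(X_t)\,dM_t+\sum_{k=1}^2\mathcal{B}^k\phi(X_t)\,d\tilde W_t^k.
\end{equation*}
Combining with $d\Lambda_t=\Lambda_t\,h(X_t)^T d\tilde W_t$, the cross-variation $d\langle\phi(X),\Lambda\rangle_t=\Lambda_t\sum_k h^k(X_t)\,\mathcal{B}^k\phi(X_t)\,dt$ exactly cancels the drift correction, and the Itô product rule yields
\begin{equation*}
d(\phi(X_t)\Lambda_t)=\Lambda_t\mathcal{A}\phi(X_t)\,dt+\Lambda_t(\nabla\phi)^T G(X_t)\,dM_t+\sum_{k=1}^2\Lambda_t\bigl(\mathcal{B}^k\phi+h^k\phi\bigr)(X_t)\,d\tilde W_t^k.
\end{equation*}

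Taking $\tilde{\mathbb{E}}[\,\cdot\mid\mathcal{F}_t^{\tilde W}]$ on both sides and using the Kallianpur--Striebel identification $\psi_s(\cdot)=\tilde{\mathbb{E}}[\Lambda_s\,(\cdot)(X_s)\mid\mathcal{F}_s^{\tilde W}]$, three commutations are required: (i) the drift integral, handled by ordinary Fubini under (\ref{cond-zakai-1}), yielding $\int_0^t\psi_s(\mathcal{A}\phi)\,ds$; (ii) the $M$-integral, which vanishes by first conditioning on $\mathcal{F}_\infty^{\tilde W}$, under which $M$ is still a Brownian motion by independence, so the integral remains a centered martingale; (iii) the $\tilde W$-integrals, via the identity $\tilde{\mathbb{E}}[\int_0^t F_s\,d\tilde W_s^k\mid\mathcal{F}_t^{\tilde W}]=\int_0^t\tilde{\mathbb{E}}[F_s\mid\mathcal{F}_s^{\tilde W}]\,d\tilde W_s^k$. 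Step (iii) is the main technical hurdle: it is the standard stochastic Fubini/innovation lemma (Bain--Crisan, Lemma~3.23, or Pardoux, Lemma~2.2.7), whose hypotheses reduce to a square-integrability requirement on the integrand, which is exactly what (\ref{cond-zakai-1}) provides for $\Lambda_s(\mathcal{B}^k\phi+h^k\phi)(X_s)$ when $\phi\in\mathbb{B}(\mathbb{R}^2)$ and $\mathcal{B}^k\phi$ is bounded. Assembling the three pieces produces the claimed Zakai equation for $\psi_t(\phi)$.
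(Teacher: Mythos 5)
Your derivation is correct and is precisely the reference-probability (Duncan--Mortensen--Zakai) argument that the paper itself does not write out but delegates to Bain--Crisan and Pardoux: Itô on $\phi(X_t)\Lambda_t$ under $\tilde{\mathbb{P}}$, cancellation of the drift correction by the cross-variation, then the three standard interchange lemmas for the conditional expectation. So you have, in effect, supplied the proof the paper cites, with the same decomposition and the same role for conditions (\ref{assumption-L}) and (\ref{cond-zakai-1}).
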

\begin{proposition}\label{general-Kushner-equation}
Assume that the signal and observation processes satisfy (\ref{signal-process}) and (\ref{observation-process}). If conditions (\ref{assumption-L}) and (\ref{cond-zakai-1}) are satisfied then the conditional distribution $\alpha_{t}$ satisfies the following Kushner-Stratonovich equation:
 \begin{align}
 \nonumber
d\alpha_{t}(\phi)&=\alpha_{t}(A\phi)dt +\left[\alpha_{t}\left(\left(h^{1}+\mathcal{B}^{1}\right)\phi\right)-\alpha_{t}(h^{1})\alpha_{t}(\phi)\right]d\overline{W}_{t}^{1}\\
&~~~~~~~~~~~~~~~~+\left[\alpha_{t}\left(\left(h^{2}+\mathcal{B}^{2}\right)\phi\right)-\alpha_{t}(h^{2})\alpha_{t}(\phi)\right]d\overline{W}_{t}^{2}.
\label{general-kushner-ex}
 \end{align}
 $ \mbox{for any}~~\phi\in B(\mathbb{R}^{2})$.
\end{proposition}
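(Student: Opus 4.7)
My plan is to derive the Kushner--Stratonovich equation from the Zakai equation (Proposition~\ref{general-Zakai-equation}) by applying It\^o's formula to the ratio appearing in the Kallianpur--Striebel formula (\ref{Kallianpur}), and then rewriting the driving noise in terms of the innovation processes from Proposition~\ref{innovation-proc}. Since conditions (\ref{assumption-L}) and (\ref{cond-zakai-1}) are assumed, Proposition~\ref{general-Zakai-equation} applies and provides the dynamics of $\psi_t(\phi)$ under $\tilde{\mathbb{P}}$; this is the only analytic input I need, so no new martingale/integrability arguments have to be built from scratch.

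First, I would record the two pieces coming from Zakai: for any bounded measurable $\phi$,
\begin{equation*}
d\psi_t(\phi)=\psi_t(A\phi)\,dt+\psi_t\!\bigl((h^1+\mathcal{B}^1)\phi\bigr)d\tilde{W}^1_t+\psi_t\!\bigl((h^2+\mathcal{B}^2)\phi\bigr)d\tilde{W}^2_t ,
\end{equation*}
and, taking $\phi\equiv 1$ (so that $A\phi=0$ and $\mathcal{B}^k\phi=0$),
\begin{equation*}
d\psi_t(1)=\psi_t(h^1)\,d\tilde{W}^1_t+\psi_t(h^2)\,d\tilde{W}^2_t .
\end{equation*}
Both Brownian integrals are well defined thanks to (\ref{cond-zakai-1}).

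Next, I would apply It\^o's formula to the function $(n,d)\mapsto n/d$ at the semimartingale pair $\bigl(\psi_t(\phi),\psi_t(1)\bigr)$, using $\alpha_t(\phi)=\psi_t(\phi)/\psi_t(1)$ from (\ref{Kallianpur}). This produces three contributions: $dN_t/D_t$, $-N_t\,dD_t/D_t^2$ plus its second-order correction $N_t\,d\langle D\rangle_t/D_t^3$, and the cross term $-d\langle N,D\rangle_t/D_t^2$. Dividing each $\psi_t(\cdot)$ by $D_t=\psi_t(1)$ turns it into the corresponding $\alpha_t(\cdot)$, and after collecting the $dt$-parts one obtains
\begin{align*}
d\alpha_t(\phi)&=\alpha_t(A\phi)\,dt\\
&\quad+\bigl[\alpha_t((h^1{+}\mathcal{B}^1)\phi)-\alpha_t(\phi)\alpha_t(h^1)\bigr]\bigl(d\tilde{W}^1_t-\alpha_t(h^1)dt\bigr)\\
&\quad+\bigl[\alpha_t((h^2{+}\mathcal{B}^2)\phi)-\alpha_t(\phi)\alpha_t(h^2)\bigr]\bigl(d\tilde{W}^2_t-\alpha_t(h^2)dt\bigr),
\end{align*}
after a careful bookkeeping in which the $\alpha_t(\phi)\alpha_t(h^k)^2\,dt$ and $\alpha_t(h^k)\alpha_t((h^k{+}\mathcal{B}^k)\phi)\,dt$ terms cancel exactly with their counterparts from the It\^o correction and cross-variation.

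Finally, I would invoke Proposition~\ref{innovation-proc}, which together with the identifications $h^1(x)=m$, $h^2(x)=b$ gives $\overline{\mu}_t=\alpha_t(h^1)$, $\overline{\beta}_t=\alpha_t(h^2)$ and hence $d\overline{W}^k_t=d\tilde{W}^k_t-\alpha_t(h^k)\,dt$. Substituting yields exactly (\ref{general-kushner-ex}). The one slightly delicate point I expect is the bookkeeping of all the $dt$ terms in the It\^o expansion---it is easy to misplace a sign between the second-order term $\psi_t(h^k)^2/D_t^3$ and the cross-variation $\psi_t(h^k)\psi_t((h^k{+}\mathcal{B}^k)\phi)/D_t^2$---so I would carry this out component by component. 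Integrability of all the integrands under $\mathbb{P}$, needed to justify that the stochastic integrals against $\overline{W}^k$ are well defined, is inherited from (\ref{cond-zakai-1}) and from $\psi_t(1)>0$ a.s.
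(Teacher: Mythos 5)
Your derivation is correct and is essentially the paper's own route: the paper gives no proof of this proposition, attributing it to Bain and Crisan \cite{Bain} and Pardoux \cite{pardoux}, and the argument there is exactly what you describe --- apply It\^o's formula to the Kallianpur--Striebel ratio $\psi_t(\phi)/\psi_t(1)$ using the Zakai dynamics of Proposition \ref{general-Zakai-equation} and observe that the $dt$-terms reassemble the innovations $d\overline{W}^k_t=d\tilde{W}^k_t-\alpha_t(h^k)\,dt$; your bookkeeping of the second-order and cross-variation terms checks out. The only points worth making explicit in a fully rigorous write-up are the a.s.\ strict positivity of $\psi_t(1)$ and a localization argument justifying the It\^o quotient and the passage from local to true martingale parts, both standard under (\ref{assumption-L}) and (\ref{cond-zakai-1}).
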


\noindent \textbf{Proof of lemma \ref{EMM-opt}}

From equation (\ref{exp-EMM}), the definition of the conditional expectation and Jensen's inequality, it follows for any $\nu \in \mathcal{K}$:
\begin{align*}
&\mathbb{E}\left[\tilde{U}\left(z Z_{T}^{\nu}\right)\right]\\
&=\mathbb{E}\left[\mathbb{E}\left[\tilde{U}\left(z \exp\left(-\displaystyle\int_{0}^{T}\overline{\mu}_{s}d\overline{W}^{1}_{s}
-\frac{1}{2}\displaystyle\int_{0}^{T}\overline{\mu}_{s}^{2} ds -
\displaystyle\int_{0}^{T}\nu_{s}d\overline{W}^{2}_{s}
-\frac{1}{2}\displaystyle\int_{0}^{T}\nu_{s}^{2} ds\right)\right)|\mathcal{F}_{T}^{\tilde{W}^{1}}\right]\right]\\
&\geq\mathbb{E}\left[\tilde{U}\left(z \exp\left(-\displaystyle\int_{0}^{T}\overline{\mu}_{s}d\overline{W}^{1}_{s}
-\frac{1}{2}\displaystyle\int_{0}^{T}\overline{\mu}_{s}^{2} ds\right)\mathbb{E}\left[\exp\left(-
\displaystyle\int_{0}^{T}\nu_{s}d\overline{W}^{2}_{s}
-\frac{1}{2}\displaystyle\int_{0}^{T}\nu_{s}^{2} ds\right)|\mathcal{F}_{T}^{\tilde{W}^{1}}\right]\right)\right].
\end{align*}

\noindent On the other hand,  $\mathbb{E}\left[\exp\left(-
\displaystyle\int_{0}^{T}\nu_{s}d\overline{W}^{2}_{s}
-\frac{1}{2}\displaystyle\int_{0}^{T}\nu_{s}^{2} ds\right)\right]=1$ a.s. In fact, from the definition of the conditional expectation, it remains to prove that for each positive function $h$, for each, $t_{1},......t_{k} \in [0,T]$, we have:

\begin{equation*}
\mathbb{E}\left[\exp\left(-
\displaystyle\int_{0}^{T}\nu_{s}d\overline{W}^{2}_{s}
-\frac{1}{2}\displaystyle\int_{0}^{T}\nu_{s}^{2} ds\right)h\left(\tilde{W}^{1}_{t_{1}},.....\tilde{W}^{1}_{t_{k}}\right)\right]=\mathbb{E}\left[h\left(\tilde{W}^{1}_{t_{1}},.....\tilde{W}^{1}_{t_{k}}\right)\right].
\end{equation*}

\noindent As $\nu$ is a $\mathbb{G}$-adapted, we can define a new probability measure $\mathbb{P}^{\nu}$ equivalent to $\mathbb{P}$ on $\mathcal{G}_{T}$ given by:
\begin{equation*}
\frac{d\mathbb{P}^{\nu}}{d \mathbb{P}}=\exp\left(-\displaystyle\int_{0}^{T}\nu_{u}d\overline{W}^{2}-\frac{1}{2}\displaystyle\int_{0}^{T}\nu_{u}^{2}du\right)
\end{equation*}

\noindent By Girsanov theorem, $N$ is a $\mathbb{G}$ Brownian motion under $\mathbb{P}^{\nu}$. On the other hand, from the dynamic of $\tilde{W}^{1}$ given by $d\tilde{W}^{1}=dN_{t}+\overline{\mu}_{t}dt$ and the assumption that $\overline{\mu}_{t}\in\mathcal{F}^{\tilde{W}^{1}}_{t}$, we deduce that the law of $\tilde{W}^{1}$ remains the same under $\mathbb{P}$ and $\mathbb{P}^{\nu}$. Thus:

\begin{align*}
\mathbb{E}^{\nu}\left[h\left(\tilde{W}^{1}_{t_{1}},.....\tilde{W}^{1}_{t_{k}}\right)\right]&:=
\mathbb{E}\left[\exp\left(-
\displaystyle\int_{0}^{T}\nu_{s}d\overline{W}^{2}_{s}
-\frac{1}{2}\displaystyle\int_{0}^{T}\nu_{s}^{2} ds\right)h\left(\tilde{W}^{1}_{t_{1}},.....\tilde{W}^{1}_{t_{k}}\right)\right]\\
&=\mathbb{E}\left[h\left(\tilde{W}^{1}_{t_{1}},.....\tilde{W}^{1}_{t_{k}}\right)\right].
\end{align*} 

\noindent Therefore $\mathbb{E}\left[\exp\left(-
\displaystyle\int_{0}^{T}\nu_{s}d\overline{W}^{2}_{s}
-\frac{1}{2}\displaystyle\int_{0}^{T}\nu_{s}^{2} ds\right)\right]=1$ and then one obtains:
\begin{equation*}
\mathbb{E}\left[\tilde{U}\left(z Z_{T}^{\nu}\right)\right]\geq 
\mathbb{E}\left[\tilde{U}\left( z \exp\left(-\displaystyle\int_{0}^{t}\overline{\mu}_{s}d\overline{W}^{1}_{s}
-\frac{1}{2}\displaystyle\int_{0}^{t}\overline{\mu}_{s}^{2} ds\right)\right)\right]:=\mathbb{E}\left[\tilde{U}\left(z Z_{T}^{0}\right)\right].
\end{equation*}

\noindent On the other hand, we have from the definition of the dual problem that $\tilde{J}(z)\leq\mathbb{E}\left[\tilde{U}\left(z Z_{T}^{0}\right)\right] $, so we conclude that 
\begin{equation*}
J_{dual}(z)=\mathbb{E}\left[\tilde{U}\left(z Z_{T}^{0}\right)\right].
\end{equation*}

\vspace{5mm}
\textbf{Proof of proposition \ref{explicit-form-exmaple}}

With the Log-Ornstein model given by (\ref{model-S-app}),(\ref{model-V-app}) and (\ref{model-mu-app}), the assumptions \textbf{H} and \textbf{H'} $i)$ hold. Therefore from proposition \ref{optimal-port-pow}, we have
\[
\tilde{\pi}_{t}=\dfrac{1}{p-1}\dfrac{\overline{\mu}_{t}}{e^{V_{t}}}-\dfrac{K_{1}^{T}(Y_{t})}{e^{V_{t}}}D_{y}\Phi(t,Y_{t}),
\] 
where $K_{1}^{T}=(\rho\sigma_{V}~~~\Theta_{11})$ and $\Phi$ is solution of (\ref{semi-linear-equation-dual}). Generally, equation (\ref{semi-linear-equation-dual}) does not have closed-form, but with this model we can deduce a closed form for $\Phi$ by using the following separation transformation: For $y=(v,m)$, 
\[\Phi(t,y)=\tilde{\Phi}(t,v)-\tilde{f}(t,v,m).\]

The general idea of this separation transformation has been used by a lot of authors like Fleming \cite{Fleming} Pham \cite{Pham-1}, Rishel\cite{Rishel}.., in order to express the value function in terms of the solution to a semilinear parabolic equation. 

\noindent Now, substituting the above form of $\Phi$ into (\ref{semi-linear-equation-dual}) gives us:
\begin{align*}
&-\dfrac{\partial\tilde{\Phi}}{\partial_{t}}+\dfrac{\partial\tilde{f}}{\partial_{t}}-\dfrac{1}{2}\sigma^{2}_{V}\left[\dfrac{\partial^{2}\tilde{\Phi}}{\partial^{2}_{v}}-\dfrac{\partial^{2}\tilde{f}}{\partial^{2}v}\right]+\left(\rho\sigma_{V} \Theta_{11}+\sqrt{1-\rho^{2}}\sigma_{V}\Theta_{12}\right)\dfrac{\partial \tilde{f}}{\partial_{v,m}}+\dfrac{1}{2}(\Theta_{11}^{2}+\Theta_{12}^{2})\dfrac{\partial^{2}\tilde{f}}{\partial^{2}m}\\
&+\dfrac{1}{2}(\sigma_{V}^{2}-\dfrac{q}{q-1}(1-\rho^{2})\sigma_{V}^{2})\left[(\dfrac{\partial\tilde{\Phi}}{\partial_{v}})^{2}-2\dfrac{\partial\tilde{\Phi}}{\partial_{v}}\dfrac{\partial\tilde{f}}{\partial_{v}}+(\dfrac{\partial\tilde{f}}{\partial_{v}})^{2}\right]-\left(\lambda_{V}(\theta-v)-qm\rho\sigma_{V}\right)\left(\dfrac{\partial\tilde{\Phi}}{\partial_{v}}-\dfrac{\partial\tilde{f}}{\partial_{v}}\right)\\
&+\left(-\lambda_{\mu} m +\lambda_{\mu} \theta_{\mu}-qm\Theta_{11}\right)\dfrac{\partial\tilde{f}}{\partial_{m}}+\dfrac{1}{2}\left(\Theta_{11}^{2}+\Theta_{12}^{2}-\dfrac{q}{q-1}\Theta_{12}^{2}\right)(\dfrac{\partial \tilde{f}}{\partial_{m}})^{2}
+\dfrac{1}{2}q(q-1)m^{2}\\
 &-\left(\rho\sigma_{V}\Theta_{11}+\sqrt{1-\rho^{2}}\sigma_{V}\Theta_{12}-\dfrac{q}{q-1}\sqrt{1-\rho^{2}}\sigma_{V}\Theta_{12}\right)(\dfrac{\partial\tilde{f}}{\partial_{m}}\dfrac{\partial\tilde{\Phi}}{\partial_{v}}-\dfrac{\partial\tilde{f}}{\partial_{m}}\dfrac{\partial\tilde{f}}{\partial_{v}})=0.
\end{align*}
Thus we have a coupled PDEs for which we have not able to find its solution in general. The key is to separate the considered PDE into a PDE in $\tilde{\Phi}$  and another in $\tilde{f}$, with the fact that $\tilde{\Phi}(T,v)=0$ and $\tilde{f}(T,v,m)=0$. These two last conditions come from the boundary condition (\ref{boundary-condition-dual}). 

\vspace{3mm}
But, there is also another difficult to obtain a explicit solution for $\tilde{f}$. This difficulty comes from the terms $\dfrac{\partial\tilde{f}}{\partial_{v,m}}$ and $\dfrac{\partial\tilde{f}}{\partial v}\dfrac{\partial\tilde{f}}{\partial m}$. For that we need to impose the following separation form on $\tilde{f}$: $\tilde{f}(t,v,m)=v.(T-t)+ \overline{f}(t,m)$, with $\overline{f}(T,m)=0$.

\noindent Finally, we have the following PDEs for $\tilde{\Phi}$ and $\overline{f}$ for which we can deduce an explicit form as follows: 
\begin{align}
\nonumber
&-\dfrac{\partial\tilde{\Phi}}{\partial_{t}}-\dfrac{1}{2}\sigma^{2}_{V}\dfrac{\partial^{2}\tilde{\Phi}}{\partial^{2}_{v}}+\dfrac{1}{2}(\sigma_{V}^{2}-\dfrac{q}{q-1}(1-\rho^{2})\sigma_{V}^{2})(\dfrac{\partial\tilde{\Phi}}{\partial_{v}})^{2}-\left((\sigma_{V}^{2}-\dfrac{q}{q-1}(1-\rho^{2})\sigma_{V}^{2})+\lambda_{V}(\theta-v)\right)\dfrac{\partial\tilde{\Phi}}{\partial_{v}}\\
&~~~~-\lambda_{V}(T-t)v+\dfrac{1}{2}(\sigma_{V}^{2}-\dfrac{q}{q-1}(1-\rho^{2})\sigma_{V}^{2})(T-t)^{2}+\lambda_{V}\theta(T-t).
\label{equatildephi}
\end{align}
and 
\begin{align}
\nonumber
&\dfrac{\partial\overline{f}}{\partial_{t}}+\dfrac{1}{2}(\Theta_{11}^{2}+\Theta_{12}^{2})\dfrac{\partial^{2}\overline{f}}{\partial^{2}m}+\dfrac{1}{2}\left(\Theta_{11}^{2}+\Theta_{12}^{2}-\dfrac{q}{q-1}\Theta_{12}^{2}\right)(\dfrac{\partial \tilde{f}}{\partial_{m}})^{2}\\\nonumber
&+\left[-(\rho\sigma_{V}\Theta_{11}+\sqrt{1-\rho^{2}}\sigma_{V}\Theta_{12}-\dfrac{q}{q-1}\sqrt{1-\rho^{2}}\sigma_{V}\Theta_{12})(\dfrac{\partial\tilde{\Phi}}{\partial_{v}}-(T-t))-\lambda_{\mu} m +\lambda_{\mu} \theta_{\mu}-qm\Theta_{11}\right]\dfrac{\partial\overline{f}}{\partial_{m}}\\
&+\dfrac{1}{2}q(q-1)m^{2}+qm\rho\sigma_{V}\dfrac{\partial\tilde{\Phi}}{\partial v}-qm\rho\sigma_{V}(T-t)
\label{equatildef}
\end{align}
Notice that the PDE for $\overline{f}$ depends on $\dfrac{\partial\tilde{\Phi}}{\partial_{v}}$, but we show below that the solution of the PDF satisfied by $\tilde{\Phi}$ is polynomial of degree $1$, then by deriving it, we obtain a term which does not depend on $v$. So we have a PDE for $\overline{f}$ which depends only on $m$, therefore an explicit form can be deduced.  

\vspace{3mm}
\noindent The solution of (\ref{equatildephi}) with the boundary condition $\tilde{\Phi}(T,v)=0$ is given by: 
\begin{align*}
\tilde{\Phi}(t,v)=\tilde{A}(t)v+\tilde{B}(t)
\end{align*}
where: $\tilde{A}_{t}$ and $\tilde{B}(y)$ are respectively solutions of the following differential equations:
\begin{align*}
&\tilde{A}^{'}(t)=\lambda_{V}\tilde{A}(t)-\lambda_{V}(T-t)),~~~~~~~~~\mbox{with}~~\tilde{A}(T)=0,\\
&\tilde{B}^{'}(t)=\dfrac{1}{2}(\sigma_{V}^{2}-\dfrac{q}{q-1}(1-\rho^{2})\sigma_{V}^{2})A^{2}(t)-\left((\sigma_{V}^{2}-\dfrac{q}{q-1}(1-\rho^{2})\sigma_{V}^{2})+\lambda_{V}\theta\right)A(t)\\
&~~~~~~~~~~+\dfrac{1}{2}(\sigma_{V}^{2}-\dfrac{q}{q-1}(1-\rho^{2})\sigma_{V}^{2})(T-t)^{2}+\lambda_{V}\theta(T-t)~~~~~\mbox{with}~~\tilde{B}(T)=0,
\end{align*} 

One easily verifies that $\tilde{A}(t)$, $\tilde{B}(t)$ given in proposition \ref{explicit-form-exmaple} are solutions of the above differential equations.

\noindent On the other hand, the solution of (\ref{equatildef}) with the boundary condition  $\overline{f}(T,m)=0$ is given by:   
\begin{align*}
\overline{f}(t,m)=\overline{A}(t)m^{2}+\overline{B}(t)m+\overline{C}(t)
\end{align*}
Where:
\begin{align*}
&\overline{A}^{'}(t)=-2\left(\Theta_{11}^{2}+\Theta_{12}^{2}-\dfrac{q}{q-1}\Theta_{12}^{2}\right)\overline{A}^{2}(t)+2(\lambda_{\mu}+q\Theta_{11})\overline{A}(t)-\dfrac{1}{2}q(q-1),\\
&\overline{B}^{'}(t)=\left[-2(\Theta_{11}^{2}+\Theta_{12}^{2}-\dfrac{q}{q-1}\Theta_{12}^{2})\overline{A}(t)+(\lambda_{\mu}+q\Theta_{11})\right]\overline{B}(t)-q\rho\sigma_{V}\tilde{A}(t)+q\rho\sigma_{V}(T-t)\\
&+2\left[(\rho\sigma_{V}\Theta_{11}+\sqrt{1-\rho^{2}}\sigma_{V}\Theta_{12}-\dfrac{q}{q-1}\sqrt{1-\rho^{2}}\sigma_{V}\Theta_{12})(\tilde{A}(t)-(T-t))-\lambda_{\mu} \theta_{\mu}\right]\overline{A}(t),\\
&\overline{C}^{'}(t)=-(\Theta_{11}^{2}+\Theta_{12}^{2})\overline{A}(t)-\dfrac{1}{2}\left(\Theta_{11}^{2}+\Theta_{12}^{2}-\dfrac{q}{q-1}\Theta_{12}^{2}\right)\overline{B}^{2}(t)\\
&+\left[(\rho\sigma_{V}\Theta_{11}+\sqrt{1-\rho^{2}}\sigma_{V}\Theta_{12}-\dfrac{q}{q-1}\sqrt{1-\rho^{2}}\sigma_{V}\Theta_{12})(\tilde{A}(t)-(T-t))-\lambda_{\mu} \theta_{\mu}\right]\overline{B}(t).
\end{align*}
 with terminal condition $\overline{A}(T)=\overline{B}(T)=\overline{C}(T)=0$.
The solution of the riccati equation satisfied by $\overline{A}(t)$ can be deduced from \cite{Rishel}. For $\overline{B}(t)$ and $\overline{C}(t)$, on easily verifies that their expressions given in proposition \ref{explicit-form-exmaple} are solutions of the above differential equations.

Finally, from proposition \ref{optimal-port-pow} and the above solutions of $\tilde{\Phi}$ and $\overline{f}$, we can deduce the explicit form of the value function given in proposition \ref{explicit-form-exmaple}.

\def\refname{References}
\bibliographystyle{plain}
\bibliography{ref}

\end{document}